\newcommand{\indep}{\rotatebox[origin=c]{90}{$\models$}}
\newtheorem{theorem}{Theorem}
\newtheorem{example}{Example}
\newtheorem{remark}{Remark}
\newtheorem{lemma}{Lemma}
\newtheorem{proposition}{Proposition}
\newtheorem{assumption}{Assumption}
\newtheorem{algo}{Algorithm}
\def\T{{ \mathrm{\scriptscriptstyle T} }}
\begin{document}
	\title{Covariate Adaptive Family-wise Error Rate Control for Genome-Wide Association Studies}

	\author{Huijuan Zhou\thanks{Institute of Statistics and Big Data, Renmin University of China, Beijing 100872, China.} \footnotemark[2],  Xianyang Zhang\thanks{Department of Statistics, Texas A\&M University, College Station, Texas 77843, U.S.A.} and Jun Chen \thanks{Division of Biomedical Statistics and Informatics, Mayo Clinic, Rochester, Minnesota 55905, U.S.A.}\\
	(huijuan@stat.tamu.edu, zhangxiany@stat.tamu.edu, Chen.Jun2@mayo.edu)}
\date{}

\maketitle

	\begin{abstract}
	The family-wise error rate (FWER) has been widely used in genome-wide association studies. With the increasing availability of functional genomics data, it is possible to increase the detection power by leveraging  these genomic functional annotations. Previous efforts to accommodate covariates in multiple testing focus on the false discovery rate control while covariate-adaptive FWER-controlling procedures remain under-developed. Here we propose a novel covariate-adaptive FWER-controlling procedure that incorporates external covariates which are potentially informative of either the statistical power or the prior null probability. An efficient algorithm is developed to implement the proposed method. We prove its asymptotic validity and obtain the rate of convergence through a perturbation-type argument. Our numerical studies show that the new procedure is more powerful than competing methods and maintains robustness across different settings. We apply the proposed approach to the UK Biobank data and analyze 27 traits with 9 million single-nucleotide polymorphisms tested for associations. Seventy-five genomic annotations are used as covariates. Our approach detects more genome-wide significant loci than other methods in 21 out of the 27 traits. 
	\\
 \textbf{Keywords:} EM algorithm; External covariates; Family-wise error rate; Multiple testing.
\end{abstract}

\section{Introduction}

Multiple testing arises when we face a large number of hypotheses and aim to discover signals while controlling specific error measures. The family-wise error rate (FWER) and the false discovery rate (FDR) are two commonly used error measures employed in a wide range of scientific studies. The FWER is the probability of making one false discovery, while the FDR is the expected proportion of false positives. The FWER provides stringent control of Type I errors and is preferable if (i) the overall conclusion from various individual inferences is likely to be erroneous when at least one of them is, or (ii) the existence of a single false claim would cause significant loss. In contrast, the FDR control procedure provides less stringent control of Type I errors, and it generally delivers higher power at the cost of an increased number of Type I errors. 

\begin{table}\centering
	\caption{Possible outcomes when testing multiple hypotheses}
	\begin{tabular}{|c|ccc|}
		\hline
		& Not rejected &Rejected  &Total  \\
		\hline
		True nulls&$U$  &$V$  &$m_0$  \\
		True alternatives&$T$  &$S$  &$m_1$  \\
		Total&  $m-R$&$R$  &$m$  \\
		\hline
	\end{tabular}
	\label{table-mt-out}
\end{table}

Consider the problem of simultaneously testing $m$ hypotheses. We reject the hypotheses whose p-values are less than a cutoff $t^*$. For many FWER and FDR controlling procedures, the $t^*$ that controls either one of them at level $\alpha$ is obtained by solving the following constraint optimization problem
\begin{align}\label{eq-main}
\text{maximize}_{t\in [0,1]}R(t)\quad \text{ s.t. } \quad M(t)\le\alpha,
\end{align}
where $R(t)$ denotes the total number of rejections given the threshold $t$ and $M(t)$ is a (conservative) estimate of the FWER or FDR. The most fundamental procedure for controlling the FWER is the Bonferroni method. It corresponds to the choice of $M(t)=mt$, which is the union bound on the FWER under the assumption that the null p-values are uniformly distributed (or super-uniform) on $[0,1]$. The classical Benjamini--Hochberg procedure for controlling the FDR can also be formulated using (\ref{eq-main}) with $M(t)=mt/R(t)$ being a conservative estimate of the FDR \citep{BH:1995}.

Formulation (\ref{eq-main}) assumes that the hypotheses for different features are exchangeable. However, in many scientific applications, there are informative covariates for each hypothesis that could reflect the group structure among the hypotheses or provide information on prior null probabilities. For example, in genome-wide association studies (GWAS), single-nucleotide polymorphisms (SNP) in active chromatin state are more likely to be significantly associated with the phenotype \citep{Consortium:2017}. In a meta-analysis where samples are pooled across studies, the loci-specific sample sizes and population-level frequency can be informative for association analyses \citep{Boca:2018}. For a fixed sample size, the power to detect significant associations is determined by the effect size, minor allele frequency, and levels of linkage disequilibrium at causal and non-causal variants \citep{Kichaev:2019}. It is thus promising to incorporate these covariates to improve the detection power in GWAS.

Multiple testing procedures that leverage different types of covariates (prior) information have received considerable attention in the literature especially for the false discovery rate control.
\citet{Genovese:2006} pioneered multiple testing procedures with prior information using weighted p-values and demonstrated that their weighted procedure controls the FWER and FDR while improving power. \citet{Roeder:2009} further explored their p-value weighting procedure by introducing an optimal weighting scheme for the FWER control.
Inspired by the above works, \citet{Hu:2010} developed a group Benjamini--Hochberg procedure by estimating the proportions of null hypotheses for each group separately. \citet{Bourgon:2010} developed a particular weighting method called independent filtering, which first filters hypotheses by a criterion independent of the p-values and only tests hypotheses passing the filter. \citet{Ignatiadis:2016} proposed the independent hypothesis weighting for multiple testing with covariate information. The idea is to bin the covariate into several groups and then apply the weighted Benjamini--Hochberg procedure with piecewise constant weights. A similar idea has been used in the structure-adaptive Benjamini--Hochberg algorithm introduced in \citet{Li:2019}, where the weight assigned for each p-value is  the reciprocal of the estimated null probability of the corresponding hypothesis. The null probabilities were estimated by utilizing censored p-values and structural information believed to be present among the hypotheses. \citet{Boca:2018} employed a similar approach by using the censored p-values and a regression approach to estimate null probabilities based on informative covariates. The above procedures
can all be viewed to some extent as different variants of the weighted Benjamini--Hochberg or Bonferroni procedure. On the other hand, there are FDR-controlling procedures designed to find an optimal decision threshold by taking into account the p-value distribution under the alternatives, mostly based on the local FDR framework.  For example, \citet{Sun:2015} developed an local-FDR-based procedure to incorporate spatial information. \citet{Scott:2015} and \citet{Tansey:2018} proposed EM-type algorithms to estimate the local FDR by taking into account covariate and spatial information, respectively. \citet{Lei:2018} proposed the AdaPT procedure, which iteratively estimates the p-value thresholds based on a two-group mixture model using the partially masked p-values together with the covariates. \citet{ZhangX:2020} proposed a more computationally efficient procedure to assign each p-value a covariate-adaptive threshold. Another related method  AdaFDR addressed the local ``bump" and global ``slope" structures delivered through the covariates in what they called ``enrichment" pattern by modeling a mixture of the generalized linear model and Gaussian mixture for a threshold function \citep{ZhangM:2019}.  Other relevant works include \citet{Ferkingstad:2008}, \citet{Zablocki:2014}, \citet{Dobriban:2015},  \citet{Wen:2016},  \citet{Stephens:2017}, \citet{Xiao:2017}, \citet{Lei:2017} and \citet{Li:2017}.

Recent developments on covariate-adaptive multiple testing focus on the FDR control, while methods for the FWER control lag behind. Existing FWER-controlling methods can all be thought to be variants of the weighted Bonferroni method, with the weights reflecting only the prior null probabilities. It has been demonstrated clearly in the FDR literature that incorporating the alternative p-value distribution leads to the optimal rejection region in theory and more power in practice, see, e.g., \citet{Efron:2010}. Given the popularity of the FWER control in GWAS, we introduce a new covariate-adaptive FWER-controlling procedure, which  takes into account the prior null probabilities as well as the alternative p-value distribution, making it distinct from the existing FWER-controlling procedures. To illustrate the idea, suppose we are given a set of p-values $p_i$ together with the external covariates $x_i$. Our method is motivated by the two-group mixture model
\begin{align*}
p_i\mid x_i\sim \pi(x_i)f_0(\cdot)+\{1-\pi(x_i)\}f_1(\cdot)
\end{align*}
with $\pi(x_i)$ and $f_1(\cdot)$ reflecting the heterogeneity of the probabilities of being null and the distributional characteristics of signals. We construct an objective function to control a conservative estimate of FWER while maximizing the expected number of true rejections. Specifically, we formulate the following constrained optimization problem
\begin{align*}
\begin{split}
&\max_{t_i}\sum_{i=1}^{m}\{1-\pi(x_i)\}F_1(t_i)~~\text{s.t.}~~\sum_{i=1}^m \pi(x_i) F_0(t_i)\leq \alpha,
\end{split}
\end{align*}
where $F_0$ and $F_1$ are the cumulative distribution functions of $f_0$ and $f_1$ respectively.
To establish the asymptotic FWER control, and the rate of convergence,  new theoretical developments are needed. Existing theoretical analysis techniques developed for the FDR-controlling procedures are not applicable to the FWER-controlling procedure, since we aim to control a sum instead of a proportion encountered in the FDR control. The arguments based on the Rademacher complexity in \citet{Li:2019} do not provide a meaningful bound on the FWER. Employing a perturbation-type argument, we develop a more delicate analysis for each of the summands, which leads to a useful bound on the sum and thus the FWER. The main contributions of the paper are two-fold:
\begin{itemize}
	\item We propose a powerful covariate-adaptive FWER-controlling procedure that can incorporate multi-dimensional covariates and exploit the  information from both the null probability and the alternative distribution. We prove the asymptotic FWER control of the proposed procedure when the pairs of covariate and p-value across different hypotheses are independent and derive the exact rate of convergence based on a novel perturbation technique. We emphasize that our proofs do not rely on the correct specification of the two-group mixture model.
	
	\item We develop an efficient algorithm to implement the proposed method and demonstrate its usefulness in handling big datasets arising from GWAS. In the application to the GWAS of about 9 million SNPs and 75 covariates, we could complete the analysis in hours. The proposed method is implemented in the R package \texttt{CAMT} and available at \texttt{https://github.com/jchen1981/CAMT}.
	
\end{itemize}

Numerical studies show that our procedure controls the FWER in the strong sense and is more powerful than the competing methods. It maintains the robustness across different settings, including scenarios of model misspecification and correlated hypotheses. Even when the covariates are not informative, our procedure is as powerful as the traditional methods.

\section{Methodology}

\subsection{The setup}
Denote by $\|v\|$ the Euclidean norm of a vector $v$. With some abuse of notation, let $\|A\|$ be the spectral norm of a matrix $A$. For two symmetric matrices $A$ and $B$, $A\preceq B$ means that $B-A$ is positive semidefinite. For $a,b\in\mathbb{R}$, write $a \vee b=\max(a,b)$ and $a\wedge b=\min(a,b)$. Throughout the paper, we use $c$ to denote a positive constant which can be different from line to line.

We consider the problem of covariate-adaptive multiple testing to control the FWER. Suppose we are given $m$ hypotheses, among which $m_0$ are true nulls.
For each hypothesis, we observe a p-value $p_i$ as well as a covariate $x_i$ lying in some space $\mathcal{X}\subseteq \mathbb{R}^d$ which encodes potentially useful external information concerning the presence of a signal. Let $H_i=0$ if the $i$th null hypothesis is true and $H_i=1$ otherwise. Denote by $\mathcal{M}_0$ the set of all true null hypotheses. We transform the $i$th p-value based on a map $T_i:[0,1]\rightarrow \mathbb{R}_+$ that will be estimated from the covariates and p-values. The larger $T_i(p_i)$ is, the more likely the $i$th hypothesis is from the alternative. The motivation for such a transformation will be discussed in the next section. In a nutshell, the optimal $T_i$ is the likelihood ratio between the $i$th p-value distributions under the alternative and the null.

\subsection{Optimal rejection rule}\label{sec-opti_rej}
Let $f_0(\cdot)$ be the null p-value distribution and $f_1(\cdot)$ denote the alternative p-value distribution. Denote by $F_0(\cdot)$ and $F_1(\cdot)$
the corresponding cumulative distribution functions. Suppose we reject the $i$th hypothesis if $p_i\leq t_i$ for some cutoff $t_i$. Before presenting the procedure that inspires the choice of $T_i$, it is worth clarifying the definition of the FWER from both the frequentist and Bayesian perspectives. The key difference between these two  viewpoints lies on whether we treat the indicators $\{H_i\}$ as fixed or random quantities. From the frequentist perspective, the indicators $\{H_i\}$ are deterministic and we have by the union bound,
\begin{align*}
\text{FWER}_{\text{Freq}}=&\mathbb{P}(p_i\leq t_i\text{ for some }i\in\mathcal{M}_0)\leq \sum_{i=1}^m\mathbb{I}(H_i=0)F_0(t_i).
\end{align*}
From a Bayesian's point of view, it is natural to posit the two-group mixture model
$$p_i\mid x_i \sim \pi(x_i)f_0(\cdot)+\{1-\pi(x_i)\}f_1(\cdot).$$
In this case, conditional on $x_i$, $H_i$ is assumed to be a Bernoulli random variable with the success probability $1-\pi(x_i)$. The Bayesian FWER can be bounded as follows
\begin{align}\label{eq-fwer-bay}
\text{FWER}_{\text{Bay}}=&\mathbb{P}(p_i\leq t_i\text{ for some }i\in\mathcal{M}_0)\leq  \sum^{m}_{i=1}\mathbb{P}(p_i\leq t_i,H_i=0)=\sum^{m}_{i=1}\mathbb{E}\left\{\pi(x_i)\right\}F_0(t_i).
\end{align}
To motivate our procedure, it is more convenient to adopt the Bayesian viewpoint. But we emphasize that the proposed procedure indeed provides asymptotic FWER control in the usual frequentist sense, as shown in Section \ref{sec-theory}.

We aim to find $\{t_i\}$ to maximize the expected number of true rejections given by
\begin{align*}
\mathbb{E}\left\{\sum^{m}_{i=1}\mathbb{I}(H_i=1,p_i\leq t_i)\right\}
=\sum^{m}_{i=1}\mathbb{E}[\{1-\pi(x_i)\}]F_1(t_i)
\end{align*}
while controlling the FWER at a desired level $\alpha.$ To achieve both goals, we formulate the following constraint optimization problem
\begin{equation}\label{pro}
\begin{split}
&\max_{t_i}\sum_{i=1}^{m}\{1-\pi(x_i)\}F_1(t_i)~~\text{s.t.}~\sum_{i=1}^m \pi(x_i) F_0(t_i)\leq \alpha,
\end{split}
\end{equation}
where $\sum^{m}_{i=1}\pi(x_i)F_0(t_i)$ serves as a conservative estimate of the Bayesian FWER
based on the derivations in (\ref{eq-fwer-bay}). The Lagrangian for problem (\ref{pro}) is 
\begin{align*}
L(t_1,\dots,t_m;\lambda)=\sum_{i=1}^{m}\{1-\pi(x_i)\}F_1(t_i)-\lambda\left\{\sum_{i=1}^m \pi(x_i) F_0(t_i)-\alpha\right\}
\end{align*}
with $\lambda>0$. Differentiating the Lagrangian with respect to $t_i$ and setting the derivative to be zero (at the optimal value $t_i^*$), we obtain
$$\frac{\{1-\pi(x_i)\}f_1(t_i^*)}{\pi(x_i)f_0(t_i^*)}=\lambda.$$
Motivated by the above observation, we set
$$T_i(p)=\frac{\{1-\pi(x_i)\}f_1(p)}{\pi(x_i)f_0(p)}.$$
We note that $T_i(p)$ is related to the local FDR as follows
\begin{align*}
\frac{1}{T_i(p)+1}=\frac{\pi(x_i)f_0(p)}{\pi(x_i)f_0(p)+\{1-\pi(x_i)\}f_1(p)}=\mathbb{P}(H_i=0\mid p,x_i).
\end{align*}

In the following discussions, we suppose $f_0$ is the uniform distribution on $[0,1]$ and $f_1$ is strictly decreasing, which is a common assumption in the literature, e.g., \citet{Sun:2007} and \citet{Cao:2013}. As $T_i$ is strictly decreasing in this case, we may reduce our attention to the rejection rule $p_i\leq t_i^*$ as $T_i(p_i)\geq T_i(t_i^*):=\tau^*$. The cutoff can then be expressed as
$$t_i^*=f_1^{-1}\left\{\frac{\pi(x_i)\tau^*}{1-\pi(x_i)}\right\},$$
where $f_1^{-1}$ denotes the inversion of $f_1$. Notice that the expected number of true rejections and the conservative estimate of the Bayesian FWER in (\ref{eq-fwer-bay}) are both monotonically decreasing in $\tau$. Therefore, the solution to (\ref{pro}) satisfies that
\begin{align}\label{eq-or}
\tau^*=\min\left\{\tau>0: \sum_{i=1}^m \pi(x_i)f_1^{-1}\left\{\frac{\pi(x_i)\tau}{1-\pi(x_i)}\right\}\leq \alpha\right\}.
\end{align}
In practice, both $\pi$ and $f_1$ are unknown and need to be replaced by estimates from the data. We provide detailed discussions about estimating the unknowns in the next subsection.

\subsection{A feasible procedure}\label{feasible}
We describe a feasible procedure based on suitable estimates of $\pi$ and $f_1$. To avoid overfitting and facilitate the theoretical analysis, we adopt the idea of censoring p-values as in \citet{Storey:2002}, \citet{Li:2019} and \citet{Boca:2018}. Under the two-group mixture model, for a prespecified $0<\gamma<1$, we have
\begin{align*}
\mathbb{I}(p_i>\gamma)\mid x_i\sim \pi(x_i)\text{Bern}(1-\gamma)+\{1-\pi(x_i)\}\text{Bern}\{1-F_1(\gamma)\},
\end{align*}
where $\text{Bern}(1-\gamma)$ denotes the Bernoulli distribution with success probability $1-\gamma$. We model $f_1$ using the beta distribution $f_1(p)=kp^{k-1}$ for $0<k<1$ as it provides reasonably well approximation to a wide range of alternative distributions as demonstrated in \citet{ZhangX:2020}. Here we treat $k$ as fixed and will discuss the choice of data-driven $k$ in Section \ref{sec-EM}.

Before presenting our method, it is worth clarifying the rationale behind our procedure. Notice that $\pi(x_i)$ appears both inside and outside the function $f_1^{-1}$ in (\ref{eq-or}). To achieve asymptotic FWER control, we need a conservative estimate for $\pi(x_i)$ outside the function $f_1^{-1}$, while for the one inside $f_1^{-1}$, we require it to depend on the covariates to reflect the heterogeneity among signals while retaining certain form of stability (see more details in Section \ref{sec-theory}). The reason will become clear by inspecting the proof of Proposition \ref{pro_jm1_jm2}. We first observe that
\begin{align*}
\mathbb{E}\left\{\frac{\mathbb{I}(p_i>\gamma)}{1-\gamma}\mid x_i\right\}=\pi(x_i)+\{1-\pi(x_i)\}\frac{1-F_1(\gamma)}{1-\gamma}\geq \pi(x_i).
\end{align*}
Therefore, we suggest replacing the $\pi(x_i)$ outside $f_1^{-1}$ by $\mathbb{I}(p_i>\gamma)/(1-\gamma)$. To estimate $\pi(x_i)$ inside $f_1^{-1}$, we consider the logistic model
\begin{align*}
\log\left\{\frac{\pi(x_i)}{1-\pi(x_i)}\right\}=x_i^\T \beta.
\end{align*}
The quasi log-likelihood function is then 
\begin{align*}
L_m(\beta)=\sum^{m}_{i=1}\log\left[\pi(x_i)(1-\gamma)^{y_i}\gamma^{1-y_i}+\{1-\pi(x_i)\}(1-\gamma^{k})^{y_i}\gamma^{k(1-y_i)}\right],
\end{align*}
where $\pi(x_i)=(1+e^{-x_i^\T \beta})^{-1}$ and $y_i=\mathbb{I}(p_i>\gamma)$. Define the corresponding quasi-maximum likelihood estimator (MLE) as
\begin{align}\label{quasi-mle-1}
\hat{\beta}=\arg\max_{\beta\in\mathcal{B}}L_m(\beta),
\end{align}
where $\mathcal{B}$ is some compact subset of $\mathbb{R}^d$. Let
\begin{align*}
\hat{\pi}(x_i)=\left\{\tilde{\pi}(x_i) \vee\varepsilon_1\right\}\wedge \varepsilon_2,
\end{align*}
where $\tilde{\pi}(x_i)=(1+e^{-x_i^\T \hat{\beta}})^{-1}$ and $0 < \varepsilon_1<\varepsilon_2<1$. We have used winsorization to prevent $\hat{\pi}(x_i)$ from being too close to zero and one. Further denote
\begin{align*}
\hat{\tau}=\min\left\{\tau\geq \varepsilon: \sum_{i=1}^m \frac{\mathbb{I}(p_i>\gamma)}{1-\gamma}f_1^{-1}\left\{\frac{\hat{\pi}(x_i)\tau}{1-\hat{\pi}(x_i)}\right\}\leq \alpha\right\}
\end{align*}
for some $\varepsilon>0$. It is straightforward to show that $\hat\tau=\tilde\tau\vee\varepsilon$ with
\begin{align*}
\tilde{\tau}=
k\left[\sum_{i=1}^{m}\frac{\mathbb{I}(p_i>\gamma)}{\alpha(1-\gamma)}\left\{\frac{1-\hat{\pi}(x_i)}{\hat{\pi}(x_i)}\right\}^{1/(1-k)}\right]^{1-k}.
\end{align*}
Finally, we set
$$\hat{t}_i=\left[\frac{\{1-\hat{\pi}(x_i)\}k}{\hat{\pi}(x_i)\hat{\tau}}\right]^{1/(1-k)},$$
and reject the $i$th hypothesis if
$$p_i \leq \hat{t}_i\wedge \gamma.$$

\begin{remark}[Connection to the weighted Bonferroni procedure]
	{\rm Suppose $\varepsilon_1=\varepsilon=0$ and $\varepsilon_2=1.$ Then we have
		\begin{align*}
		\hat{t}_i=\left[\frac{\{1-\tilde{\pi}(x_i)\}k}{\tilde{\pi}(x_i)\tilde{\tau}}\right]^{1/(1-k)}=\alpha w_i,
		\end{align*}
		where
		\begin{align*}
		w_i=e^{-\frac{x_i^\T\hat\beta}{1-k}}\left\{\sum_{i=1}^{m}\frac{\mathbb{I}(p_i>\gamma)}{1-\gamma}e^{-\frac{x_i^\T\hat\beta}{1-k}}\right\}^{-1}.
		\end{align*}
		We reject the $i$th hypothesis if
		$$p_i\le \alpha w_i\wedge \gamma.$$
		In this sense, our procedure can be viewed as a particular type of weighted Bonferroni procedure. However, different from existing methods, our weight incorporates the information regarding the alternative p-value distribution, which often leads to more rejections and thus higher power, as observed in our numerical studies.
	}
\end{remark}

	\subsection{EM algorithm}\label{sec-EM}
Algorithm 1 below provides the details of our iterative algorithm to solve problem (\ref{quasi-mle-1}).
\begin{algo}
	EM algorithm for problem (\ref{quasi-mle-1}).
	\begin{tabbing}
		\qquad Input: $\{x_i,y_i\}_{i=1}^m, \gamma,k$; initializer: $\beta^{(0)}$.\\
		\qquad Output: $\hat\beta$.\\
		\qquad Notation: $b_{0i}=(1-\gamma)^{y_i}\gamma^{1-y_i}$; $ b_{1i}=(1-\gamma^{k})^{y_i}\gamma^{k(1-y_i)}$; tol: tolerance level.\\
		\qquad Iteration:\\
		\qquad\qquad E step:\\
		\qquad\qquad\qquad$Q^{(t)}_i=\mathbb{E}\{\mathbb{I}(H_i=0)\mid y_i,x_i,\beta^{(t)}\}=\pi^{(t)}_ib_{0i}/\{\pi^{(t)}_ib_{0i}+(1-\pi^{(t)}_i)b_{1i}\}$,\\
		\qquad\qquad\qquad where $\pi^{(t)}_i=(1+e^{-x_i^\T \beta^{(t)}})^{-1}$. \\
		\qquad\qquad M step:\\
		\qquad\qquad\qquad$
		\beta^{(t+1)}=\arg\max_{\beta\in \mathcal{B}}\sum_{i=1}^m\{Q_i^{(t)}\log(\pi_i)+(1-Q_i^{(t)})\log(1-\pi_i)\}$,\\
		\qquad\qquad\qquad where $\pi_i=(1+e^{-x_i^\T\beta})^{-1}$.\\
		\qquad Until: $|L_m(\beta^{(t+1)})-L_m(\beta^{(t)})|/|L_m(\beta^{(t)})|<\text{tol}$.\\
		\qquad Return: $\beta^{(t+1)}$ after a sufficient number of iterations.
	\end{tabbing}
\end{algo}

The theory in Section \ref{sec-theory} below shows that our procedure controls FWER asymptotically for any fixed $k$. However, a suitable choice of $k$, which produces a beta-distribution closer to the true $f_1$ (especially on the small p-value region), will improve the statistical power. In practice, an EM algorithm can be used to estimate the $k$ and $\beta$ jointly. To be precise, we define the quasi log-likelihood function,
\begin{align*}
L_m(\beta,k)=\sum^{m}_{i=1}\log\left[\pi(x_i)(1-\gamma)^{y_i}\gamma^{1-y_i}+\{1-\pi(x_i)\}(1-\gamma^{k})^{y_i}\gamma^{k(1-y_i)}\right].
\end{align*}
Then we estimate $(\beta,k)$ jointly by the quasi-MLE defined as
\begin{align}\label{quasi-mle-2}
(\hat\beta,\hat k)=\arg\max_{\beta\in\mathcal{B},k\in(0,1)}L_{m}(\beta,k).
\end{align}
We summarize the algorithm for solving problem (\ref{quasi-mle-2}) in Algorithm 2.
\begin{algo}
	EM algorithm for problem (\ref{quasi-mle-2}).
	\begin{tabbing}
		\qquad Input: $\{x_i,y_i\}_{i=1}^m, \gamma$; initializer: $\beta^{(0)},k^{(0)}$.\\
		\qquad Output: $\hat\beta,\hat k$.\\
		\qquad Notation: $b_{0i}=(1-\gamma)^{y_i}\gamma^{1-y_i}$; tol: tolerance level.\\
		\qquad Iteration:\\
		\qquad\qquad E step:\\
		\qquad\qquad\qquad$Q^{(t)}_i=\mathbb{E}\{\mathbb{I}(H_i=0)\mid y_i,x_i,\beta^{(t)},k^{(t)}\}=\pi^{(t)}_ib_{0i}/\{\pi^{(t)}_ib_{0i}+(1-\pi^{(t)}_i)b_{1i}^{(t)}\},$\\
		\qquad\qquad\qquad where $\pi^{(t)}_i=(1+e^{-x_i^\T \beta^{(t)}})^{-1}$, $b_{1i}^{(t)}=(1-\gamma^{k^{(t)}})^{y_i}\gamma^{k^{(t)}(1-y_i)}$.\\
		\qquad\qquad M step:\\
		\qquad\qquad\qquad$
		\beta^{(t+1)}=\arg\max_{\beta\in \mathcal{B}}\sum_{i=1}^m \{Q_i^{(t)}\log(\pi_i)+(1-Q_i^{(t)})\log(1-\pi_i)\},
		$\\
		\qquad\qquad\qquad where $\pi_i=(1+e^{-x_i^\T\beta})^{-1}$; \\
		\qquad\qquad\qquad$
		k^{(t+1)}=\arg\max_{k\in (0,1)}\sum_{i=1}^m(1-Q_i^{(t)})\{y_i\log(1-\gamma^{k})+k(1-y_i)\log(\gamma)\}.
		$\\
		\qquad Until: $|L_m(\beta^{(t+1)},k^{(t+1)})-L_m(\beta^{(t)},k^{(t)})|/|L_m(\beta^{(t)},k^{(t)})|<\text{tol}$.\\
		\qquad Return: $\beta^{(t+1)},k^{(t+1)}$ after a sufficient number of iterations.
	\end{tabbing}
\end{algo}

	\section{Asymptotic FWER control}\label{sec-theory}
In this section, we prove the asymptotic FWER control for the procedure proposed in Section \ref{feasible}. Throughout this section, we shall adopt the frequentist viewpoint, i.e., we view the indicators $\{H_i\}$ as a deterministic sequence. 

Let $p_{j\rightarrow a}=(p_1,\dots,p_{j-1},a,p_{j+1},\dots,p_m)^\T\in\mathbb{R}^m$ for $a=0,1$. We define $\hat\beta(p_{j\to a})$ and $\hat t_i(p_{j\to a})$ by setting the $j$th p-value to be equal to $a$ when estimating the corresponding quantities. We make the following assumption.
\begin{assumption}\label{ass_unif_p}
	Denote by $F_{0i}$ the cumulative distribution function for $p_i$ with $H_i=0.$ Suppose that $\{p_i\}_{i\in\mathcal{M}_0}$ are super-uniform, i.e., $F_{0i}(t)\le t$ for all $t\in [0,1]$ and $i\in\mathcal{M}_0$.
\end{assumption}
Assumption \ref{ass_unif_p} is standard in the literature, see e.g., \citet{BY:2001}.
\begin{proposition}\label{pro_jm1_jm2}
	If $\{p_i\}\in\mathcal{M}_0$ are mutually independent and are independent with the non-null p-values, then under Assumption \ref{ass_unif_p}, we have
	\begin{align*}
	\text{FWER}\le J_m + \alpha\le c(J_{m,1}+J_{m,2})+\alpha,
	\end{align*}
	where
	\begin{align*}
	J_m=&\sum_{j=1}^m\mathbb{E}\left\{\left|\hat{t}_j(p_{j\rightarrow 0})
	-\hat{t}_j(p_{j\rightarrow 1})\right|\right\} \nonumber,\\
	J_{m,1}=&\sum_{j=1}^m\mathbb{E}\left[\frac{|x_j^\T\{\hat\beta(p_{j\rightarrow 0})-\hat\beta(p_{j\rightarrow 1})\}|}{\left\{c\alpha^{-1}\sum_{i\neq j}\mathbb{I}(p_i>\gamma)\right\}\vee\varepsilon^{1/(1-k)}}\right],\\
	J_{m,2}=&\sum_{j=1}^m\mathbb{E}\left(\frac{\alpha^{-1}\sum_{i\neq j}\mathbb{I}(p_i>\gamma)|x_i^\T\{\hat\beta(p_{j\rightarrow 0})-\hat\beta(p_{j\rightarrow 1})\}|+\alpha^{-1}}{\left[\left\{c\alpha^{-1}\sum_{i\neq j}\mathbb{I}(p_i>\gamma)\right\}\vee\varepsilon^{1/(1-k)}\right]^2}\right),
	\end{align*}
	and $\varepsilon$ has been defined in Section \ref{feasible}.
\end{proposition}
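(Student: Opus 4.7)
The plan is a leave-one-out perturbation argument built around the observation that the whole data-driven pipeline---$\hat\beta$ maximizing $L_m$, the winsorized $\hat\pi_i$, the normalizer $\hat\tau$, and hence every $\hat{t}_j$---depends on the p-values only through the binary variables $y_i=\mathbb{I}(p_i>\gamma)$. Consequently $\hat{t}_j(p)=\hat{t}_j(p_{j\rightarrow 0})$ on $\{p_j\le\gamma\}$ and $\hat{t}_j(p)=\hat{t}_j(p_{j\rightarrow 1})$ on $\{p_j>\gamma\}$, so the ``perturbation'' of $\hat{t}_j$ in $p_j$ is really a two-valued object rather than a derivative.

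First, to establish $\text{FWER}\le J_m+\alpha$, I would start with the union bound $\text{FWER}\le\sum_{j\in\mathcal{M}_0}\mathbb{P}(p_j\le \hat{t}_j\wedge\gamma)$. The rejection event forces $p_j\le\gamma$, hence $\hat{t}_j=\hat{t}_j(p_{j\rightarrow 0})$, and by independence of $p_j$ from $\hat{t}_j(p_{j\rightarrow 0})$ together with Assumption \ref{ass_unif_p} this probability is at most $\mathbb{E}\{\hat{t}_j(p_{j\rightarrow 0})\}$. Splitting $\hat{t}_j(p_{j\rightarrow 0})\le \hat{t}_j(p_{j\rightarrow 1})+|\hat{t}_j(p_{j\rightarrow 0})-\hat{t}_j(p_{j\rightarrow 1})|$ yields the bound $\sum_{j\in\mathcal{M}_0}\mathbb{E}\{\hat{t}_j(p_{j\rightarrow 1})\}+J_m$. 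To see that the first sum is at most $\alpha$, I use the defining property of $\hat\tau$, namely $\sum_i\mathbb{I}(p_i>\gamma)\hat{t}_i/(1-\gamma)\le\alpha$. On $\{p_j>\gamma\}$ the key observation gives $\hat{t}_j=\hat{t}_j(p_{j\rightarrow 1})$, so $\mathbb{I}(p_j>\gamma)\hat{t}_j=\mathbb{I}(p_j>\gamma)\hat{t}_j(p_{j\rightarrow 1})$. Taking expectations, using that $\hat{t}_j(p_{j\rightarrow 1})$ is independent of $p_j$ for $j\in\mathcal{M}_0$ and that $\mathbb{P}(p_j>\gamma)\ge 1-\gamma$ by super-uniformity, I get $\sum_{j\in\mathcal{M}_0}\mathbb{E}\{\hat{t}_j(p_{j\rightarrow 1})\}\le\mathbb{E}\{\sum_j\mathbb{I}(p_j>\gamma)\hat{t}_j/(1-\gamma)\}\le\alpha$.

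Next, for $J_m\le c(J_{m,1}+J_{m,2})$, I would use the explicit form $\hat{t}_j=r_j/A$, where $r_j=\{(1-\hat\pi_j)/\hat\pi_j\}^{1/(1-k)}$ and $A=\max\{(\varepsilon/k)^{1/(1-k)},\, S/[\alpha(1-\gamma)]\}$ with $S=\sum_i\mathbb{I}(p_i>\gamma)r_i$, which follows from the closed-form expression for $\tilde\tau$ and $\hat\tau=\tilde\tau\vee\varepsilon$. Decomposing $\hat{t}_j(p_{j\rightarrow 0})-\hat{t}_j(p_{j\rightarrow 1})=[r_j(0)-r_j(1)]/A(0)+r_j(1)[A(1)-A(0)]/[A(0)A(1)]$, I control the first term via the Lipschitz estimate $|r_j(0)-r_j(1)|\le c|x_j^\T\Delta_j|$ with $\Delta_j=\hat\beta(p_{j\rightarrow 0})-\hat\beta(p_{j\rightarrow 1})$, valid because the winsorization to $[\varepsilon_1,\varepsilon_2]$ makes $r_j$ a smooth bounded function of the linear predictor $x_j^\T\beta$. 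For the second term, $|A(1)-A(0)|\le |S(0)-S(1)|/[\alpha(1-\gamma)]$ and $|S(0)-S(1)|\le r_j(1)+c\sum_{i\neq j}\mathbb{I}(p_i>\gamma)|x_i^\T\Delta_j|$, where the isolated $r_j(1)\le c$ comes from the discrete jump of $y_j$ between $0$ and $1$ and is the source of the bare $\alpha^{-1}$ summand in $J_{m,2}$. Winsorization also yields the uniform lower bound $A(0)\wedge A(1)\ge c[\alpha^{-1}\sum_{i\ne j}\mathbb{I}(p_i>\gamma)\vee\varepsilon^{1/(1-k)}]$, which is precisely the denominator $D_j$ appearing in $J_{m,1}$ and $J_{m,2}$. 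Assembling the pieces, taking absolute values and expectations, and summing over $j$ delivers the claim.

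The main obstacle is essentially bookkeeping: keeping the discrete jump of the $j$th indicator in $S$ separated from the smooth sensitivity of $\hat\beta$ across all $m$ observations, and verifying that the thresholds $\varepsilon_1,\varepsilon_2,\varepsilon$ translate into uniform Lipschitz constants for $r_j$ and uniform lower bounds for $A$ that may be absorbed into a single constant $c$. The one conceptual step worth highlighting is the identity $\mathbb{I}(p_j>\gamma)\hat{t}_j=\mathbb{I}(p_j>\gamma)\hat{t}_j(p_{j\rightarrow 1})$, which is what lets the $\alpha$ part of the bound fall out of the procedure's FWER constraint without any further estimation; everything else is perturbation calculus.
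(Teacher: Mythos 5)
Your proposal is correct and follows essentially the same route as the paper's proof: the first inequality is the paper's argument (union bound, conditioning on $p_{-j}$, super-uniformity, and the constraint $\sum_i \mathbb{I}(p_i>\gamma)\hat t_i/(1-\gamma)\le\alpha$) merely reorganized, with your identity $\mathbb{I}(p_j>\gamma)\hat t_j=\mathbb{I}(p_j>\gamma)\hat t_j(p_{j\to1})$ playing the role the paper assigns to its add-and-subtract of $\mathbb{I}(p_j>\gamma)\hat t_j/(1-\gamma)$. The second inequality is likewise the paper's decomposition into a $\hat\pi$-perturbation term and a $\hat\tau$-perturbation term, written in the equivalent parametrization $\hat t_j=r_j/A$, with the same Lipschitz bounds from winsorization and the same isolated $r_j(1)\le c$ jump producing the bare $\alpha^{-1}$ in $J_{m,2}$.
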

The above proposition shows that the validity of the asymptotic FWER control relies on the stability of $\hat t_j$, i.e., the smallness of $|\hat t_j(p_{j\to 0})-\hat t_j(p_{j\to 1})|$ which in turn depends on $\|\hat \beta(p_{j\to 0})-\hat\beta(p_{j\to 1})\|$. Set $z_i=(x_i,y_i)$, where $y_i=\mathbb{I}\{p_i>\gamma\}$. Define
\begin{align*}
l(\beta;z_i)=\log\left\{\frac{1}{1+e^{-x_i^\T\beta}}(1-\gamma)^{y_i}\gamma^{1-y_i}+\frac{e^{-x_i^\T\beta}}{1+e^{-x_i^\T\beta}}(1-\gamma^k)^{y_i}\gamma^{k(1-y_i)}\right\},
\end{align*}
and $\mathbb{P}_ml(\beta)=m^{-1}\sum_{i=1}^ml(\beta;z_i)$. To ensure $\|\hat \beta(p_{j\to 0})-\hat\beta(p_{j\to 1})\|$ to be small, we impose the following assumptions.
\begin{assumption}\label{ass_indp_z}
	Suppose $z_i\in\mathbb{R}^{d+1}$ are independent and possibly non-identically distributed.
\end{assumption}
Assumption \ref{ass_indp_z} is not uncommon in the multiple testing literature, see e.g., \citet{Ignatiadis:2016}. We suspect that the results still hold when $z_i$ is a sequence of weakly dependent variables although a rigorous proof is left for future investigation.
\begin{assumption}\label{ass_l_beta}
	There exists a continuous function of $\beta$, denoted by $\mathcal{L}(\beta)$, such that
	\begin{align*}
	\lim_{m\rightarrow+\infty}\sup_{\beta\in\mathcal{B}}\left|\mathbb{E}\left\{\mathbb{P}_ml(\beta)\right\}-\mathcal{L}(\beta)\right|=0.
	\end{align*}
\end{assumption}
\begin{assumption}\label{ass_unique_beta}
	Suppose $\mathcal{L}(\beta)$ has a unique global maximizer $\beta^*$ over the compact space $\mathcal{B}$.
\end{assumption}
Assumption 4 is needed in our perturbation argument. If the maximizer is not unique, there seems no guarantee that the difference between $\hat\beta(p_{j\to0})$ and $\hat\beta(p_{j\to1})$ will be small.
\begin{proposition}\label{pro_beta01} Suppose Assumptions \ref{ass_indp_z}--\ref{ass_unique_beta} are satisfied and further assume $\sup_{1\le i\le m}\mathbb{E}\left(\|x_i\|^{8}\right)<\infty$. Then we have
	\begin{align*}
	\hat\beta(p_{j\rightarrow 0})-\hat\beta(p_{j\rightarrow 1})=(S_j^*+\Delta_j)^{-1}(U_j^*+\Pi_j),
	\end{align*}
	where $S_j^*$ and $U_j^*$ are the leading terms such that $S_j^{*}=-\sum_{i\neq j}\nabla^2l(\beta^*;z_i)$ and $\sup_{1\leq j\leq m}\|U_j^{*}\|=O_\mathbb{P}(1)$, and $\Delta_j$ and $\Pi_j$ are the remainder terms satisfying that $$\sup_{1\leq j\leq m}\|\Delta_j\|=o_{\mathbb{P}}(m)~\text{and}~\sup_{1\leq j\leq m}\|\Pi_j\|=o_{\mathbb{P}}(1).$$
\end{proposition}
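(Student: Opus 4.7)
The strategy is a perturbation analysis of the quasi-MLE about its population limit $\beta^*$. Both $\hat\beta(p_{j\to 0})$ and $\hat\beta(p_{j\to 1})$ satisfy first-order conditions that differ only in the single summand corresponding to index $j$, so I would subtract the two score equations, Taylor-expand the common summands around $\beta^*$, and solve the resulting linear system for $\hat\beta(p_{j\to 0})-\hat\beta(p_{j\to 1})$. The leading Hessian operator is then $S_j^*$, the leading contribution from the $j$th observation is $U_j^*$, and the remainders $\Delta_j,\Pi_j$ collect the Taylor residuals, which must be bounded uniformly in $j$.

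Before expanding I would establish uniform consistency $\sup_{1\le j\le m,\,a\in\{0,1\}}\|\hat\beta(p_{j\to a})-\beta^*\|=o_{\mathbb{P}}(1)$. Compactness of $\mathcal{B}$, Assumption \ref{ass_indp_z}, Assumption \ref{ass_l_beta}, and an envelope bound derived from the logistic form of $l$ together with $\mathbb{E}\|x_i\|^{8}<\infty$ yield the uniform law of large numbers $\sup_{\beta\in\mathcal{B}}|\mathbb{P}_ml(\beta)-\mathcal{L}(\beta)|=o_{\mathbb{P}}(1)$. Modifying a single p-value perturbs $m\mathbb{P}_ml(\beta)$ by at most $O(\|x_j\|)$ uniformly in $\beta\in\mathcal{B}$, which is $o_{\mathbb{P}}(m)$ uniformly in $j$ via a maximal inequality exploiting the eighth moment; combined with argmax continuity and the unique maximizer from Assumption \ref{ass_unique_beta}, this transfers consistency uniformly to every perturbed estimator.

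Next I would difference the two first-order conditions and apply Taylor's theorem with integral remainder to the common part:
\begin{align*}
\sum_{i\neq j}\bigl\{\nabla l(\hat\beta(p_{j\to 0});z_i)-\nabla l(\hat\beta(p_{j\to 1});z_i)\bigr\}=\nabla l(\hat\beta(p_{j\to 1});(x_j,1))-\nabla l(\hat\beta(p_{j\to 0});(x_j,0)),
\end{align*}
so the left-hand side equals $-A_j\{\hat\beta(p_{j\to 0})-\hat\beta(p_{j\to 1})\}$ with
\begin{align*}
A_j=-\int_0^1\sum_{i\neq j}\nabla^2l\bigl(\hat\beta(p_{j\to 1})+s[\hat\beta(p_{j\to 0})-\hat\beta(p_{j\to 1})];z_i\bigr)\,ds.
\end{align*}
Splitting $A_j=S_j^*+\Delta_j$, where $\Delta_j$ captures the Hessian deviation along the Taylor path relative to $\beta^*$, and decomposing the right-hand side as $U_j^*+\Pi_j$ with $U_j^*$ evaluated at $\beta^*$ and $\Pi_j$ absorbing the $\beta$-Taylor correction at the $j$th term, produces the advertised representation. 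From the explicit logistic form, $\nabla^2l(\beta;z)$ is Lipschitz in $\beta$ on $\mathcal{B}$ with constant of order $\|x\|^3$, so
\begin{align*}
\sup_{j}\|\Delta_j\|\ \le\ \sup_{j,a}\|\hat\beta(p_{j\to a})-\beta^*\|\cdot\sum_{i=1}^{m}c\|x_i\|^3\ =\ o_{\mathbb{P}}(m)
\end{align*}
by the previous step and an LLN for $\|x_i\|^3$ (available because $\mathbb{E}\|x_i\|^{8}<\infty$).

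The main obstacle will be pushing $\sup_j\|\Pi_j\|=o_{\mathbb{P}}(1)$: since $\Pi_j$ is controlled by $c\|x_j\|^2\cdot\sup_a\|\hat\beta(p_{j\to a})-\beta^*\|$ and $\max_j\|x_j\|^2=O_{\mathbb{P}}(m^{1/4})$ under the eighth-moment assumption, I need the sharper uniform rate $\sup_{j,a}\|\hat\beta(p_{j\to a})-\beta^*\|=o_{\mathbb{P}}(m^{-1/4})$. I would obtain this from a second-order expansion of $\mathbb{P}_ml$ at $\beta^*$: positive definiteness of $-\nabla^2\mathcal{L}(\beta^*)$ (inherited from Assumption \ref{ass_unique_beta} together with mild additional regularity), the variance bound $\|\mathbb{P}_m\nabla l(\beta^*)\|=O_{\mathbb{P}}(m^{-1/2})$ supplied by Assumption \ref{ass_indp_z} and moments of the score, and the $O_{\mathbb{P}}(\|x_j\|/m)$ single-observation perturbation jointly transfer the parametric $m^{-1/2}$ rate uniformly in $j$. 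All remaining pieces---uniform LLN, Lipschitz Hessian, and comparison of perturbed and unperturbed estimators---are routine given the compactness of $\mathcal{B}$ and the moment bound.
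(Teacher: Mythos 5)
Your plan follows the paper's proof in all essentials: both difference the two score equations (which share every summand except the $j$th), Taylor-expand the common part around $\beta^*$ to isolate the Hessian-type operator $S_j^*$ and the single-observation forcing term $U_j^*$, and then bound the remainders using (a) uniform consistency $\sup_{j,a}\|\hat\beta(p_{j\to a})-\beta^*\|=o_{\mathbb{P}}(1)$, obtained exactly as you describe from a uniform law of large numbers plus the $O(1/m)$ perturbation of the empirical criterion and the uniqueness of the maximizer, and (b) $\sup_{j}\sum_{i\neq j}\|x_i\|^{u}=O_{\mathbb{P}}(m)$ for $u=3,4$ from the eighth-moment condition. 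Your integral-remainder form of $A_j$ versus the paper's repeated scalar mean-value expansions is a cosmetic difference, and your bound on $\Delta_j$ is the paper's.

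The one substantive divergence is the treatment of $\Pi_j$. The paper bounds $\sup_j\|\Pi_j\|$ by $c\sup_j\{\|\hat\beta(p_{j\to 0})-\beta^*\|+\|\hat\beta(p_{j\to 1})-\beta^*\|\}\|x_j\|^2$ plus a higher-order term and declares this $o_{\mathbb{P}}(1)$ directly from uniform consistency, without invoking any convergence rate at this stage. You are right that $\max_j\|x_j\|^2$ can be as large as $O_{\mathbb{P}}(m^{1/4})$ under an eighth moment, so that the product is not obviously $o_{\mathbb{P}}(1)$; this is a genuinely sharp observation, and it identifies the terse step in the paper's own argument. However, your proposed repair --- a uniform $o_{\mathbb{P}}(m^{-1/4})$ rate via local strong concavity of $\mathcal{L}$ at $\beta^*$ and an $O_{\mathbb{P}}(m^{-1/2})$ score bound --- imports conditions (twice differentiability of $\mathcal{L}$, an interior maximizer, $\nabla^2\mathcal{L}(\beta^*)\preceq -cI$, and a rate in Assumption \ref{ass_l_beta}) that appear only as conditions (iii)--(vi) of Theorem \ref{the_asym_fwer}, not in the hypotheses of Proposition \ref{pro_beta01}; this is in fact where the paper eventually establishes such a rate, in its supplementary Proposition on the tail of $\hat\beta(p_{j\to a})-\beta^*$. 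So, as written, your argument proves the proposition only under those strengthened assumptions; under the proposition's stated hypotheses alone you would need either to follow the paper and accept its direct $o_{\mathbb{P}}(1)$ claim for this term, or to qualify the conclusion for $\Pi_j$ accordingly.
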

Given Propositions \ref{pro_jm1_jm2} and \ref{pro_beta01}, we have the following theorem of asymptotic FWER control.

\begin{theorem}\label{the_asym_fwer}
	Suppose the following conditions are satisfied:\\
	(i) Assumptions \ref{ass_unif_p}--\ref{ass_unique_beta} hold;\\
	(ii) for some $q\ge 2$ and $\epsilon>0$, we have $\sup_{1\le i\le m}\mathbb{E}\left(\|x_i\|^{4q+\epsilon}\right)< \infty$;\\
	(iii) we have $\sup_{\beta\in\mathcal{B}}\left|\mathbb{E}\left\{\mathbb{P}_ml(\beta)\right\}-\mathcal{L}(\beta)\right|=O(m^{-1/2})$;\\
	(iv) the function $\mathcal{L}(\beta)$ is twice continuously differentiable;  \\
	(v) the global maximizer $\beta^*$ is not on the boundary of $\mathcal{B}$;\\
	(vi) for some $c>0$, we have $\nabla^2\mathcal{L}(\beta^*)\preceq -cI$, where $I$ denotes the identity matrix;\\
	(vii) for large enough $m$ and some $c>0$, we have $\mathbb{E}\left\{\nabla^2\mathbb{P}_ml(\beta^{*})\right\}\preceq -cI$; and\\
	(viii) the number of true null hypotheses $m_0$ satisfies that $ \liminf m_0/m>0$.
	\\Then
	\begin{align*}
	\text{FWER}\le J_m+\alpha=\begin{cases}
	o(\alpha m^{\frac{1-q}{4}})+\alpha, &\text{if }2\le q\le 2+\sqrt{5} ,\\
	O(\alpha m^{\frac{-q}{1+q}})+\alpha, &\text{if }q>2+\sqrt{5} .
	\end{cases}
	\end{align*}
\end{theorem}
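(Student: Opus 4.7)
\textbf{Proof plan for Theorem \ref{the_asym_fwer}.}
The starting point is Proposition \ref{pro_jm1_jm2}, which reduces the problem to bounding $J_{m,1}+J_{m,2}$ at the claimed rate, since $\text{FWER}\le c(J_{m,1}+J_{m,2})+\alpha$. Using Proposition \ref{pro_beta01}, I would write $\delta_j:=\hat\beta(p_{j\to 0})-\hat\beta(p_{j\to 1})=(S_j^*+\Delta_j)^{-1}(U_j^*+\Pi_j)$. Condition (vii), together with Weyl's inequality, shows that on the event $\{\|\Delta_j\|\le cm/2\}$ the smallest eigenvalue of $S_j^*+\Delta_j$ is at least $cm/2$, so $\|\delta_j\|\le 2(cm)^{-1}(\|U_j^*\|+\|\Pi_j\|)$ and in particular $\sup_j\|\delta_j\|=O_{\mathbb{P}}(m^{-1})$. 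A parallel application of Bernstein's inequality to $N_j:=\sum_{i\ne j}\mathbb{I}(p_i>\gamma)$, whose mean is at least $c'm$ by Assumption \ref{ass_unif_p} and condition (viii), yields $N_j\ge c''m$ with probability $1-e^{-cm}$ uniformly in $j$. On this intersection $\mathcal{G}_j$, the denominators in $J_{m,1}$ and $J_{m,2}$ are of order $\alpha^{-1}m$ and $(\alpha^{-1}m)^2$, respectively.

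The main work is to turn these in-probability statements into moment bounds for the expectations defining $J_{m,1}$ and $J_{m,2}$. I would introduce truncation levels $K_m$ for $\max_i\|x_i\|$ and $M_m$ for the perturbation error $\|\Delta_j\|\vee\|\Pi_j\|\vee\|U_j^*\|$, and split the expectations over $\mathcal{G}_j\cap\{\max_i\|x_i\|\le K_m\}\cap\{\text{perturbation controlled by }M_m\}$ and its complement. On the good event, Cauchy--Schwarz gives $|x_j^\T\delta_j|\le K_m\cdot C M_m/m$, producing a per-summand contribution $\lesssim\alpha K_m M_m/m^2$ in $J_{m,1}$, and the analogous computation $\lesssim\alpha(K_m^2 M_m+1)/m^2$ in $J_{m,2}$ (using that the numerator sum $\sum_{i\ne j}\mathbb{I}(p_i>\gamma)|x_i^\T\delta_j|$ is bounded by $N_j\cdot K_m\cdot CM_m/m$ on $\mathcal{G}_j$). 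Summing over $j$, the good-event contribution to $J_{m,1}+J_{m,2}$ is $O(\alpha K_m^2 M_m/m)$.

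On the complementary event I would use deterministic fallback bounds. Because $\mathcal{B}$ is compact, $\|\delta_j\|$ is bounded by a constant, and the denominator in $J_{m,1}$ is at least $\varepsilon^{1/(1-k)}$, so the bad-event contribution to a single summand of $J_{m,1}$ is $\lesssim\mathbb{E}[\|x_j\|;\text{bad}]\le (\mathbb{E}\|x_j\|^{4q+\epsilon})^{1/(4q+\epsilon)}\mathbb{P}(\text{bad})^{1-1/(4q+\epsilon)}$ by H\"older; a parallel bound with an extra factor of $\|x\|$ handles $J_{m,2}$. The probability of the bad event is controlled through (i) Markov's inequality applied to $\|x_i\|$ with the $(4q+\epsilon)$-th moment assumption of condition (ii), giving $\mathbb{P}(\max_i\|x_i\|>K_m)\le mK_m^{-(4q+\epsilon)}$; (ii) the deviation bound for $N_j$; and (iii) a moment estimate on the perturbation terms obtained by applying Proposition \ref{pro_beta01} together with condition (iii), which guarantees the $O(m^{-1/2})$ uniform approximation needed to quantify $\|\Delta_j\|$ and $\|\Pi_j\|$ in $L^q$. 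This step is what makes condition (iii) indispensable.

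The final step is a calibration of $K_m$ and $M_m$. Writing $K_m=m^{a}$ and $M_m=m^{b}$, the good-event piece is $O(\alpha m^{2a+b-1})$ while the bad-event piece depends on $a$ and $b$ through both Markov tails and the fallback moment factor; optimizing the two exponents produces a piecewise-defined rate with a kink exactly where the two regimes balance. The kink at $q=2+\sqrt{5}$ arises because in one regime the binding constraint is the $\|x\|$-tail (scaling as $K_m^{-(4q+\epsilon)}$) and in the other it is the perturbation tail from Proposition \ref{pro_beta01} (whose moment is dictated by $\|x_j\|^{4}$-type terms in $U_j^*$, $\Delta_j$, $\Pi_j$, hence scaling with a quadratic in $q$). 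Solving the associated quadratic gives the threshold $2+\sqrt{5}$ and the rates $o(\alpha m^{(1-q)/4})$ versus $O(\alpha m^{-q/(1+q)})$. The main obstacle is this last calibration: every intermediate quantity requires a clean $L^q$-bound uniform in $j$, and the fact that the perturbation expansion in Proposition \ref{pro_beta01} only delivers in-probability statements means one has to re-derive its pieces with explicit moment estimates in order to plug them into the truncation argument.
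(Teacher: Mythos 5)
Your skeleton matches the paper's: reduce via Proposition \ref{pro_jm1_jm2}, expand $\delta_j=\hat\beta(p_{j\to0})-\hat\beta(p_{j\to1})=(S_j^*+\Delta_j)^{-1}(U_j^*+\Pi_j)$ via Proposition \ref{pro_beta01}, control the denominators through the count $\sum_{i\ne j}\mathbb{I}(p_i>\gamma)$, and optimize truncation exponents. But the decisive quantitative step is missing, and one of your anchor claims is inconsistent with the result you are proving. You assert $\sup_j\|\delta_j\|=O_{\mathbb{P}}(m^{-1})$; since $\|U_j^*\|\le c\|x_j\|$ and $x_j$ is only controlled through its $(4q+\epsilon)$-th moment, $\|\delta_j\|\lesssim\|x_j\|/m$ has polynomial tails, and it is precisely this heavy tail that produces a $q$-dependent rate and a phase transition at all. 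The paper accordingly works with $\mathbb{E}\|\delta_j\|^4$ (isolated by Cauchy--Schwarz), splits it as $(K/m)^4+c\,\mathbb{P}(\|\delta_j\|>K/m)$, and applies Markov to the single $\|x_j\|$ appearing in $U_j^*$, at cost $K^{-4q}$ per summand; your global truncation of $\max_i\|x_i\|$ at $K_m$ costs an extra union-bound factor of $m$, and your good-event/bad-event bookkeeping (e.g.\ the $K_m^2M_m/m$ total) does not visibly recover the stated exponents. Likewise, the paper gets the clean factor $O(\alpha^N m^{-N})$ for the inverse denominators from a stochastic-ordering lemma plus exact inverse moments of a binomial, which is what carries $\alpha$ (rather than $\alpha^0$) into the final bound.

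More importantly, you have misidentified the origin of the two regimes. The $o(m^{1-q})$ floor --- which after the fourth root is literally the first-regime rate $o(\alpha m^{(1-q)/4})$ --- comes from Rosenthal-type tail bounds for the centered leave-one-out averages $m^{-1}\sum_{i\ne j}\|x_i\|^3$, $m^{-1}\sum_{i\ne j}\|x_i\|^4$ and the Hessian $S_j^*/m$ (each deviating with probability $o(m^{1-q})$ under $q$-th moments of the relevant summands, hence the requirement $\sup_i\mathbb{E}\|x_i\|^{4q+\epsilon}<\infty$), combined with the exponential concentration of $\hat\beta(p_{j\to a})$ around $\beta^*$ at rate $m^{-\omega}$, $\omega<1/4$, which is where condition (iii) actually enters (via a McDiarmid/Dudley argument, not an $L^q$ bound on $\Delta_j,\Pi_j$). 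The threshold $2+\sqrt5$ is then the root of $-4q/(1+q)=1-q$, i.e.\ of $q^2-4q-1=0$: it compares the Markov-optimized rate $m^{-4q/(1+q)}$ (from balancing $(K/m)^4$ against $K^{-4q}$) with this Rosenthal floor $m^{1-q}$, not an ``$\|x\|$-tail versus a perturbation tail scaling quadratically in $q$.'' Without supplying these uniform-in-$j$ tail bounds for the leave-one-out sums and the $\hat\beta$-concentration, the calibration you defer to the end cannot be completed, and that calibration is where the content of the theorem lies.
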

Theorem \ref{the_asym_fwer} derives the bound and its exact order on the FWER. Interestingly, the order of the bound depends crucially on the tail behavior of the covariates. And it shows an interesting phase transition depending on the value of $q$. We briefly explain this result as follows. From Proposition \ref{pro_jm1_jm2}, we can see that the FWER is upper bounded by an expression of the form $\alpha+\sum_{i=1}^{m}r_i$ with $r_i\ge 0$. Our argument optimizes the summation $\sum_{i=1}^{m}r_i$ in the upper bound. Depending on the value of $q$, the dominant term in this summation will change, which eventually leads to different convergence rates. When the covariates have exponential tails, the rate of convergence can be as close to $m^{-1}$ as possible. The details of the proof are provided in the supplementary material. As we discussed earlier, Assumptions \ref{ass_unif_p}--\ref{ass_unique_beta} enable us to show that the upper bound on the FWER relies on the smallness of $\|\hat \beta(p_{j\to 0})-\hat\beta(p_{j\to 1})\|$ and to get the expression for $\hat \beta(p_{j\to 0})-\hat\beta(p_{j\to 1})$. As our goal is to quantify the exact rate of  convergence of the FWER upper bound to the nominal level $\alpha$, we further need to quantify the exact difference between $\hat \beta(p_{j\to 0})$ and $\hat \beta(p_{j\to 1})$. Through Conditions (iii)--(v) and the strong-concavity Condition (vi), we obtain the concentration inequality for $\|\hat \beta(p_{j\to a})-\beta^*\|$. Conditions (ii) and (vii) are used for controlling the inverse $(S_j^*+\Delta_j)^{-1}$ in the expression of $\hat \beta(p_{j\to 0})-\hat\beta(p_{j\to 1})$. Condition (viii) requires the number of true null hypotheses to be at least some positive proportion of all hypotheses, which is fairly mild. We give one toy example where all the conditions are satisfied.
\begin{example}\label{example}
	Suppose all hypotheses are true nulls and $(x_i, p_i)$ are i.i.d.with $x_i$ being 1-dimensional, $x_i\indep p_i$ and $p_i\sim \text{Unif}([0,1])$, i.e., the uniform distribution on $[0,1]$. Then
	$\mathcal{L}(\beta)=\mathbb{E}\{\mathbb{P}_ml(\beta)\}=\mathbb{E}\left\{l(\beta;z_1)\right\}$.
	If $x_i$ follows a distribution symmetric about zero and $\mathbb{P}(x_i\neq 0)>0$, it can be shown that $\mathcal{L}'(\beta)=0$ if $\beta=0$, $\mathcal{L}'(\beta)>0$ if $\beta<0$, $\mathcal{L}'(\beta)<0$ if $\beta>0$, and $\mathcal{L}'(\beta)=-\mathcal{L}'(-\beta)$. Thus $\beta^*=0$ is the unique maximizer.
	We can further prove that $\mathcal{L}''(0)\le -c$ as long as $\mathbb{E}(x_i^2)>c'$ for some $c'>0$. Other conditions are naturally satisfied. When $x_i$ follows a non-symmetric distribution, we also illustrate its obedience to these conditions. One mandatory requirement for the distribution of $x_i$ is that $\mathbb{P}(x_i>0)>0$ and $\mathbb{P}(x_i<0)>0$. In practice, we could always achieve this by shifting the covariate via subtracting the median (or by standardizing the covariate). See more details in the supplementary material.
\end{example}

	\section{Numerical studies}\label{sec-sim}
\subsection{Simulation setups}
We conduct comprehensive simulations to evaluate the finite-sample performance of the proposed method and compare it to competing methods. For genome-scale multiple testing, the numbers of hypotheses could range from thousands to millions. For demonstration purpose, we start with $m=10, 000$ hypotheses. To study the impact of signal density and strength, we simulate three levels of signal density (sparse, medium and dense signals) and six levels of signal strength (from very weak to very strong). To demonstrate the power improvement by using external covariates, we simulate covariates of varying informativeness (non-informative, moderately informative and strongly informative). For simplicity, we simulate one covariate $x_i \sim N (0, 1) $ for $i = 1,\cdots, m$. Given $x_i$, we denote $\pi(x_i)$ by $\pi_{i}$ and let
\begin{align*}
\pi_{i}=\frac{\exp(\eta_i)}{1+\exp(\eta_i)},\quad\eta_i=\eta_0+k_dx_i,
\end{align*}
where $\eta_0$ and $k_d$ determine the baseline signal density and the informativeness of the covariate, respectively. We set $\eta_0 = 3.5, 2.5 \text{ and } 1.5$, which achieves a signal density of around 3\%, 8\%, and 18\% respectively at the baseline (i.e., no covariate effect), representing sparse, medium and dense signals. Here $k_d$ is set to be 0, 1 and 1.5, representing a non-informative, moderately informative and strongly informative covariate. Based on $\pi_{i}$, the underlying truth $H_i$ is simulated from
\begin{align*}
H_i\sim\text{Bern}(1-\pi_{i}).
\end{align*}
Finally, we simulate independent z-scores using
\begin{align*}
z_i\sim N(k_sH_i,1),
\end{align*}
where $k_s$ controls the signal strength (effect size), and we use six values equally spaced on [2, 2.8] and we label them as $\{1,2,...,6\}$. Z- scores are converted into p-values using the one-sided formula $1-\Phi(z_i)$. P-values together with $x_i$ are used as the input for the proposed method.

In addition to the basic setting (denoted as S0), we investigate other settings to study the robustness of the proposed method. Specifically, we study
\begin{itemize}
	\item Setup S1. \textit{Additional $f_1$ distribution}. Instead of simulating normal z-scores under $f_1$, we simulate z-scores from a non-central gamma distribution with the shape parameter $2$. The scale/non-centrality parameters of the non-central gamma distribution are chosen to match the variance and mean of the normal distribution under S0.
	
	\item Setup S2. \textit{Correlated hypotheses}. We further investigate the effect of dependency among hypotheses by simulating correlated multivariate normal z-scores. Four correlation structures, including two block correlation structures and two AR(1) correlation structures, are investigated. For the block correlation structure, we divide the 10,000 hypotheses into 500 equal-sized blocks. Within each block, we simulate equal positive correlations $(\rho=0.5)$ (S2.1). On top of S2.1, we divide the block into 2 by 2 sub-blocks, and simulate negative correlations $(\rho=-0.5)$ between the two sub-blocks (S2.2). For AR(1) structure, we investigate both $\rho=0.75^{|i-j|}$ (S2.3) and $\rho=(-0.75)^{|i-j|}$ (S2.4).
\end{itemize}
We present the simulation result for the Setup S0 in the main text and the results for the Setups S1 and S2 in the supplementary material.

\subsection{Competing methods}
We compared the proposed covariate-adaptive FWER-controlling procedure (denoted by CAMT.fwer) to IHW-Bonferroni, weighted Bonferroni and Holm's step-down methods \citep{Holm:1979}. The covariate-adaptive FWER-controlling procedure, implemented using the CAMT.fwer function in the R package \texttt{CAMT}, used the model $\log[{\pi(x_i)}/\{1-\pi(x_i)\}]=x_i^\T\beta$, set $f_1(p)=kp^{k-1}$ and estimated $\beta$ and $k$ jointly using Algorithm 2. The weighted Bonferroni method rejected the $i$th hypothesis if $p_i < {\alpha}/{(m\pi_i)}$, where $\pi_i$'s were estimated from CAMT.fwer. The IHW-Bonferroni method was implemented using the R package \texttt{IHW}, and Holm’s step-down method using the holm function from the R package \texttt{mutoss}. We also implemented an oracle procedure based on the proposed optimal rejection rule, where $\pi_i$'s and $f_1$ were the true null probabilities and alternative density that generated the data.

\citet{Storey:2004} proposed the bootstrap method to estimate the overall null probability $\pi$, which is implemented in the R package \texttt{qvalue}. The method uses censored p-values $\mathbb{I}\{p_i>\lambda\}$ with $\lambda=0.05,0.1,...,0.95$ to obtain the corresponding estimates of the null probability, $\pi_{\lambda}$, and returns the best  $\pi_{\hat\lambda}$. We set $\gamma=\hat\lambda$. We evaluated the performance based on the FWER control (probability of making at least one false positive) and power (true positive rate) with a target FWER level of 5\%. Results were averaged over 1000 simulation runs. In addition, we investigated the FWER control across different target levels, $\alpha=0.01,0.05,0.1,0.15,0.2$, for cases where there are no signals and under the Setup S0 with moderate signal density ($\eta_0=2.5$), signal strength ($k_s=2.4$) and covariate informativeness ($k_d=1$). 

\subsection{Simulation results}
We showcase the simulation results of Setup S0 in Figure \ref{fig_s0} and Setups S1--S2 in the supplementary material as well as the FWER control across different target levels (see Figures \ref{fig_no_signal}--\ref{fig_s2_4} in the supplementary material). All methods control the FWER around the  5\% target level  (Figure \ref{fig_s0}A). We additionally draw the 95\% confidence intervals (CIs) of the proposed method CAMT.fwer and observe that almost all the intervals cover the 5\% target level (dashed line) (Figure \ref{fig_s0}A), which suggests adequate FWER control of CAMT.fwer  under finite samples. In terms of power  (Figure \ref{fig_s0}B), generally, the five competing methods from the best to the worst are oracle, CAMT.fwer, IHW-Bonferroni  and weighted Bonferroni (the performance of these two methods depends on the cases), Holm's step-down methods. The oracle procedure represents the performance upper bound and dominates other methods.

\begin{figure}
	\begin{subfigure}[b]{1\textwidth}
		\centering
		\includegraphics[scale=0.45]{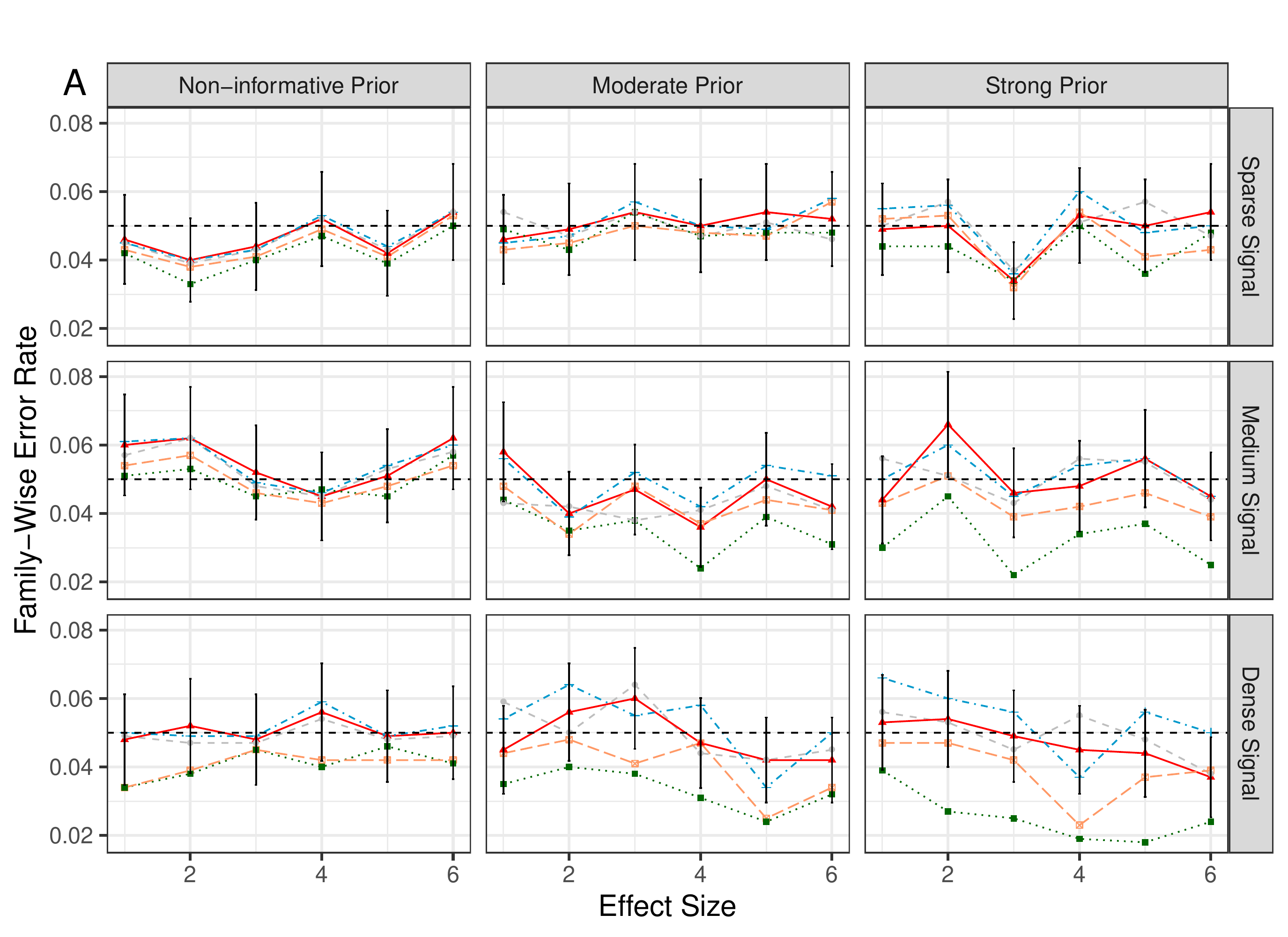}
	\end{subfigure}
	\begin{subfigure}[b]{1\textwidth}
		\centering
		\includegraphics[scale=0.45]{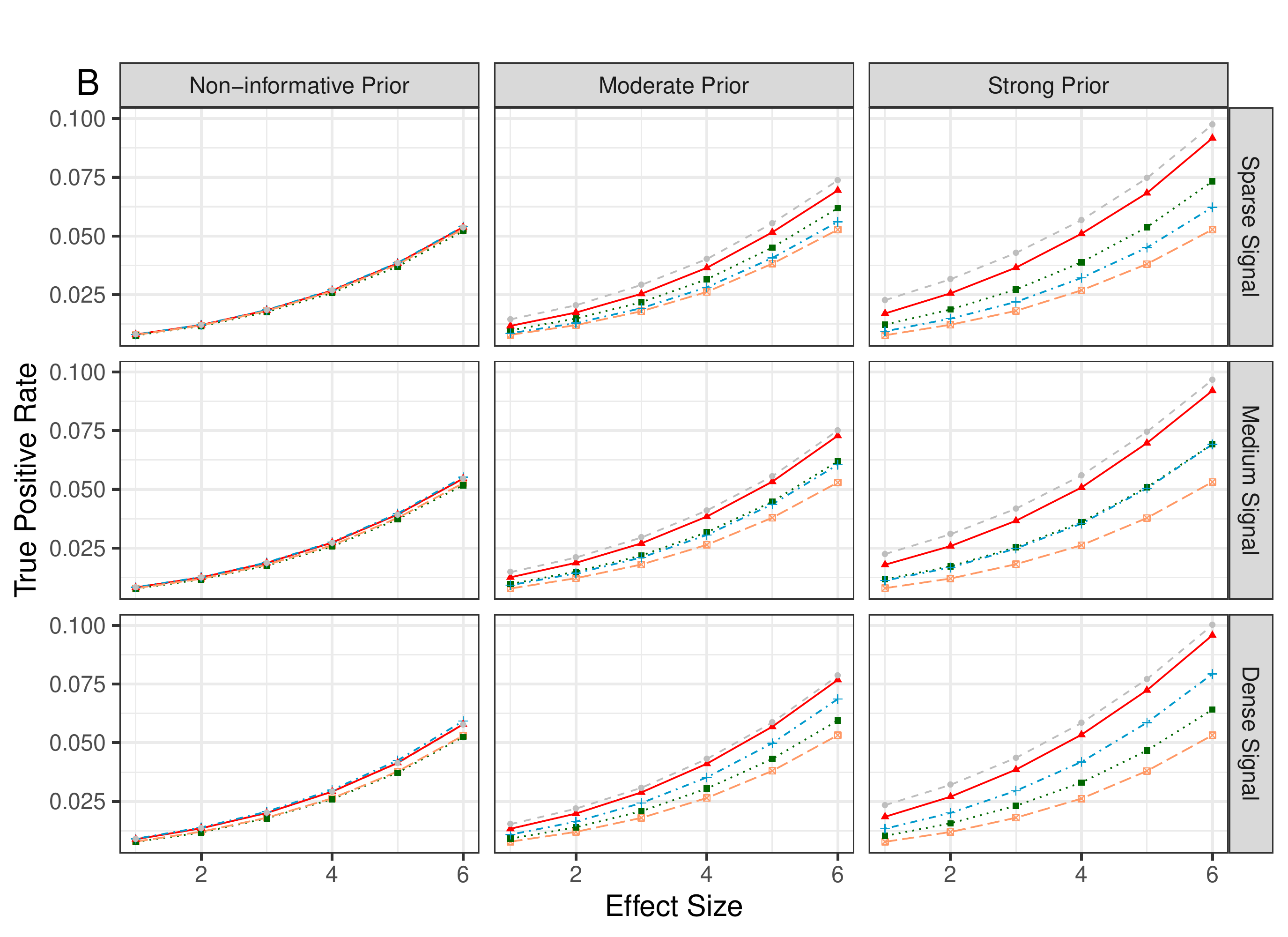}
	\end{subfigure}
	\caption{Performance comparison under the basic setting (S0). Family-wise error rates (A) and true positive rates (B) were averaged over 1000 simulation runs. The dashed gray, solid red, dotted green, dot-dashed blue and long-dashed orange lines represent the oracle, CAMT.fwer, IHW-Bonferroni, weighted Bonferroni and Holm's step-down methods respectively. The error bars (A) represent the 95\% CIs of the method CAMT.fwer and the dashed horizontal line indicates the target FWER level of 0.05.}
	\label{fig_s0}
\end{figure}

We now study the impact of the external (prior) information , signal density and strength  (Figure \ref{fig_s0}B). First, we note that the power increases with the signal strength (effect size) for all methods as expected. Second, as the prior informativeness increases, the performance difference  between methods widens.  CAMT.fwer  is close to the oracle procedure: it is as powerful as other methods when the prior is not informative and is substantially more powerful when the prior is highly informative. Both IHW-Bonferroni and weighted Bonferroni methods improve over Holm's step-down method when the prior is informative. Third, the proposed method maintains high power across different signal densities.  In contrast, IHW-Bonferroni method performs better than weighted Bonferroni method when the signal is sparse and performs worse when the signal is dense.

Figures \ref{fig_no_signal}--\ref{fig_moderate_fwer} show the weak and strong FWER control of the competing methods across different target levels. All the methods including CAMT.fwer control the FWER at the target level. Figure \ref{fig_moderate_power} compares the power across different target levels at moderate signal density, signal strength and prior informativeness. CAMT.fwer remains more powerful than other methods. In fact, as the target level increases, the power difference becomes larger.

We next study the robustness of the proposed method under the Setup S1 (additional $f_1$ distribution, Figure \ref{fig_s1}) and Setup S2 (correlated hypotheses, Figures \ref{fig_s2_1}--\ref{fig_s2_4}).  The general trend remains similar to Setup S0, indicating that CAMT.fwer is robust to different $f_1$ distributions and various correlation structures.  Interestingly,  as we generate z-scores from non-central gamma distribution for the alternative in Setup S1, the power of CAMT.fwer is even closer to that of the oracle procedure (Figure \ref{fig_s1}), indicating that the beta distribution can model the alternative p-value distribution very accurately in this case.

\section{Application to GWAS of UK Biobank data}\label{sec-real}
To demonstrate the use of the proposed procedure in real world applications, we applied CAMT.fwer to UK Biobank data \citep{Kichaev:2019}. We downloaded the data (p-values and functional annotations) from \texttt{https://data.broadinstitute.org/alkesgroup/UKBB/} and \texttt{https://data.broadinstitute.org/alkesgroup/FINDOR/}. The genome-wide association p-values for 9 million SNPs and 27 traits were calculated using BOLT-LMM \citep{Loh:2018} based on 459K samples.  The annotation data consists of 75 coding, conserved, regulatory, and linkage-disequilibrium-related annotations that have previously been shown to be enriched for the disease heritability \citep{Kichaev:2019}. We compared our method with IHW-Bonferroni, weighted Bonferroni and Holm's step-down methods. For the IHW-Bonferroni method, as it can only deal with one-dimensional covariate, we chose the covariate that had the maximum Spearman correlation with the p-values out of the 75 covariates for the 27 traits separately. For the weighted Bonferroni method, we rejected the $i$th hypothesis if $p_i < {\alpha}/{(m\pi_i)}$, where $\pi_i$'s were estimated from CAMT.fwer.  The details of the use of CAMT.fwer are given below.

Appropriate initial values of $(\beta, k)$ are important for the algorithm to reach convergence in less iterations and reduce the computation time significantly.  To achieve this end, we estimate those initial values based on small p-values (so the initial beta distribution fits the small p-value region more accurately). Let $\pi^s$ be the estimate of the proportion of the true null hypotheses based on Storey's procedure (R package \texttt{qvalue}). We define the ``small p-values" as the first $m(1-\pi^s)$ smallest p-values and let $u$ be the maximum value of those small p-values. Note that
$$f(p\mid p<u)=\frac{\pi+(1-\pi)kp^{k-1}}{\pi u +(1-\pi)u^k},$$
is the conditional density of the mixture model $f(p)=\pi+(1-\pi)k p^{k-1}$ given that the value is less than $u$. We estimate $\pi$ and $k$ by maximizing the (conditional) log-likelihood function,
\begin{align*}
(\tilde\pi,\tilde k)=\arg\max_{\pi\in (0,1), k\in(0,1)}\sum_{i:p_i<u}\log\left\{\pi+(1-\pi)kp_i^{k-1}\right\}-n\log\{\pi u + (1 - \pi) u ^ k\},
\end{align*}
where $n$ is the number of p-values that are smaller than $u$. Let $\tilde\beta=(\log\{\tilde\pi/(1-\tilde\pi)\},0)^\T$. Then we set $(\tilde\beta, \tilde k)$ as the initializer in Algorithm 2.

Due to the linkage disequilibrium between SNPs, after getting the rejected SNPs, we used PLINK’s linkage-disequilibrium-based clumping algorithm with a 5 Mb window and an $r^2$ threshold of 0.01 to form clumps of SNPs. The British population in the 1000 genomes data \citep{1000:2015} was used to calculate the linkage disequilibrium. The rejected SNPs belonging to the same clump count for only one significant locus. The numbers of significant loci at the 5\% FWER level detected by the four competing methods are presented in Table \ref{table-rej-1}.  We present the numbers of rejections before clumping in the supplementary material. CAMT.fwer detected more loci than other methods in 21 out of the 27 traits. Averaged across the traits, our approach attained 4.20\% increase in significant loci detected compared with the Holm's step-down method.

\begin{table}\centering
	\caption{Significant loci detected at the FWER level of 0.05. Improve=$(\text{CAMT.fwer}-\text{Holm})/\text{Holm}\times 100\%$. The numbers with subscript $*$ are the maximum numbers of rejections among the four competing methods for the corresponding traits}
	\footnotesize
		\begin{tabular}{lccccc} 
			\\[-1.8ex]\hline 
			\hline \\[-1.8ex] 
			& Holm & IHW & weighted Bonferroni & CAMT.fwer & Improve \\ 
			\hline \\[-1.8ex] 
			Balding Type I & $836_*$ & $836_*$ & $836_*$ & 833 & -0.4\% \\ 
			BMI & 1287 & 1287 & 1347 & $1364_*$ & 6\% \\ 
			Heel T Score & 2104 & 2104 & 2144 & $2146_*$ & 2\% \\ 
			Height & 3463 & 3460 & $3555_*$ & 3550 & 2.5\% \\ 
			Waist-hip Ratio & 909 & 909 & 937 & $952_*$ & 4.7\% \\ 
			Eosinophil Count & 1750 & 1750 & $1817_*$ & 1797 & 2.7\% \\ 
			Mean Corpular Hemoglobin & 1913 & 1913 & $1953_*$ & 1925 & 0.6\% \\ 
			Red Blood Cell Count & 1570 & 1570 & 1609 & $1633_*$ & 4\% \\ 
			Red Blood Cell Distribution Width & 1470 & 1470 & $1493_*$ & 1470 & 0\% \\ 
			White Blood Cell Count & 1393 & 1393 & 1430 & $1462_*$ & 5\% \\ 
			Auto Immune Traits & 179 & 179 & $180_*$ & 138 & -22.9\% \\ 
			Cardiovascular Diseases & 512 & 512 & 529 & $540_*$ & 5.5\% \\ 
			Eczema & 423 & 423 & 426 & $431_*$ & 1.9\% \\ 
			Hypothyroidism & 373 & 373 & 377 & $424_*$ & 13.7\% \\ 
			Respiratory and Ear-nose-throat Diseases & 228 & 228 & 231 & $236_*$ & 3.5\% \\ 
			Type 2 Diabetes & 156 & 156 & 158 & $160_*$ & 2.6\% \\ 
			Age at Menarche & 634 & 634 & 648 & $652_*$ & 2.8\% \\ 
			Age at Menopause & 200 & 200 & 201 & $203_*$ & 1.5\% \\ 
			FEV1-FVC Ratio & 1537 & 1537 & 1575 & $1599_*$ & 4\% \\ 
			Forced Vital Capacity (FVC) & 867 & 867 & 924 & $947_*$ & 9.2\% \\ 
			Hair Color & 1606 & 1606 & 1616 & $1629_*$ & 1.4\% \\ 
			Morning Person & 204 & 204 & 217 & $229_*$ & 12.3\% \\ 
			Neuroticism & 176 & 115 & 189 & $198_*$ & 12.5\% \\ 
			Smoking Status & 221 & 159 & 232 & $254_*$ & 14.9\% \\ 
			Sunburn Occasion & 232 & 232 & 232 & $237_*$ & 2.2\% \\ 
			Systolic Blood Pressure & 1108 & 1108 & 1148 & $1157_*$ & 4.4\% \\ 
			Years of Education & 383 & 383 & 416 & $447_*$ & 16.7\% \\ 
			\hline \\[-1.8ex] 
	\end{tabular}
	\label{table-rej-1}
\end{table}

\section{Discussions}\label{sec-disc}
To conclude, we point out a few future research directions. First, in the two-group mixture model, we assume that the success probabilities $\pi(x_i)$ vary with $x_i$ while $f_1$ is independent of $x_i$. This assumption is reasonable in some applications but it can be restrictive when the covariates also affect the effect sizes. It is thus of interest to develop
a procedure by allowing $f_1$ to be dependent on $x_i$ in such scenarios. Second, modeling $f_1$ and $\pi$ using nonparametric procedures would give us the flexibility to capture more complicated signal patterns. Finally, extending the method to accommodate more general structural information such as the phylogenetic tree structure \citep{Xiao:2017} is an interesting direction.

\section*{Acknowledgement}
Both Zhou and Zhang acknowledge partial support from NSF DMS-1830392 and NSF DMS-1811747. Zhou acknowledges partial support from China Scholarship Council. Chen acknowledges the support from Mayo Clinic Center for Individualized Medicine. We thank Dr. Kejun He for the help for the proof of Lemma \ref{lem_leave_one} in the supplementary material. We are also grateful to the associate editor and two reviewers for the insightful comments, which substantially improve the paper.

\newpage
\setcounter{section}{0}
\renewcommand{\thesection}{S\arabic{section}}
\setcounter{subsection}{0}
\renewcommand{\thesubsection}{S\arabic{subsection}}
\setcounter{equation}{0}
\renewcommand{\theequation}{S\arabic{equation}}
\setcounter{figure}{0}
\renewcommand{\thefigure}{S\arabic{figure}}
\setcounter{table}{0}
\renewcommand{\thetable}{S\arabic{table}}
\setcounter{lemma}{0}
\renewcommand{\thelemma}{S\arabic{lemma}}
\setcounter{lemma}{0}
\renewcommand{\thelemma}{S\arabic{lemma}}
\setcounter{proposition}{0}
\renewcommand{\theproposition}{S\arabic{proposition}}
\setcounter{remark}{0}
\renewcommand{\theremark}{S\arabic{remark}}

\begin{center}
	\Large\bf{Supplementary Material}
\end{center}

\begin{abstract}
	The supplementary material is organized as follows. In Section \ref{sec-example1}, we provide more technical details for Example \ref{example}. Section \ref{sec-inter-prop2} presents Lemmas \ref{lem_ulln}--\ref{lem_consistent} and their proofs that are useful in the proof of Proposition \ref{pro_beta01}. Section \ref{sec-props12} presents the proofs of Propositions \ref{pro_jm1_jm2}--\ref{pro_beta01}. In Section \ref{sec-interm}, we provide other intermediate results for the proof of Theorem \ref{the_asym_fwer} together with their proofs. In Section \ref{sec-theorem1}, we prove Theorem \ref{the_asym_fwer}. Sections \ref{sec-simu} and \ref{sec-gwas} present the additional simulation results and the numbers of rejections before clumping mentioned in Section \ref{sec-real} of the main paper, respectively.
\end{abstract}

\section{More about Example 1}\label{sec-example1}

Suppose all hypotheses are true nulls and $(x_i, p_i)$ are i.i.d.with $x_i$ being 1-dimensional, $x_i\indep p_i$ and $p_i\sim \text{Unif}([0,1])$. Then
\begin{align*}
&\mathcal{L}(\beta)=\mathbb{E}\{\mathbb{P}_ml(\beta)\}=\mathbb{E}\left\{l(\beta;z_i)\right\}\\
=&\mathbb{E}\left[(1-\gamma)\log\left\{1-\gamma+(\gamma-\gamma^k)\frac{e^{-x_i\beta}}{1+e^{-x_i\beta}}\right\}+\gamma\log\left\{\gamma-(\gamma-\gamma^k)\frac{e^{-x_i\beta}}{1+e^{-x_i\beta}}\right\}\right]
\end{align*}
and $\mathcal{L}'(\beta)=(\gamma-\gamma^k)^2\mathbb{E}\left\{g(x_i;\beta)\right\}$, where 
\begin{align*}
g(x;\beta)=\frac{xe^{-x\beta}}{(1+e^{-x\beta})^2}\frac{\frac{e^{-x\beta}}{1+e^{-x\beta}}}{\left\{1-\gamma+(\gamma-\gamma^k)\frac{e^{-x\beta}}{1+e^{-x\beta}}\right\}\left\{\gamma-(\gamma-\gamma^k)\frac{e^{-x\beta}}{1+e^{-x\beta}}\right\}}.
\end{align*}
We observe that $	\gamma-\gamma^k < 0$,
\begin{align}
\begin{split}
1-\gamma^k<&1-\gamma+(\gamma-\gamma^k)\frac{e^{-x\beta}}{1+e^{-x\beta}}<1-\gamma,\\
\gamma<&\gamma-(\gamma-\gamma^k)\frac{e^{-x\beta}}{1+e^{-x\beta}}<\gamma^k,
\end{split}
\label{eq-l-beta-z}
\end{align}
and
\begin{align*}
g(x;\beta)+g(-x;\beta)=&\frac{xe^{-x\beta}}{(1+e^{-x\beta})^2}\frac{\frac{e^{-x\beta}}{1+e^{-x\beta}}}{\left\{1-\gamma+(\gamma-\gamma^k)\frac{e^{-x\beta}}{1+e^{-x\beta}}\right\}\left\{\gamma-(\gamma-\gamma^k)\frac{e^{-x\beta}}{1+e^{-x\beta}}\right\}}\\
&-\frac{xe^{x\beta}}{(1+e^{x\beta})^2}\frac{\frac{e^{x\beta}}{1+e^{x\beta}}}{\left\{1-\gamma+(\gamma-\gamma^k)\frac{e^{x\beta}}{1+e^{x\beta}}\right\}\left\{\gamma-(\gamma-\gamma^k)\frac{e^{x\beta}}{1+e^{x\beta}}\right\}}\\
=&\frac{xe^{x\beta}}{(1+e^{x\beta})^2}\Bigg[\frac{\frac{1}{1+e^{x\beta}}}{\left\{1-\gamma+(\gamma-\gamma^k)\frac{1}{1+e^{x\beta}}\right\}\left\{\gamma-(\gamma-\gamma^k)\frac{1}{1+e^{x\beta}}\right\}}\\
&\qquad\qquad\qquad-\frac{\frac{e^{x\beta}}{1+e^{x\beta}}}{\left\{1-\gamma+(\gamma-\gamma^k)\frac{e^{x\beta}}{1+e^{x\beta}}\right\}\left\{\gamma-(\gamma-\gamma^k)\frac{e^{x\beta}}{1+e^{x\beta}}\right\}}\Bigg]\\
=&\frac{xe^{x\beta}}{(1+e^{x\beta})^2}\Bigg[\frac{\frac{e^{x\beta}}{1+e^{x\beta}}}{\left\{1-\gamma+(\gamma-\gamma^k)\frac{1}{1+e^{x\beta}}\right\}\left\{\gamma e^{x\beta}-(\gamma-\gamma^k)\frac{e^{x\beta}}{1+e^{x\beta}}\right\}}\\
&\qquad\qquad\qquad-\frac{\frac{e^{x\beta}}{1+e^{x\beta}}}{\left\{1-\gamma+(\gamma-\gamma^k)\frac{e^{x\beta}}{1+e^{x\beta}}\right\}\left\{\gamma-(\gamma-\gamma^k)\frac{e^{x\beta}}{1+e^{x\beta}}\right\}}\Bigg].
\end{align*}
Thus if $\beta = 0$ then $g(x;\beta)+g(-x;\beta)=0$, if $\beta > 0$ then $g(x;\beta)+g(-x;\beta)<0$ (as long as $x\neq 0$), and if $\beta < 0$ then $g(x;\beta)+g(-x;\beta)>0$ (as long as $x\neq 0$). Note also that $g(0;\beta)=0$ and $g(x;\beta)+g(-x;\beta)=-\{g(x;-\beta)+g(-x;-\beta)\}$. Therefore, if $x_i$ follows a distribution that is symmetric about zero and $\mathbb{P}(x_i\neq 0) >0$, then we have
\begin{align*}
\mathcal{L}'(\beta)=&(\gamma-\gamma^k)^2\mathbb{E}\{g(x_i;\beta)\mathbb{I}(x_i>0) + g(x_i;\beta)\mathbb{I}(x_i\le 0)\}\\
=&(\gamma-\gamma^k)^2\mathbb{E}\{g(x_i;\beta)\mathbb{I}(x_i>0) + g(-x_i;\beta)\mathbb{I}(x_i> 0)\}\\
=&(\gamma-\gamma^k)^2\mathbb{E}[\{g(x_i;\beta)+g(-x_i;\beta)\}\mathbb{I}(x_i>0)]
\end{align*}
and hence
$\mathcal{L}'(\beta)=-\mathcal{L}'(-\beta)$, and $\mathcal{L}'(\beta)=0$ if $\beta=0$, $\mathcal{L}'(\beta)>0$ if $\beta<0$ and $\mathcal{L}'(\beta)<0$ if $\beta>0$.
Thus $\beta^*=0$. Let $h(x;\beta)=e^{-x\beta}/(1+e^{-x\beta})$ and $h'(x;\beta)=-xe^{-x\beta}/(1+e^{-x\beta})^2$. Then 
\begin{align*}
g(x;\beta)=&\frac{x\left\{h(x;\beta)\right\}^3e^{x\beta}}{\left\{1-\gamma+(\gamma-\gamma^k)h(x;\beta)\right\}\left\{\gamma-(\gamma-\gamma^k)h(x;\beta)\right\}},\\
g'(x;\beta)=&\frac{\left[3x\left\{h(x;\beta)\right\}^2h'(x;\beta)e^{x\beta}+x^2\left\{h(x;\beta)\right\}^3e^{x\beta}\right]\left\{1-\gamma+(\gamma-\gamma^k)h(x;\beta)\right\}\left\{\gamma-(\gamma-\gamma^k)h(x;\beta)\right\}}{\left\{1-\gamma+(\gamma-\gamma^k)h(x;\beta)\right\}^2\left\{\gamma-(\gamma-\gamma^k)h(x;\beta)\right\}^2}\\
&-\frac{x\left\{h(x;\beta)\right\}^3e^{x\beta}\left[(\gamma-\gamma^k)h'(x;\beta)\left\{2\gamma-1-2(\gamma-\gamma^k)h(x;\beta)\right\}\right]}{\left\{1-\gamma+(\gamma-\gamma^k)h(x;\beta)\right\}^2\left\{\gamma-(\gamma-\gamma^k)h(x;\beta)\right\}^2},
\end{align*}
and
\begin{align*}
g'(x;0)=&\frac{-\frac{x^2}{16}\left\{1-\gamma+(\gamma-\gamma^k)/2\right\}\left\{\gamma-(\gamma-\gamma^k)/2\right\}+\frac{x^2}{32}(\gamma-\gamma^k)(\gamma+\gamma^k-1)}{\left\{1-\gamma+(\gamma-\gamma^k)/2\right\}^2\left\{\gamma-(\gamma-\gamma^k)/2\right\}^2}\\
=&-\frac{x^2}{32}u(\gamma,k),
\end{align*}
where
\begin{align*}
u(\gamma,k)=\frac{2\left\{1-\gamma+(\gamma-\gamma^k)/2\right\}\left\{\gamma-(\gamma-\gamma^k)/2\right\}-(\gamma-\gamma^k)(\gamma+\gamma^k-1)}{\left\{1-\gamma+(\gamma-\gamma^k)/2\right\}^2\left\{\gamma-(\gamma-\gamma^k)/2\right\}^2}.
\end{align*}
We study $u(\gamma,k)$ numerically. Let $\gamma$ range from $0.05$ to $0.95$. Then the values of $\min\{u(\gamma,k);k\in[0, 1]\}$ are shown in Figure \ref{u_gamma_k}. We point out that the left most point in Figure \ref{u_gamma_k} is $\min\{u(0.05,k):k\in[0, 1]\}=u(0.05,0.23)=4.94$.
\begin{figure}
	\centering
	\includegraphics[scale=0.45]{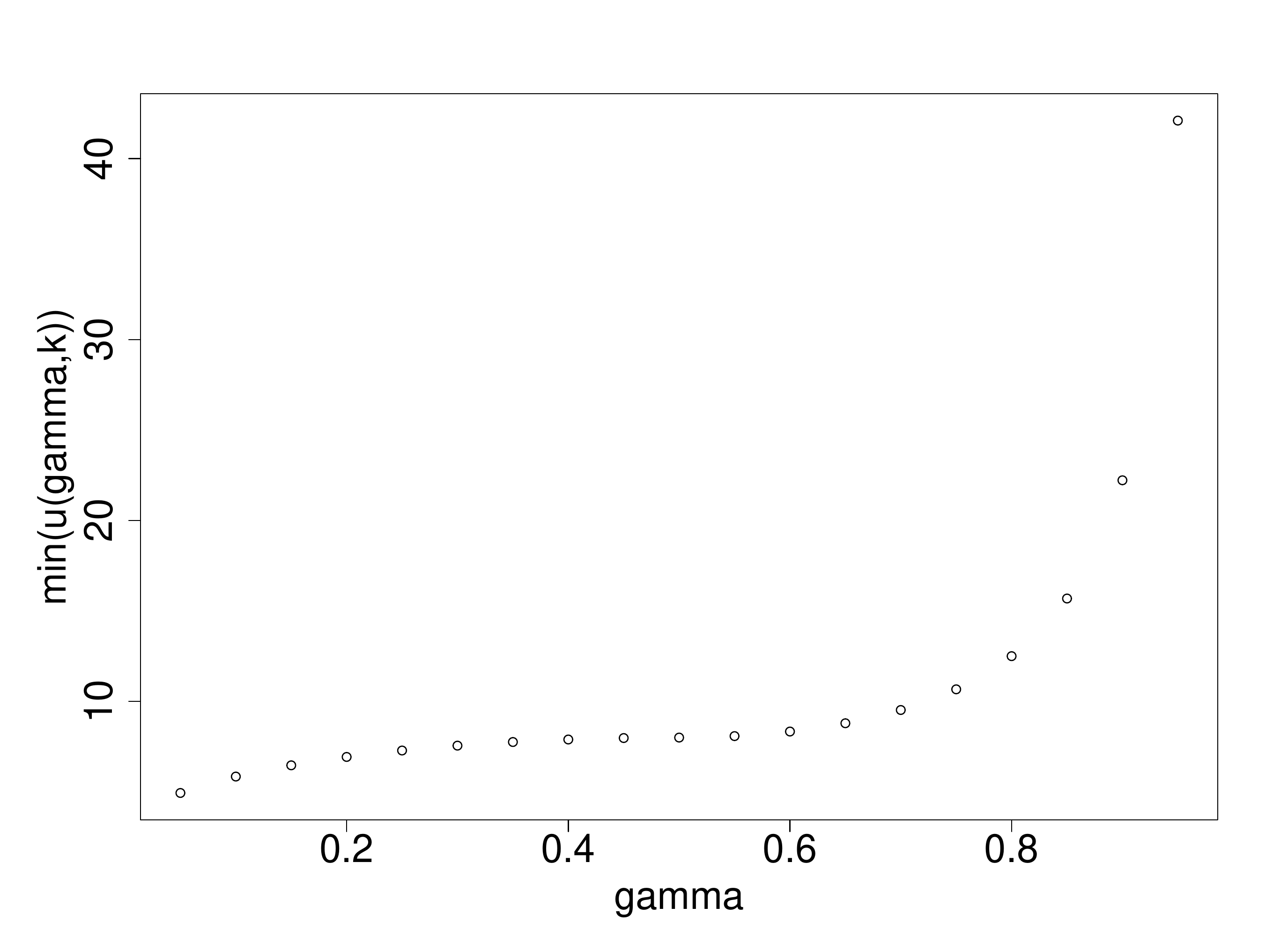}
	\caption{Values of $\min\{u(\gamma,k);k\in[0, 1]\}$ with $\gamma$ ranging from 0.05 to 0.95.}
	\label{u_gamma_k}
\end{figure}
Therefore 
$$\mathcal{L}''(0)=\frac{\text{d}}{\text{d}\beta}(\gamma-\gamma^k)^2\mathbb{E}\{g(x_i,\beta)\}\Big|_{\beta=0}=-\frac{(\gamma-\gamma^k)^2u(\gamma,k)}{32}\mathbb{E}(x_i^2)\le -c$$
as long as $\mathbb{E}(x_i^2)> c'$ for some $c'>0$. Next, we illustrate the behavior of $\mathcal{L}'(\beta)$ by considering four cases: (i) $x_i\sim N(0, 1)$; (ii) $x_i\sim \text{Gamma}(1, 0.5)$, i.e., Gamma distribution with shape 1 and rate 0.5; (iii) $x_i=w_i-\text{median}(w_1,...,w_m)$, where $w_i\sim\text{Gamma}(1, 0.5)$; (iv) $x_i=w_i-\text{median}(w_1,...,w_m)$, where $w_i\sim\text{Pois}(2)$, i.e., Poisson distribution with parameter 2. Let $\gamma=0.5$, $k=0.25$. When $x_i$ follows the standard normal distribution which is symmetric about zero, the behavior of $\mathcal{L}'(\beta)$ as observed in (i) of Figure \ref{l_beta_deriv} is consistent with our theory. As illustrated by (ii) of Figure \ref{l_beta_deriv}, Condition (v) is violated if $\mathbb{P}(X\geq 0)=1$ (or $\mathbb{P}(X\leq 0)=1$). Nevertheless, we can resolve this issue by simply subtracting the observations by the sample median. As we can see from (iii)--(iv) of Figure \ref{l_beta_deriv}, the pattern of the curve of $\mathcal{L}'(\beta)$ is similar to that of the standard normal distribution. In addition, by standardizing the observations, we can obtain the similar curves as (iii)--(iv) of Figure \ref{l_beta_deriv}.
\begin{figure}
	\centering
	\includegraphics[scale=0.45]{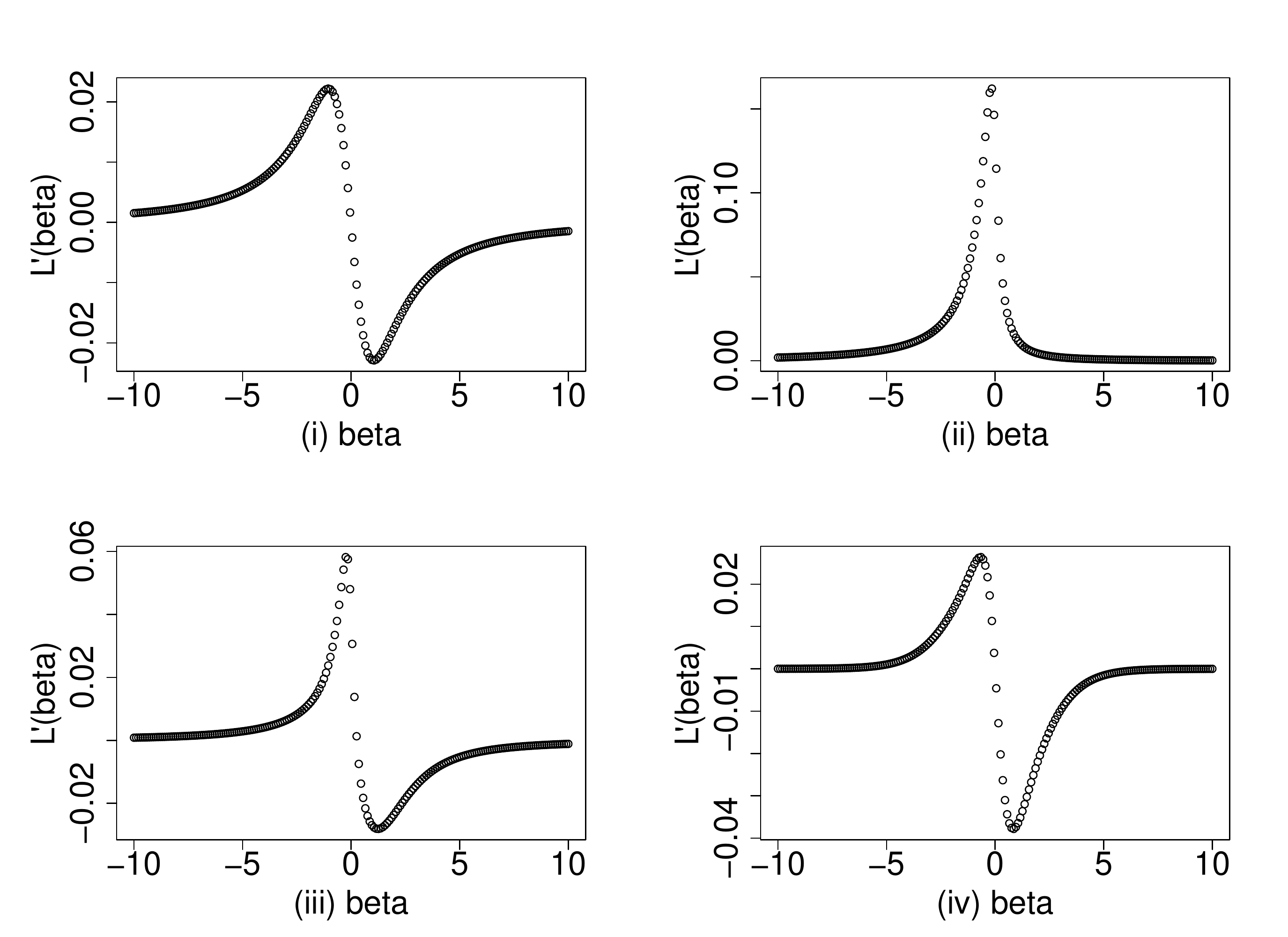}
	\caption{Curves of $\mathcal{L}'(\beta)$ with $\gamma=0.5$ and $k=0.25$ under four different cases. (i) $x_i\sim N(0, 1)$; (ii) $x_i\sim \text{Gamma}(\text{shape}=1, \text{rate}=0.5)$; (iii) $x_i=w_i-\text{median}(w_1,...,w_m)$, where $w_i\sim\text{Gamma}(\text{shape}=1, \text{rate}=0.5)$; (iv) $x_i=w_i-\text{median}(w_1,...,w_m)$, where $w_i\sim\text{Pois}(2)$.}
	\label{l_beta_deriv}
\end{figure}

\section{Intermediate results for Proposition 2}\label{sec-inter-prop2}

\begin{lemma}\label{lem_ulln}
	Assume that $\sup_{1\le i\le m}\mathbb{E}(\|x_i\|^2)<\infty$. Then under Assumptions 2 and 3, we have
	\begin{align*}
	\sup_{\beta\in \mathcal{B}}|\mathbb{P}_ml(\beta)-\mathcal{L}(\beta)|\to0,\quad\text{in probability.}
	\end{align*}
\end{lemma}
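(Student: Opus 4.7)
The plan is to establish this standard uniform law of large numbers via a covering argument, combining pointwise concentration at finitely many centers with a Lipschitz-type equicontinuity bound for $l(\beta;z_i)$ in $\beta$. I would first decompose
\begin{align*}
\sup_{\beta\in\mathcal{B}}|\mathbb{P}_m l(\beta) - \mathcal{L}(\beta)| \le \sup_{\beta\in\mathcal{B}}|\mathbb{P}_m l(\beta) - \mathbb{E}\{\mathbb{P}_m l(\beta)\}| + \sup_{\beta\in\mathcal{B}}|\mathbb{E}\{\mathbb{P}_m l(\beta)\} - \mathcal{L}(\beta)|.
\end{align*}
The second term vanishes deterministically by Assumption \ref{ass_l_beta}, so the task reduces to bounding the centered stochastic process uniformly over the compact set $\mathcal{B}$.

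Next, I would establish the two ingredients needed for the covering argument. For pointwise concentration, observe that the quantity inside the logarithm in $l(\beta;z_i)$ is a convex combination of $b_{0i}=(1-\gamma)^{y_i}\gamma^{1-y_i}$ and $b_{1i}=(1-\gamma^k)^{y_i}\gamma^{k(1-y_i)}$, both of which take values in $\{\gamma,\gamma^k,1-\gamma,1-\gamma^k\}\subset (0,1)$. Hence this argument is bounded between two constants $c_1,c_2$ with $0<c_1<c_2<1$, and so $\sup_{\beta,z_i}|l(\beta;z_i)|\le M$ for some $M=M(\gamma,k)$. By independence (Assumption \ref{ass_indp_z}) together with Chebyshev's inequality, $\mathbb{P}_m l(\beta)-\mathbb{E}\{\mathbb{P}_m l(\beta)\}=O_\mathbb{P}(m^{-1/2})$ at any fixed $\beta$. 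For equicontinuity, a direct chain-rule calculation using $\pi(x_i)=(1+e^{-x_i^\T\beta})^{-1}$ gives
\begin{align*}
\nabla_\beta l(\beta;z_i) = \frac{(b_{0i}-b_{1i})\,\pi(x_i)\{1-\pi(x_i)\}}{\pi(x_i)b_{0i}+\{1-\pi(x_i)\}b_{1i}}\,x_i.
\end{align*}
Since $\pi(x_i)\{1-\pi(x_i)\}\le 1/4$, the denominator is bounded below by $c_1$, and $|b_{0i}-b_{1i}|\le 1$, we conclude $\|\nabla_\beta l(\beta;z_i)\|\le C\|x_i\|$ uniformly in $\beta$, yielding the Lipschitz bound $|l(\beta_1;z_i)-l(\beta_2;z_i)|\le C\|x_i\|\|\beta_1-\beta_2\|$ and hence $|\mathbb{P}_m l(\beta_1)-\mathbb{P}_m l(\beta_2)|\le C\bar{X}_m\|\beta_1-\beta_2\|$ with $\bar{X}_m=m^{-1}\sum_{i=1}^m\|x_i\|$.

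The covering step is then standard. By Jensen's inequality and the second-moment assumption, $\mathbb{E}(\bar{X}_m)\le\{\sup_i\mathbb{E}(\|x_i\|^2)\}^{1/2}<\infty$, so Markov's inequality gives $\bar{X}_m=O_\mathbb{P}(1)$; the analogous bound holds deterministically for $m^{-1}\sum_i\mathbb{E}(\|x_i\|)$. Fix $\eta>0$, pick $K$ large enough that $\mathbb{P}(\bar{X}_m>K)<\eta/3$ uniformly in $m$, then choose $\delta>0$ so that $2CK\delta<\eta/3$, and cover the compact $\mathcal{B}$ by $N=N(\delta)<\infty$ balls of radius $\delta$ with centers $\beta_1,\dots,\beta_N$. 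For any $\beta\in\mathcal{B}$ with nearest center $\beta_k$, the triangle inequality and the Lipschitz bound give, on the event $\{\bar{X}_m\le K\}$,
\begin{align*}
|\mathbb{P}_m l(\beta)-\mathbb{E}\{\mathbb{P}_m l(\beta)\}|\le \max_{1\le k\le N}|\mathbb{P}_m l(\beta_k)-\mathbb{E}\{\mathbb{P}_m l(\beta_k)\}| + 2CK\delta.
\end{align*}
A union bound over the $N$ centers together with the pointwise concentration makes the finite maximum smaller than $\eta/3$ with probability at least $1-\eta/3$ for $m$ large. The only (mild) technical subtlety is the randomness of the Lipschitz constant, which is precisely why the initial truncation $\{\bar{X}_m\le K\}$ is needed; modulo this, the steps are routine.
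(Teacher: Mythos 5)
Your proposal is correct and follows essentially the same route as the paper's proof: split off the deterministic bias term via Assumption 3, use boundedness of $l(\beta;z)$ plus the gradient bound $\|\nabla_\beta l(\beta;z_i)\|\le c\|x_i\|$ to get a $c\|x_i\|$-Lipschitz equicontinuity estimate, and then combine a finite covering of the compact $\mathcal{B}$ with Chebyshev-type pointwise concentration. The only (immaterial) difference is bookkeeping: you control the random Lipschitz sum $m^{-1}\sum_i\|x_i\|$ by truncating on a high-probability event $\{\bar X_m\le K\}$, whereas the paper keeps it as an $o_{\mathbb{P}}(1)+O(\epsilon)$ term via Chebyshev applied to $m^{-1}\sum_i\{\|x_i\|-\mathbb{E}\|x_i\|\}$.
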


\begin{proof}
	Note that 
	\begin{align*}
	\begin{split}
	l\{\beta;(x,1)\}=&\log\left[(1+e^{-x^\T\beta})^{-1}(1-\gamma)+\left\{1-(1+e^{x^\T\beta})^{-1}\right\}(1-\gamma^k)\right],\\
	l\{\beta;(x,0)\}=&\log\left[(1+e^{-x^\T\beta})^{-1}\gamma+\left\{1-(1+e^{x^\T\beta})^{-1}\right\}\gamma^k\right].
	\end{split}
	\end{align*}
	Thus $l(\beta;z)$ is bounded uniformly over $\beta$ and $z$ according to (\ref{eq-l-beta-z}), that is, $L_1\le l(\beta;z)\le L_2$, where $L_1 = \log(1-\gamma^k)\wedge\log(\gamma)$ and $L_2=\log(1-\gamma)\vee\log(\gamma^k)$.
	Note also that for $0<\gamma<1$,
	\begin{align*}
	\| \nabla l\{\beta;(x,1)\}\|&=\left\|\frac{(\gamma^k-\gamma)(1+e^{-x^\T\beta})^{-2}e^{-x^\T\beta}x}{(\gamma^k-\gamma)(1+e^{-x^\T\beta})^{-1}+(1-\gamma^k)}\right\|\le\frac{\gamma^k-\gamma}{1-\gamma^k}\|x\|,\\
	\| \nabla l\{\beta;(x,0)\}\|&=\left\|\frac{(\gamma-\gamma^k)(1+e^{-x^\T\beta})^{-2}e^{-x^\T\beta}x}{(\gamma-\gamma^k)(1+e^{-x^\T\beta})^{-1}+\gamma^k}\right\|\le\frac{\gamma^k-\gamma}{\gamma}\|x\|.
	\end{align*}
	Thus for any $\beta_1,\beta_2\in\mathcal{B}$, we have
	\begin{align*}
	\left|l(\beta_1;z)-l(\beta_2;z)\right|\le c\|x\|\|\beta_1-\beta_2\|,
	\end{align*}
	which implies that $l(\beta;z)$ is $c\|x\|$-Lipschitz continuous in $\beta$. Let $\{\beta_n\}_{n=1}^N$ be an $\epsilon/(2c)$ covering of $\mathcal{B}$, that is, for any $\beta\in\mathcal{B}$, there exists $n$, such that $\|\beta-\beta_n\|\le\epsilon/(2c)$. Then for $i=1,...,m$, we have
	\begin{align*}
	\left|l(\beta;z_i)-l(\beta_n;z_i)\right|\le c\|x_i\|\|\beta-\beta_n\|\le\frac{\epsilon}{2}\|x_i\|,
	\end{align*}
	and hence
	\begin{align*}
	l(\beta_n;z_i)-\frac{\epsilon}{2}\|x_i\|\le l(\beta;z_i)\le l(\beta_n;z_i)+\frac{\epsilon}{2}\|x_i\|,
	\end{align*}
	and
	\begin{align*}
	\mathbb{P}_ml(\beta)-\mathbb{E}\left\{\mathbb{P}_ml(\beta)\right\}\le& \mathbb{P}_ml(\beta_n)+\frac{\epsilon}{2m}\sum_{i=1}^m\|x_i\|-\left[\mathbb{E}\left\{\mathbb{P}_ml(\beta_n)\right\}-\frac{\epsilon}{2m}\sum_{i=1}^m\mathbb{E}\left(\|x_i\|\right)\right]\\
	=&\mathbb{P}_ml(\beta_n)-\mathbb{E}\left\{\mathbb{P}_ml(\beta_n)\right\}+\frac{\epsilon}{2m}\sum_{i=1}^m\left\{\|x_i\|+\mathbb{E}\left(\|x_i\|\right)\right\},\\
	\mathbb{E}\left[\mathbb{P}_ml(\beta)\right]-\mathbb{P}_ml(\beta)\le&
	\mathbb{E}\left\{\mathbb{P}_ml(\beta_n)\right\}+\frac{\epsilon}{2m}\sum_{i=1}^m\mathbb{E}(\|x_i\|)-\left(\mathbb{P}_ml(\beta_n)-\frac{\epsilon}{2m}\sum_{i=1}^m\|x_i\|\right)\\
	=&\mathbb{E}\left\{\mathbb{P}_ml(\beta_n)\right\}-\mathbb{P}_ml(\beta_n)+\frac{\epsilon}{2m}\sum_{i=1}^m\left\{\|x_i\|+\mathbb{E}\left(\|x_i\|\right)\right\}.
	\end{align*}
	Then we have
	\begin{align*}
	\sup_{\beta\in\mathcal{B}}\left|\mathbb{P}_ml(\beta)-\mathbb{E}\left\{\mathbb{P}_ml(\beta)\right\}\right|\le
	&\max_{1\le n\le N}\left|\mathbb{P}_ml(\beta_n)-\mathbb{E}\left\{\mathbb{P}_ml(\beta_n)\right\}\right|+\frac{\epsilon}{2m}\sum_{i=1}^m\left\{\|x_i\|+\mathbb{E}\left(\|x_i\|\right)\right\}\\
	=&\max_{1\le n\le N}\left|\mathbb{P}_ml(\beta_n)-\mathbb{E}\left\{\mathbb{P}_ml(\beta_n)\right\}\right|+\frac{\epsilon}{2m}\sum_{i=1}^m\left\{\|x_i\|-\mathbb{E}\left(\|x_i\|\right)\right\}+\frac{\epsilon}{m}\sum_{i=1}^m\mathbb{E}\left(\|x_i\|\right)\\
	=&o_{\mathbb{P}}(1)+O(\epsilon),
	\end{align*}
	where we have used the fact that $l(\beta;z)$ is bounded, Assumption 2 (which ensures that $x_i$'s are independent), the assumption that $\sup_{1\le i\le m}\mathbb{E}(\|x_i\|^2)<\infty$ and the Chebyshev's inequality. As $\epsilon$ can be arbitrarily small, under Assumption 3, we have
	\begin{align*}
	\sup_{\beta\in \mathcal{B}}|\mathbb{P}_ml(\beta)-\mathcal{L}(\beta)|\le\sup_{\beta\in \mathcal{B}}|\mathbb{P}_ml(\beta)-\mathbb{E}\left\{\mathbb{P}_ml(\beta)\right\}|+\sup_{\beta\in \mathcal{B}}|\mathbb{E}\left\{\mathbb{P}_ml(\beta)\right\}-\mathcal{L}(\beta)|=o_{\mathbb{P}}(1).
	\end{align*}
	~
\end{proof}

\begin{lemma}\label{lem_consistent} 
	Under Assumptions 2--4 and the assumption that $\sup_{1\le i\le m}\mathbb{E}(\|x_i\|^2)<\infty$, we have $\hat\beta \to\beta^{*}$ in probability. Moreover, for $a=0,1$,
	$$\sup_{1\leq j\leq m}\left|\hat\beta(p_{j\to a})- \beta^*\right|\to0,\quad\text{in probability.}$$
\end{lemma}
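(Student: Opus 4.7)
My plan is to obtain both conclusions by a single Wald-style argmax consistency argument built on the uniform law in Lemma \ref{lem_ulln}. The standard consistency $\hat\beta\to\beta^*$ is the classical application, and the uniform-in-$j$ statement is then obtained by observing that each leave-one-out perturbation $p_{j\to a}$ changes the empirical criterion $\mathbb{P}_m l(\cdot)$ by at most $O(1/m)$ uniformly in $\beta$, because $l(\beta;z)$ is uniformly bounded (as established inside the proof of Lemma \ref{lem_ulln}).

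For the first claim, I would fix $\epsilon>0$ and consider the compact set $K_\epsilon=\{\beta\in\mathcal{B}:\|\beta-\beta^*\|\geq\epsilon\}$. Since $\mathcal{L}$ is continuous (Assumption \ref{ass_l_beta}) with unique global maximizer $\beta^*$ over the compact set $\mathcal{B}$ (Assumption \ref{ass_unique_beta}), there exists $\delta>0$ with $\sup_{\beta\in K_\epsilon}\mathcal{L}(\beta)\leq\mathcal{L}(\beta^*)-\delta$. On the event $\{\sup_{\beta\in\mathcal{B}}|\mathbb{P}_m l(\beta)-\mathcal{L}(\beta)|<\delta/3\}$, which has probability tending to one by Lemma \ref{lem_ulln}, the chain
\begin{align*}
\mathcal{L}(\hat\beta)\;\geq\;\mathbb{P}_m l(\hat\beta)-\delta/3\;\geq\;\mathbb{P}_m l(\beta^*)-\delta/3\;\geq\;\mathcal{L}(\beta^*)-2\delta/3
\end{align*}
forces $\hat\beta\notin K_\epsilon$, delivering $\hat\beta\to\beta^*$ in probability.

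For the uniform-in-$j$ claim, set $z_j^{(a)}=(x_j,\mathbb{I}(a>\gamma))$ and write
\begin{align*}
\mathbb{P}_m^{(j,a)} l(\beta)\;:=\;\mathbb{P}_m l(\beta)+\tfrac{1}{m}\bigl\{l(\beta;z_j^{(a)})-l(\beta;z_j)\bigr\},
\end{align*}
so that $\hat\beta(p_{j\to a})$ is the maximizer of $\mathbb{P}_m^{(j,a)} l$. Since $|l(\beta;z)|$ is uniformly bounded by a constant $C$ (using $L_1\leq l(\beta;z)\leq L_2$ from the proof of Lemma \ref{lem_ulln}), the deterministic estimate $\sup_{1\le j\le m}\sup_{\beta\in\mathcal{B}}|\mathbb{P}_m^{(j,a)} l(\beta)-\mathbb{P}_m l(\beta)|\leq 2C/m$ combined with Lemma \ref{lem_ulln} yields
\begin{align*}
\sup_{1\leq j\leq m}\,\sup_{\beta\in\mathcal{B}}\bigl|\mathbb{P}_m^{(j,a)} l(\beta)-\mathcal{L}(\beta)\bigr|\;\to\;0\quad\text{in probability.}
\end{align*}
Applying the same argmax chain as above simultaneously to every $j$, with $\mathbb{P}_m^{(j,a)} l$ in place of $\mathbb{P}_m l$, then produces $\sup_{1\leq j\leq m}\|\hat\beta(p_{j\to a})-\beta^*\|\to 0$ in probability. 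There is no genuine obstacle here: the one point worth emphasizing is that the uniform boundedness of $l$ is precisely what upgrades the ordinary consistency to the uniform-in-$j$ statement at essentially no cost, so Lemma \ref{lem_ulln} carries the entire argument.
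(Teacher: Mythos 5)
Your proposal is correct and follows essentially the same route as the paper's proof: both derive the separation constant $\delta$ from the continuity of $\mathcal{L}$ and the uniqueness of its maximizer on the compact set $\mathcal{B}$, invoke the uniform law of Lemma \ref{lem_ulln} to run the standard argmax chain for $\hat\beta$, and then upgrade to the uniform-in-$j$ statement via the deterministic $O(1/m)$ perturbation bound coming from the uniform boundedness of $l(\beta;z)$. The only cosmetic difference is that the paper bounds $\mathcal{L}(\beta^*)-\mathcal{L}(\hat\beta)$ directly by $o_{\mathbb{P}}(1)$ rather than working on a high-probability event, which is the same argument in a slightly different dress.
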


\begin{proof}
	By Assumptions 3 and 4, for every $\epsilon>0$, we know that there exists a $\delta>0$ such that $\mathcal{L}(\beta)<\mathcal{L}(\beta^*)-\delta$ for $\|\beta-\beta^{*}\|>\epsilon$. Therefore
	\begin{align*}
	\mathbb{P}\left(\|\hat\beta-\beta^*\|>\epsilon\right)&\le \mathbb{P}\left\{\mathcal{L}(\beta^*)-\mathcal{L}(\hat\beta)>\delta\right\}.
	\end{align*}
	To prove $\hat\beta \to\beta^{*}$ in probability, it suffices to show that $\mathcal{L}(\beta^*)-\mathcal{L}(\hat\beta)= o_\mathbb{P}(1)$. To this end, we note that $\mathcal{L}(\beta^*)\ge\mathcal{L}(\hat\beta)$ and
	\begin{align*}
	\mathcal{L}(\beta^*)-\mathcal{L}(\hat\beta)&=\mathcal{L}(\beta^*)-\mathbb{P}_ml(\beta^*)+\mathbb{P}_ml(\beta^*)-\mathbb{P}_ml(\hat\beta)+\mathbb{P}_ml(\hat\beta)-\mathcal{L}(\hat\beta)\\
	&\le \mathcal{L}(\beta^*)-\mathbb{P}_ml(\beta^*)+\sup_{\beta\in \mathcal{B}}|\mathbb{P}_ml(\beta)-\mathcal{L}(\beta)|= o_{\mathbb{P}}(1),
	\end{align*}
	where the inequality follows from the fact that $\hat\beta$ is the maximizer of $\mathbb{P}_ml(\beta)$ and hence $\mathbb{P}_ml(\beta^*)-\mathbb{P}_ml(\hat\beta)\le 0$ and the last equality follows from Lemma \ref{lem_ulln}. To prove the uniform convergence of $\hat\beta(p_{j\to a})$, let
	\begin{align*}
	\mathbb{P}^{j\to a}_ml(\beta)=&\frac{1}{m}\sum_{i\neq j}l(\beta;z_i)+\frac{1}{m}l\{\beta;(x_j,a)\}
	=\mathbb{P}_ml(\beta)+\frac{1}{m}\left[l\{\beta;(x_j,a)\}-l(\beta;z_j)\right],
	\end{align*}
	and $L=L_2-L_1$ with $L_1$ and $L_2$ defined in the proof of Lemma \ref{lem_ulln}. We have
	\begin{align*}
	\sup_{1\le j\le m}\sup_{\beta\in\mathcal{B}}\left|\mathbb{P}^{j\to a}_ml(\beta)-\mathbb{P}_ml(\beta)\right|\le\frac{L}{m}.
	\end{align*}
	Note that
	\begin{align*}
	\mathbb{P}\left\{\sup_{1\le j\le m}\left|\hat\beta(p_{j\to a})- \beta^*\right|>\epsilon\right\}
	\le& \mathbb{P}\left[\inf_{1\le j\le m}\mathcal{L}\{\hat\beta(p_{j\to a})\}<\mathcal{L}(\beta^*)-\delta\right]
	\\=&\mathbb{P}\left(\sup_{1\le j\le m}\left[\mathcal{L}(\beta^*)-\mathcal{L}\{\hat\beta(p_{j\to a})\}\right]>\delta\right).
	\end{align*}
	Then we only need to prove that $\sup_{1\le j\le m}[\mathcal{L}(\beta^*)-\mathcal{L}\{\hat\beta(p_{j\to a})\}]= o_\mathbb{P}(1).$ The proof is completed by noting that $\sup_{1\le j\le m}[\mathcal{L}(\beta^*)-\mathcal{L}\{\hat\beta(p_{j\to a})\}]\ge 0$ and
	\begin{align*}
	&\sup_{1\le j\le m}\left[\mathcal{L}(\beta^*)-\mathcal{L}\{\hat\beta(p_{j\to a})\}\right]\\
	=&\sup_{1\le j\le m}\bigg[\mathcal{L}(\beta^*)-\mathbb{P}_ml(\beta^*)+\mathbb{P}_ml(\beta^*)-\mathbb{P}^{j\to a}_ml(\beta^*)+\mathbb{P}^{j\to a}_ml(\beta^*)-\mathbb{P}^{j\to a}_ml\{\hat\beta(p_{j\to a})\}\\
	&\qquad\qquad+\mathbb{P}^{j\to a}_ml\{\hat\beta(p_{j\to a})\}-\mathbb{P}_ml\{\hat\beta(p_{j\to a})\}+\mathbb{P}_ml\{\hat\beta(p_{j\to a})\}-\mathcal{L}\{\hat\beta(p_{j\to a})\}\bigg]\\
	\le&\mathcal{L}(\beta^*)-\mathbb{P}_ml(\beta^*)+\frac{2L}{m}+\sup_{\beta\in\mathcal{B}}|\mathbb{P}_ml(\beta)-\mathcal{L}(\beta)|=o_\mathbb{P}(1),
	\end{align*}
	where we have used the fact that $\hat\beta(p_{j\to a})$ is the maximizer of $\mathbb{P}^{j\to a}_ml(\beta)$ and hence $\sup_{1\le j\le m}[\mathbb{P}_m^{j\to a}l(\beta^*)-\mathbb{P}^{j\to a}_ml\{\hat\beta(p_{j\to a})\}]\le 0$ and the result from Lemma \ref{lem_ulln}.
\end{proof}

\section{Proofs of Propositions 1 and 2}\label{sec-props12}
\begin{proof}[Proof of Proposition 1]
	We recall some notations defined in the main text and give some new definitions. Let $p_{j\rightarrow a}=(p_1,\dots,p_{j-1},a,p_{j+1},\dots,p_m)\in\mathbb{R}^m$ for $a=0,1$, and $p_{-j}=(p_1,...,p_{j-1},p_{j+1},...,p_m)\in\mathbb{R}^{m-1}.$ We define $\hat\beta(p_{j\to a}), \hat\pi(x_i;p_{j\to a}),\tilde\tau(p_{j\to a}),\hat\tau(p_{j\to a})$ and $\hat t_i(p_{j\to a})$ by setting the $j$th p-value to be equal to $a$ when estimating the corresponding quantities. First, we prove the first inequality. Observe that
	\begin{align*}
	\text{FWER}\leq&  \sum_{j\in\mathcal{M}_0}\mathbb{P}\left(p_j\leq\hat{t}_j\wedge \gamma \right)
	\\ \le & \sum_{j\in\mathcal{M}_0}\mathbb{E}\left\{\mathbb{I}\left(p_j\leq\hat{t}_j\wedge \gamma \right)-\frac{\mathbb{I}(p_j>\gamma)}{1-\gamma}\hat{t}_j\right\}
	+\sum_{i=1}^m \mathbb{E}\left\{\frac{\mathbb{I}(p_i>\gamma)}{1-\gamma}\hat{t}_i\right\}\\
	\le&  \sum_{j\in\mathcal{M}_0}\mathbb{E}\left\{\mathbb{I}\left(p_j\leq\hat{t}_j\wedge \gamma \right)-\frac{\mathbb{I}(p_j>\gamma)}{1-\gamma}\hat{t}_j\right\}+\alpha,
	\end{align*}
	where we have used the fact that $\sum_{i=1}^m\mathbb{I}(p_i>\gamma)\hat{t}_i/(1-\gamma)\leq \alpha.$ Denote by $\mathbb{E}_0$ and $\mathbb{P}_0$ the expectation and probability under the null. If $\{p_i\}\in\mathcal{M}_0$ are mutually independent and are independent with the non-null p-values, then under Assumption 1, we have
	\begin{align*}
	&\mathbb{E}_0\left\{\mathbb{I}\left(p_j\leq\hat{t}_j\wedge \gamma \right)-\frac{\mathbb{I}(p_j>\gamma)}{1-\gamma}\hat{t}_j\mid p_{-j}\right\}\\
	=&\mathbb{E}_0\left\{\mathbb{I}\left(p_j\leq\hat{t}_j\wedge \gamma \right)-\frac{\mathbb{I}(p_j>\gamma)}{1-\gamma}\hat{t}_j\mid p_{-j},p_j>\gamma\right\}\mathbb{P}_0(p_j>\gamma)\\
	&+\mathbb{E}_0\left\{\mathbb{I}\left(p_j\leq\hat{t}_j\wedge \gamma \right)-\frac{\mathbb{I}(p_j>\gamma)}{1-\gamma}\hat{t}_j\mid p_{-j},p_j\le\gamma\right\}\mathbb{P}_0(p_j\le\gamma)\\
	=&\left\{0-\frac{1}{1-\gamma}\hat t_j(p_{j\to 1})\right\}\{1-F_{0j}(
	\gamma)\}+\left[\frac{F_{0j}\{\hat{t}_j(p_{j\to 0})\wedge \gamma\}}{F_{0j}(\gamma)}-0\right]F_{0j}(\gamma)\\
	=&F_{0j}\{\hat{t}_j(p_{j\to 0})\wedge \gamma\}-\frac{1-F_{0j}(
		\gamma)}{1-\gamma}\hat t_j(p_{j\to 1})\\
	\le&\hat{t}_j(p_{j\to 0})\wedge \gamma-\hat t_j(p_{j\to 1}) \le\hat{t}_j(p_{j\to 0})-\hat t_j(p_{j\to 1}),
	\end{align*}
	where the first inequality is due to the assumption that $F_{0j}(t)\leq t.$ Thus, we have
	\begin{align*}
	\text{FWER}
	&\le  \sum_{j\in\mathcal{M}_0}\mathbb{E}\left[\mathbb{E}_0\left\{\mathbb{I}\left(p_j\leq\hat{t}_j\wedge \gamma \right)-\frac{\mathbb{I}(p_j>\gamma)}{1-\gamma}\hat{t}_j\mid p_{-j}\right\}\right] +\alpha \le J_m+\alpha,
	\end{align*}
	where
	\begin{align*}
	J_m=\sum_{j=1}^m\mathbb{E}\left\{\left|\hat{t}_j(p_{j\rightarrow 0})
	-\hat{t}_j(p_{j\rightarrow 1})\right|\right\}.
	\end{align*}
	For the second inequality, 	note that
	\begin{align*}
	\left|\frac{\hat{t}_j(p_{j\rightarrow 0})
		-\hat{t}_j(p_{j\rightarrow 1})}{k^{1/(1-k)}}\right|
	=&\left|\left\{\frac{1-\hat{\pi}(x_j;p_{j\rightarrow 0})}{\hat{\pi}(x_j;p_{j\rightarrow 0})\hat{\tau}(p_{j\rightarrow 0})}\right\}^{1/(1-k)}-
	\left\{\frac{1-\hat{\pi}(x_j;p_{j\rightarrow 1})}{\hat{\pi}(x_j;p_{j\rightarrow 1})\hat{\tau}(p_{j\rightarrow 1})}\right\}^{1/(1-k)}\right| \nonumber
	\\\le&\left|\left\{\frac{1-\hat{\pi}(x_j;p_{j\rightarrow 0})}{\hat{\pi}(x_j;p_{j\rightarrow 0})\hat{\tau}(p_{j\rightarrow 0})}\right\}^{1/(1-k)}-
	\left\{\frac{1-\hat{\pi}(x_j;p_{j\rightarrow 1})}{\hat{\pi}(x_j;p_{j\rightarrow 1})\hat{\tau}(p_{j\rightarrow 0})}\right\}^{1/(1-k)}\right| \nonumber
	\\&+
	\left|\left\{\frac{1-\hat{\pi}(x_j;p_{j\rightarrow 1})}{\hat{\pi}(x_j;p_{j\rightarrow 1})\hat{\tau}(p_{j\rightarrow 0})}\right\}^{1/(1-k)}-
	\left\{\frac{1-\hat{\pi}(x_j;p_{j\rightarrow 1})}{\hat{\pi}(x_j;p_{j\rightarrow 1})\hat{\tau}(p_{j\rightarrow 1})}\right\}^{1/(1-k)}\right| \nonumber
	\\ \leq & \frac{c|\hat{\pi}(x_j;p_{j\rightarrow 0})-\hat{\pi}(x_j;p_{j\rightarrow 1})|}{\hat{\tau}(p_{j\rightarrow 0})^{1/(1-k)}}+c\left|\frac{1}{\hat{\tau}(p_{j\rightarrow 0})^{1/(1-k)}}-\frac{1}{\hat{\tau}(p_{j\rightarrow 1})^{1/(1-k)}}\right|, 
	\end{align*}
	where the last inequality follows by using the mean-value theorem for the function $f(x)=(1/x-1)^{1/(1-k)}$ and the fact that $\hat{\pi}$ is bounded from below by $\varepsilon_1.$ Thus we have
	\begin{align*}
	J_m\le c(I_{m,1}+I_{m,2}),
	\end{align*}
	where
	\begin{align*}
	&I_{m,1}=\mathbb{E}\left\{\sum_{j=1}^m \frac{|\hat{\pi}(x_j;p_{j\rightarrow 0})-\hat{\pi}(x_j;p_{j\rightarrow 1})|}{\hat{\tau}(p_{j\rightarrow 0})^{1/(1-k)}}\right\},
	\\&I_{m,2}=\mathbb{E}\left\{\sum_{j=1}^m\left|\frac{1}{\hat{\tau}(p_{j\rightarrow 0})^{1/(1-k)}}-\frac{1}{\hat{\tau}(p_{j\rightarrow 1})^{1/(1-k)}}\right|\right\}.
	\end{align*}
	Next we derive upper bounds for $I_{m,1}$ and $I_{m,2}$. To deal with $I_{m,1}$, we note that for any $1\leq i,j\leq m$,
	\begin{align*}
	|\hat{\pi}(x_i;p_{j\rightarrow 0})-\hat{\pi}(x_i;p_{j\rightarrow 1})|&=\left|\left\{\frac{1}{1+e^{-x_i^\T \hat\beta(p_{j\to 0})}}\vee \varepsilon_1\right\}\wedge \varepsilon_2-\left\{\frac{1}{1+e^{-x_i^\T \hat\beta(p_{j\to1})}}\vee \varepsilon_1\right\}\wedge \varepsilon_2\right|\\
	&\le\left|\frac{1}{1+e^{-x_i^\T \hat\beta(p_{j\to 0})}}-\frac{1}{1+e^{-x_i^\T \hat\beta(p_{j\to1})}}\right|\\
	&\le  |x_i^\T\{\hat\beta(p_{j\rightarrow 0})-\hat\beta(p_{j\rightarrow 1})\}|,
	\end{align*}
	where the first inequality is due to the Lipschitz continuity of the function $f(x)=(x\vee \epsilon_1)\wedge \epsilon_2$, and the last inequality follows from an application of the mean-value theorem to the function $f(x)=(1+e^{-x})^{-1} $. For the ease of notation, set $\hat{b}(x_i)=\left[\{1-\hat{\pi}(x_i)\}/\hat{\pi}(x_i)\right]^{1/(1-k)}$.
	As $\varepsilon_1\le\hat\pi\le\varepsilon_2$, we have $b_1\le\hat b\le b_2$ for some constants $b_1$ and $b_2$ with $0<b_1\leq b_2$. It is straightforward to see that
	\begin{align*}
	\begin{split}
	\frac{|\hat{\pi}(x_j;p_{j\rightarrow 0})-\hat{\pi}(x_j;p_{j\rightarrow 1})|}{\hat{\tau}(p_{j\rightarrow 0})^{1/(1-k)}}
	\le&\frac{|x_j^\T\{\hat\beta(p_{j\rightarrow 0})-\hat\beta(p_{j\rightarrow 1})\}|}{\left\{k^{1/(1-k)}(1-\gamma)^{-1}\alpha^{-1}\sum_{i\neq j}\mathbb{I}(p_i>\gamma)\hat{b}(x_i;p_{j\rightarrow 0})\right\}\vee\varepsilon^{1/(1-k)}}\\
	\le &\frac{|x_j^\T\{\hat\beta(p_{j\rightarrow 0})-\hat\beta(p_{j\rightarrow 1})\}|}{\left\{c\alpha^{-1}\sum_{i\neq j}\mathbb{I}(p_i>\gamma)\right\}\vee\varepsilon^{1/(1-k)}},
	\end{split}
	\end{align*}	
	For $I_{m,2}$, we notice that
	\begin{align*}
	\begin{split}
	&\left|\frac{1}{\hat{\tau}(p_{j\rightarrow 0})^{1/(1-k)}}-\frac{1}{\hat{\tau}(p_{j\rightarrow 1})^{1/(1-k)}}\right|
	\le\frac{\left|\tilde{\tau}(p_{j\rightarrow 1})^{1/(1-k)}-\tilde{\tau}(p_{j\rightarrow 0})^{1/(1-k)}\right|}{\left\{\tilde{\tau}(p_{j\rightarrow 0})^{1/(1-k)}\vee\varepsilon^{1/(1-k)}\right\}\left\{\tilde{\tau}(p_{j\rightarrow 1})^{1/(1-k)}\vee\varepsilon^{1/(1-k)}\right\}}\\
	=&\frac{k^{1/(1-k)}(1-\gamma)^{-1}\alpha^{-1}\left|\sum_{i\neq j}\mathbb{I}(p_i>\gamma)\{\hat{b}(x_i;p_{j\rightarrow 1})-\hat{b}(x_i;p_{j\rightarrow 0})\}+\hat{b}(x_j;p_{j\rightarrow 1})\right|}{\left\{\tilde{\tau}(p_{j\rightarrow 0})^{1/(1-k)}\vee\varepsilon^{1/(1-k)}\right\}\left\{\tilde{\tau}(p_{j\rightarrow 1})^{1/(1-k)}\vee\varepsilon^{1/(1-k)}\right\}}\\
	\le &\frac{c\alpha^{-1}\sum_{i\neq j}\mathbb{I}(p_i>\gamma)|x_i^\T\{\hat\beta(p_{j\rightarrow 0})-\hat\beta(p_{j\rightarrow 1})\}|+c	\alpha^{-1}}{\left[\left\{c\alpha^{-1}\sum_{i\neq j}\mathbb{I}(p_i>\gamma)\right\}\vee\varepsilon^{1/(1-k)}\right]^2},
	\end{split}
	\end{align*}
	where we used the fact that $|\hat{b}(x_i;p_{j\rightarrow 1})-\hat{b}(x_i;p_{j\rightarrow 0})|\leq c|x_i^\T\{\hat\beta(p_{j\rightarrow 0})-\hat\beta(p_{j\rightarrow 1})\}|$ which follows from the mean-value theorem. Summarizing the above results, we have
	\begin{align*}
	J_m\le c(I_{m,1}+I_{m,2})\le c(J_{m,1}+J_{m,2}),
	\end{align*}
	where
	\begin{align*}
	J_{m,1}=&\sum_{j=1}^m\mathbb{E}\left[\frac{|x_j^\T\{\hat\beta(p_{j\rightarrow 0})-\hat\beta(p_{j\rightarrow 1})\}|}{\left\{c\alpha^{-1}\sum_{i\neq j}\mathbb{I}(p_i>\gamma)\right\}\vee\varepsilon^{1/(1-k)}}\right],\\
	J_{m,2}=&\sum_{j=1}^m\mathbb{E}\left(\frac{\alpha^{-1}\sum_{i\neq j}\mathbb{I}(p_i>\gamma)|x_i^\T\{\hat\beta(p_{j\rightarrow 0})-\hat\beta(p_{j\rightarrow 1})\}|+\alpha^{-1}}{\left[\left\{c\alpha^{-1}\sum_{i\neq j}\mathbb{I}(p_i>\gamma)\right\}\vee\varepsilon^{1/(1-k)}\right]^2}\right).
	\end{align*}
	~
\end{proof}

\begin{proof}[Proof of Proposition 2]
	We divide the proof into four steps. (i) Calculate the difference between the two estimating equations (EE) associated with $\hat\beta(p_{j\rightarrow 1})$ and $\hat\beta(p_{j\rightarrow 0})$; (ii) Perform Taylor expansions to extract the leading terms in the EEs; (iii) Deduce an expansion for $\hat\beta(p_{j\rightarrow 0})-\hat\beta(p_{j\rightarrow 1})$ based on the results in steps (i)--(ii); (iv) Derive the order of the remainder terms involved in the expansion of $\hat\beta(p_{j\rightarrow 0})-\hat\beta(p_{j\rightarrow 1})$.
	
	(i) Recall that we have defined the quasi log-likelihood function
	\begin{align*}
	\sum^{m}_{i=1}\log\left[\pi(x_i)(1-\gamma)^{y_i}\gamma^{1-y_i}+\{1-\pi(x_i)\}(1-\gamma^k)^{y_i}\gamma^{k(1-y_i)}\right],\;\pi(x_i)=(1+e^{-x_i^\T \beta})^{-1}.
	\end{align*}
	For the purpose of analysis, we use the following notation and expression instead,
	\begin{align*}
	\sum^{m}_{i=1}\log\left[q_i(\beta)(1-\gamma)^{y_i}\gamma^{1-y_i}+\{1-q_i(\beta)\}(1-\gamma^k)^{y_i}\gamma^{k(1-y_i)}\right],\;q_i(\beta)=(1+e^{-x_i^\T \beta})^{-1}.
	\end{align*}
	The quasi-MLE $\hat\beta$ satisfies the estimating equation
	\begin{align*}
	\sum_{i=1}^m\frac{\left\{(1-\gamma)^{y_i}\gamma^{1-y_i}-(1-\gamma^k)^{y_i}\gamma^{k(1-y_i)}\right\}\nabla q_i(\hat\beta)}{q_i(\hat\beta)(1-\gamma)^{y_i}\gamma^{1-y_i}+\{1-q_i(\hat\beta)\}(1-\gamma^k)^{y_i}\gamma^{k(1-y_i)}}=0,
	\end{align*}
	where $\nabla q_i(\beta)=x_ie^{x_i^\T\beta}/(1+e^{x_i^\T\beta})^2$. Taking the difference between the estimating equations associated with $\hat\beta(p_{j\rightarrow 1})$ and $\hat\beta(p_{j\rightarrow 0})$, we obtain
	\begin{align*}
	&\sum_{i\neq j}\frac{\left\{(1-\gamma)^{y_i}\gamma^{1-y_i}-(1-\gamma^k)^{y_i}\gamma^{k(1-y_i)}\right\}\nabla q_i\{\hat\beta(p_{j\rightarrow 1})\}}{q_i\{\hat\beta(p_{j\rightarrow 1})\}(1-\gamma)^{y_i}\gamma^{1-y_i}+[1-q_i\{\hat\beta(p_{j\rightarrow 1})\}](1-\gamma^k)^{y_i}\gamma^{k(1-y_i)}}\\
	&-\sum_{i\neq j}\frac{\left\{(1-\gamma)^{y_i}\gamma^{1-y_i}-(1-\gamma^k)^{y_i}\gamma^{k(1-y_i)}\right\}\nabla q_i\{\hat\beta(p_{j\rightarrow 0})\}}{q_i\{\hat\beta(p_{j\rightarrow 0})\}(1-\gamma)^{y_i}\gamma^{1-y_i}+[1-q_i\{\hat\beta(p_{j\rightarrow 0})\}](1-\gamma^k)^{y_i}\gamma^{k(1-y_i)}}\\
	&+\frac{(\gamma^k-\gamma)\nabla q_j\{\hat\beta(p_{j\rightarrow 1})\}}{q_j\{\hat\beta(p_{j\rightarrow 1})\}(1-\gamma)+[1-q_j\{\hat\beta(p_{j\rightarrow 1})\}](1-\gamma^k)}-
	\frac{(\gamma-\gamma^k)\nabla q_j\{\hat\beta(p_{j\rightarrow 0})\}}{q_j\{\hat\beta(p_{j\rightarrow 0})\}\gamma+[1-q_j\{\hat\beta(p_{j\rightarrow 0})\}]\gamma^k}=0.
	\end{align*}
	For the ease of notation, let $b_{0i}=(1-\gamma)^{y_i}\gamma^{1-y_i}$ and $b_{1i}=(1-\gamma^k)^{y_i}\gamma^{k(1-y_i)}$.
	Further define
	\begin{align*}
	&Q_i\{\hat\beta(p_{j\to a})\}=q_i\{\hat\beta(p_{j\rightarrow a})\}b_{0i}+[1-q_i\{\hat\beta(p_{j\rightarrow a})\}]b_{1i},\quad \text{for }a=0,1 \text{ and } i\neq j,\\
	&Q_{j0}\{\hat\beta(p_{j\to 0})\}=q_j\{\hat\beta(p_{j\rightarrow 0})\}\gamma+[1-q_j\{\hat\beta(p_{j\rightarrow 0})\}]\gamma^k,\\
	&Q_{j1}\{\hat\beta(p_{j\to 1})\}=q_j\{\hat\beta(p_{j\rightarrow 1})\}(1-\gamma)+[1-q_j\{\hat\beta(p_{j\rightarrow 1})\}](1-\gamma^k).
	\end{align*}
	Then the above equation can be expressed as
	\begin{align}\label{eq_eed}
	\begin{split}
	&\sum_{i\neq j}\frac{(b_{0i}-b_{1i})\left[\nabla q_i\{\hat\beta(p_{j\rightarrow 1})\}Q_i\{\hat\beta(p_{j\to 0})\}-\nabla q_i\{\hat\beta(p_{j\rightarrow 0})\}Q_i\{\hat\beta(p_{j\to 1})\}\right]}{Q_i\{\hat\beta(p_{j\to 1})\}Q_i\{\hat\beta(p_{j\to 0})\}}+\\
	&(\gamma^k-\gamma)\frac{\nabla q_j\{\hat\beta(p_{j\rightarrow 1})\}Q_{j0}\{\hat\beta(p_{j\to 0})\}+\nabla q_j\{\hat\beta(p_{j\rightarrow 0})\}Q_{j1}\{\hat\beta(p_{j\to 1})\}}{Q_{j1}\{\hat\beta(p_{j\to 1})\}Q_{j0}\{\hat\beta(p_{j\to 0})\}}=0.
	\end{split}
	\end{align}
	
	(ii) We analyze the two terms in (\ref{eq_eed}). For the first term, we observe that
	\begin{align*}
	&\nabla q_i\{\hat\beta(p_{j\rightarrow 1})\}Q_i\{\hat\beta(p_{j\to 0})\}-\nabla q_i\{\hat\beta(p_{j\rightarrow 0})\}Q_i\{\hat\beta(p_{j\to 1})\}\\
	=&\nabla q_i\{\hat\beta(p_{j\rightarrow 1})\}Q_i\{\hat\beta(p_{j\to 0})\}-\nabla q_i\{\hat\beta(p_{j\rightarrow 1})\}Q_i\{\hat\beta(p_{j\to 1})\}+\nabla q_i\{\hat\beta(p_{j\rightarrow 1})\}Q_i\{\hat\beta(p_{j\to 1})\}
	\\&-\nabla q_i\{\hat\beta(p_{j\rightarrow 0})\}Q_i\{\hat\beta(p_{j\to 1})\}\\
	=&\nabla q_i\{\hat\beta(p_{j\rightarrow 1})\}\left[Q_i\{\hat\beta(p_{j\to 0})\}-Q_i\{\hat\beta(p_{j\to 1})\}\right]-Q_i\{\hat\beta(p_{j\to 1})\}\left[\nabla q_i\{\hat\beta(p_{j\rightarrow 0})\}
	-\nabla q_i\{\hat\beta(p_{j\rightarrow 1})\}\right].
	\end{align*}
	By the Taylor expansion, we have
	\begin{align*}
	\nabla q_i\{\hat\beta(p_{j\rightarrow 1})\}&=\nabla q_i(\beta^*)+\nabla^2q_i(\tilde\beta_{1})\{\hat\beta(p_{j\rightarrow 1})-\beta^*\},\\
	Q_i\{\hat\beta(p_{j\to 0})\}-Q_i\{\hat\beta(p_{j\to 1})\}&=(b_{0i}-b_{1i})\left[q_i\{\hat\beta(p_{j\rightarrow 0})\}-q_i\{\hat\beta(p_{j\rightarrow 1})\}\right]\\
	&=(b_{0i}-b_{1i})\left[\nabla q_i(\beta^*)^\T+\left\{\nabla q_i(\tilde\beta_{2})-\nabla q_i(\beta^*)\right\}^\T
	\right]\{\hat\beta(p_{j\rightarrow 0})-\hat\beta(p_{j\rightarrow 1})\}\\
	&=(b_{0i}-b_{1i})\left\{\nabla q_i(\beta^*)^\T+(\tilde\beta_{2}-\beta^*)^\T \nabla^2q_i(\tilde{\beta}_{3})^\T\right\}\{\hat\beta(p_{j\rightarrow 0})-\hat\beta(p_{j\rightarrow 1})\},
	\end{align*}
	where $\nabla^2 q_i(\beta)=x_ix_i^{\T}e^{x_i^\T\beta}(1-e^{x_i^{\T}\beta})/(1+e^{x_i^\T\beta})^3$, and $\tilde\beta_{1},\tilde\beta_{2}$ and $\tilde\beta_{3}$ are the corresponding intermediate values in the mean value theorem, which may vary with $i$ and $j$. We suppress the dependence on $i$ and $j$ for notational simplicity. Note that the mean value theorem does not generally hold for vector-valued function. Here we are applying the mean value theorem to the scalar-valued function $e^{x_i^\T\beta}/(1+e^{x_i^\T\beta})^2$. Using similar technique, we have
	\begin{align*}
	&Q_i\{\hat\beta(p_{j\to 1})\}
	=(b_{0i}-b_{1i})q_i\{\hat\beta(p_{j\rightarrow 1})\}+b_{1i}\\
	=&(b_{0i}-b_{1i})q_i(\beta^*)+b_{1i}+(b_{0i}-b_{1i})\nabla q_i(\bar{\beta}_1)^\T\{\hat\beta(p_{j\rightarrow 1})-\beta^*\}\\
	=&Q_i(\beta^*)+(b_{0i}-b_{1i})\nabla q_i(\bar{\beta}_1)^\T\{\hat\beta(p_{j\rightarrow 1})-\beta^*\}.
	\end{align*}
	Let $\phi_i(\beta)=e^{x_i^\T\beta}(1-e^{x_i^{\T}\beta})/(1+e^{x_i^\T\beta})^3$. Then $\nabla^2q_i(\beta)=\phi_i(\beta)x_ix_i^{\T}$ and $\nabla \phi_i(\beta)=x_ie^{x_i^\T\beta}(1-4e^{x_i^\T\beta}+e^{2x_i^\T\beta})(1+e^{x_i^\T\beta})^4$. Again by the mean value theorem, we have
	\begin{align*}
	&\nabla q_i\{\hat\beta(p_{j\rightarrow 0})\}-\nabla q_i\{\hat\beta(p_{j\rightarrow 1})\}\\
	=&\left[\nabla^2q_i(\beta^*)+\left\{\nabla^2q_i(\bar{\beta}_2)-\nabla^2q_i(\beta^*)\right\}\right]\{\hat\beta(p_{j\rightarrow 0})-\hat\beta(p_{j\rightarrow 1})\}\\
	=&\left[\nabla^2q_i(\beta^*)+\left\{\phi_i(\bar{\beta}_2)-\phi_i(\beta^*)\right\}x_ix_i^\T\right]\{\hat\beta(p_{j\rightarrow 0})-\hat\beta(p_{j\rightarrow 1})\}\\
	=&\left\{\nabla^2q_i(\beta^*)+\nabla \phi_i(\bar\beta_3)^\T(\bar{\beta}_2-\beta^*)x_ix_i^\T\right\}\{\hat\beta(p_{j\rightarrow 0})-\hat\beta(p_{j\rightarrow 1})\},
	\end{align*}
	where $\bar\beta_{1},\bar\beta_{2}$ and $\bar\beta_{3}$ are the corresponding intermediate values and dependent on $i$ and $j$. Using the above expansions, we deduce that
	\begin{align*}
	&\nabla q_i\{\hat\beta(p_{j\rightarrow 1})\}\left[Q_i\{\hat\beta(p_{j\to 0})\}-Q_i\{\hat\beta(p_{j\to 1})\}\right]\\
	=&(b_{0i}-b_{1i})\left[\nabla q_i(\beta^*)+\nabla^2q_i(\tilde\beta_{1})\{\hat\beta(p_{j\rightarrow 1})-\beta^*\}\right]\left\{\nabla q_i(\beta^*)^\T+(\tilde\beta_{2}-\beta^*)^\T \nabla^2q_i(\tilde{\beta}_{3})^\T\right\}\\
	&\{\hat\beta(p_{j\rightarrow 0})-\hat\beta(p_{j\rightarrow 1})\}\\
	=&(b_{0i}-b_{1i})\nabla q_i(\beta^*)\nabla q_i(\beta^*)^\T\{\hat\beta(p_{j\rightarrow 0})-\hat\beta(p_{j\rightarrow 1})\}\\
	&+(b_{0i}-b_{1i})\bigg[\nabla q_i(\beta^*)(\tilde\beta_{2}-\beta^*)^\T \nabla^2q_i(\tilde{\beta}_{3})^\T+\nabla^2q_i(\tilde\beta_{1})\{\hat\beta(p_{j\rightarrow 1})-\beta^*\}\\
	&\qquad\qquad\qquad\left\{\nabla q_i(\beta^*)^\T+(\tilde\beta_{2}-\beta^*)^\T \nabla^2q_i(\tilde{\beta}_{3})^\T\right\}\bigg]\{\hat\beta(p_{j\rightarrow 0})-\hat\beta(p_{j\rightarrow 1})\}\\
	=&(b_{0i}-b_{1i})\nabla q_i(\beta^*)\nabla q_i(\beta^*)^\T\{\hat\beta(p_{j\rightarrow 0})-\hat\beta(p_{j\rightarrow 1})\}+R_{ij}^{(1)}\{\hat\beta(p_{j\rightarrow 0})-\hat\beta(p_{j\rightarrow 1})\},
	\end{align*}
	and
	\begin{align*}
	&Q_i\{\hat\beta(p_{j\to 1})\}\left[\nabla q_i\{\hat\beta(p_{j\rightarrow 0})\}-\nabla q_i\{\hat\beta(p_{j\rightarrow 1})\}\right]\\
	=&\left[Q_i(\beta^*)+(b_{0i}-b_{1i})\nabla q_i(\bar{\beta}_1)^\T\{\hat\beta(p_{j\rightarrow 1})-\beta^*\}\right]\left\{\nabla^2q_i(\beta^*)+\nabla \phi_i(\bar{\beta}_3)^\T(\bar{\beta}_2-\beta^*)x_ix_i^\T\right\}\\
	&\{\hat\beta(p_{j\rightarrow 0})-\hat\beta(p_{j\rightarrow 1})\}\\
	=&Q_i(\beta^*)\nabla^2q_i(\beta^*)\{\hat\beta(p_{j\rightarrow 0})-\hat\beta(p_{j\rightarrow 1})\}\\
	&+\bigg[Q_i(\beta^*)\nabla \phi_i(\bar{\beta}_3)^\T(\bar{\beta}_2-\beta^*)x_ix_i^\T+(b_{0i}-b_{1i})\nabla q_i(\bar{\beta}_1)^\T\{\hat\beta(p_{j\rightarrow 1})-\beta^*\}\\
	&\qquad\left\{\nabla^2q_i(\beta^*)+\nabla \phi_i(\bar{\beta}_3)^\T(\bar{\beta}_2-\beta^*)x_ix_i^\T\right\}\bigg]\{\hat\beta(p_{j\rightarrow 0})-\hat\beta(p_{j\rightarrow 1})\}\\
	=&Q_i(\beta^*)\nabla^2q_i(\beta^*)\{\hat\beta(p_{j\rightarrow 0})-\hat\beta(p_{j\rightarrow 1})\}+R^{(2)}_{ij}\{\hat\beta(p_{j\rightarrow 0})-\hat\beta(p_{j\rightarrow 1})\},
	\end{align*}
	where
	\begin{align*}
	R_{ij}^{(1)}=&(b_{0i}-b_{1i})\bigg[\nabla q_i(\beta^*)(\tilde\beta_{2}-\beta^*)^\T \nabla^2q_i(\tilde{\beta}_{3})^\T\\
	&\qquad\qquad\quad+\nabla^2q_i(\tilde\beta_{1})\{\hat\beta(p_{j\rightarrow 1})-\beta^*\}\left\{\nabla q_i(\beta^*)^\T+(\tilde\beta_{2}-\beta^*)^\T \nabla^2q_i(\tilde{\beta}_{3})^\T\right\}\bigg],\\
	R^{(2)}_{ij}=&Q_i(\beta^*)\nabla \phi_i(\bar{\beta}_3)^\T(\bar{\beta}_2-\beta^*)x_ix_i^\T\\
	&+(b_{0i}-b_{1i})\nabla q_i(\bar{\beta}_1)^\T\{\hat\beta(p_{j\rightarrow 1})-\beta^*\}\left\{\nabla^2q_i(\beta^*)+\nabla \phi_i(\bar{\beta}_3)^\T(\bar{\beta}_2-\beta^*)x_ix_i^\T\right\}.
	\end{align*}
	To deal with the second term of (\ref{eq_eed}), we observe that
	\begin{align*}
	&\nabla q_j\{\hat\beta(p_{j\rightarrow 1})\}Q_{j0}\{\hat\beta(p_{j\to 0})\}\\
	=&\left[\nabla q_j(\beta^*)+\nabla^2q_j(\check\beta_{1})\{\hat\beta(p_{j\rightarrow 1})-\beta^*\}\right]\left[Q_{j0}(\beta^*)+(\gamma-\gamma^k)\nabla q_j(\check\beta_2)^\T\{\hat\beta(p_{j\to 0})-\beta^*\}\right]\\
	=&\nabla q_j(\beta^*)Q_{j0}(\beta^*)+(\gamma-\gamma^k)\nabla q_j(\beta^*)\nabla q_j(\check\beta_2)^\T\{\hat\beta(p_{j\to 0})-\beta^*\}\\
	&+\nabla^2q_j(\check\beta_{1})\{\hat\beta(p_{j\rightarrow 1})-\beta^*\}\left[Q_{j0}(\beta^*)+(\gamma-\gamma^k)\nabla q_j(\check\beta_2)^\T\{\hat\beta(p_{j\to 0})-\beta^*\}\right]\\
	=&\nabla q_j(\beta^*)Q_{j0}(\beta^*)+R^{(3)}_{j},
	\end{align*}
	and
	\begin{align*}
	&\nabla q_j\{\hat\beta(p_{j\rightarrow 0})\}Q_{j1}\{\hat\beta(p_{j\to 1})\}\\
	=&\left[\nabla q_j(\beta^*)+\nabla^2q_j(\check\beta_{3})\{\hat\beta(p_{j\rightarrow 0})-\beta^*\}\right]\left[Q_{j1}(\beta^*)+(\gamma^k-\gamma)\nabla q_j(\check\beta_4)^\T\{\hat\beta(p_{j\to 1})-\beta^*\}\right]\\
	=&\nabla q_j(\beta^*)Q_{j1}(\beta^*)+(\gamma^k-\gamma)\nabla q_j(\beta^*)\nabla q_j(\check\beta_4)^\T\{\hat\beta(p_{j\to 1})-\beta^*\}\\
	&+\nabla^2q_j(\check\beta_{3})\{\hat\beta(p_{j\rightarrow 0})-\beta^*\}\left[Q_{j1}(\beta^*)+(\gamma^k-\gamma)\nabla q_j(\check\beta_4)^\T\{\hat\beta(p_{j\to 1})-\beta^*\}\right]\\
	=&\nabla q_j(\beta^*)Q_{j1}(\beta^*)+R^{(4)}_{j},
	\end{align*}
	where
	\begin{align*}
	R_{j}^{(3)}=&(\gamma-\gamma^k)\nabla q_j(\beta^*)\nabla q_j(\check\beta_2)^\T\{\hat\beta(p_{j\to 0})-\beta^*\}\\
	&+\nabla^2q_j(\check\beta_{1})\{\hat\beta(p_{j\rightarrow 1})-\beta^*\}\left[Q_{j0}(\beta^*)+(\gamma-\gamma^k)\nabla q_j(\check\beta_2)^\T\{\hat\beta(p_{j\to 0})-\beta^*\}\right],\\
	R^{(4)}_{j}=&(\gamma^k-\gamma)\nabla q_j(\beta^*)\nabla q_j(\check\beta_4)^\T\{\hat\beta(p_{j\to 1})-\beta^*\}\\
	&+\nabla^2q_j(\check\beta_{3})\{\hat\beta(p_{j\rightarrow 0})-\beta^*\}\left[Q_{j1}(\beta^*)+(\gamma^k-\gamma)\nabla q_j(\check\beta_4)^\T\{\hat\beta(p_{j\to 1})-\beta^*\}\right],
	\end{align*}
	and $\check\beta_{1},\check\beta_{2},\check\beta_{3},\check\beta_{4}$ are the corresponding intermediate values in the mean value theorem which vary with $i$ and $j$.
	
	(iii) Let $v^{\otimes 2}=vv^\T$ for a vector $v$. Plugging the equations we obtain in step (ii) into (\ref{eq_eed}), we get
	\begin{align}\label{eq_taylor}
	\begin{split}
	&\sum_{i\neq j}\frac{(b_{0i}-b_{1i})\left\{(b_{0i}-b_{1i})\nabla q_i(\beta^*)^{\otimes 2}-Q_i(\beta^*)\nabla^2q_i(\beta^*)\right\}\{\hat\beta(p_{j\rightarrow 0})-\hat\beta(p_{j\rightarrow 1})\}}{Q_i\{\hat\beta(p_{j\to 1})\}Q_i\{\hat\beta(p_{j\to 0})\}}\\
	&+\sum_{i\neq j}\frac{(b_{0i}-b_{1i})(R_{ij}^{(1)}-R_{ij}^{(2)})\{\hat\beta(p_{j\rightarrow 0})-\hat\beta(p_{j\rightarrow 1})\}}{Q_i\{\hat\beta(p_{j\to 1})\}Q_i\{\hat\beta(p_{j\to 0})\}}\\
	&+(\gamma^k-\gamma)\frac{\nabla q_j(\beta^*)Q_{j0}(\beta^*)+\nabla q_j(\beta^*)Q_{j1}(\beta^*)}{Q_{j1}\{\hat\beta(p_{j\to 1})\}Q_{j0}\{\hat\beta(p_{j\to 0})\}}+(\gamma^k-\gamma)\frac{R_{j}^{(3)}+R_{j}^{(4)}}{Q_{j1}\{\hat\beta(p_{j\to 1})\}Q_{j0}\{\hat\beta(p_{j\to 0})\}}=0.
	\end{split}
	\end{align}
	Let
	\begin{align*}
	S_j&=\sum_{i\neq j}\frac{(b_{0i}-b_{1i})\left\{(b_{0i}-b_{1i})\nabla q_i(\beta^*)^{\otimes 2}-Q_i(\beta^*)\nabla^2q_i(\beta^*)\right\}}{Q_i\{\hat\beta(p_{j\to 1})\}Q_i\{\hat\beta(p_{j\to 0})\}},\\
	\widetilde S_j&=\sum_{i\neq j}\frac{(b_{0i}-b_{1i})(R_{ij}^{(1)}-R_{ij}^{(2)})}{Q_i\{\hat\beta(p_{j\to 1})\}Q_i\{\hat\beta(p_{j\to 0})\}},\\
	U_j&=(\gamma^k-\gamma)\frac{\nabla q_j(\beta^*)Q_{j0}(\beta^*)+\nabla q_j(\beta^*)Q_{j1}(\beta^*)}{Q_{j1}\{\hat\beta(p_{j\to 1})\}Q_{j0}\{\hat\beta(p_{j\to 0})\}},\\
	\widetilde U_j&=(\gamma^k-\gamma)\frac{R_{j}^{(3)}+R_{j}^{(4)}}{Q_{j1}\{\hat\beta(p_{j\to 1})\}Q_{j0}\{\hat\beta(p_{j\to 0})\}}.
	\end{align*}
	Then (\ref{eq_taylor}) can be written compactly as
	\begin{align*}
	(S_j+\widetilde S_j)\{\hat\beta(p_{j\rightarrow 0})-\hat\beta(p_{j\rightarrow 1})\}+U_j+\widetilde U_j=0.
	\end{align*}
	Further define
	\begin{align*}
	&S^{*}_j=\sum_{i\neq j}\frac{(b_{0i}-b_{1i})\left\{(b_{0i}-b_{1i})\nabla q_i(\beta^*)^{\otimes 2}-Q_i(\beta^*)\nabla^2q_i(\beta^*)\right\}}{Q_i(\beta^*)^2},\\
	&U^{*}_j=(\gamma^k-\gamma)\frac{\nabla q_j(\beta^*)Q_{j0}(\beta^*)+\nabla q_j(\beta^*)Q_{j1}(\beta^*)}{Q_{j1}(\beta^*)Q_{j0}(\beta^*)}.
	\end{align*}
	Note that
	\begin{align*}
	&Q_i\{\hat\beta(p_{j\to 1})\}=
	Q_i(\beta^*)+(b_{0i}-b_{1i})\nabla q_i(\bar{\beta}_1)^\T\{\hat\beta(p_{j\rightarrow 1})-\beta^*\},\\
	&Q_i\{\hat\beta(p_{j\to 0})\}=
	Q_i(\beta^*)+(b_{0i}-b_{1i})\nabla q_i(\bar{\beta}_0)^\T\{\hat\beta(p_{j\rightarrow 0})-\beta^*\},\\
	&Q_{j1}\{\hat\beta(p_{j\to 1})\}=Q_{j1}(\beta^*)+(\gamma^k-\gamma)\nabla q_j(\check\beta_4)^\T\{\hat\beta(p_{j\to 1})-\beta^*\},\\
	&Q_{j0}\{\hat\beta(p_{j\to 0})\}=Q_{j0}(\beta^*)+(\gamma-\gamma^k)\nabla q_j(\check\beta_2)^\T\{\hat\beta(p_{j\to 0})-\beta^*\},
	\end{align*}
	where $\bar\beta_{0}$ is the corresponding intermediate value that depends on $i$ and $j$, and $\bar\beta_1,\check\beta_4,\check\beta_2$ have been defined before. Thus we obtain
	\begin{align*}
	R_{ij}^{(5)}:=&Q_i(\beta^*)^2-Q_i\{\hat\beta(p_{j\to 1})\}Q_i\{\hat\beta(p_{j\to 0})\}\\
	=&-Q_i(\beta^*)(b_{0i}-b_{1i})\nabla q_i(\bar{\beta}_0)^\T\{\hat\beta(p_{j\rightarrow 0})-\beta^*\}-Q_i(\beta^*)(b_{0i}-b_{1i})\nabla q_i(\bar{\beta}_1)^\T\{\hat\beta(p_{j\rightarrow 1})-\beta^*\}\\
	&-(b_{0i}-b_{1i})^2\nabla q_i(\bar{\beta}_1)^\T\{\hat\beta(p_{j\rightarrow 1})-\beta^*\}\nabla q_i(\bar{\beta}_0)^\T\{\hat\beta(p_{j\rightarrow 0})-\beta^*\},
	\end{align*}
	and
	\begin{align*}
	R_{j}^{(6)}:=&Q_{j1}(\beta^*)Q_{j0}(\beta^*)-Q_{j1}\{\hat\beta(p_{j\to 1})\}Q_{j0}\{\hat\beta(p_{j\to 0})\}\\
	=&-Q_{j1}(\beta^*)(\gamma-\gamma^k)\nabla q_j(\check\beta_2)^\T\{\hat\beta(p_{j\to 0})-\beta^*\}+Q_{j0}(\beta^*)(\gamma-\gamma^k)\nabla q_j(\check\beta_4)^\T\{\hat\beta(p_{j\to 1})-\beta^*\}\\
	&+(\gamma-\gamma^k)^2\nabla q_j(\check\beta_4)^\T\{\hat\beta(p_{j\to 1})-\beta^*\}\nabla q_j(\check\beta_2)^\T\{\hat\beta(p_{j\to 0})-\beta^*\}.
	\end{align*}
	It implies that
	\begin{align*}
	&S_j-S_j^*=\sum_{i\neq j}\frac{(b_{0i}-b_{1i})\left\{(b_{0i}-b_{1i})\nabla q_i(\beta^*)^{\otimes 2}-Q_i(\beta^*)\nabla^2q_i(\beta^*)\right\}R_{ij}^{(5)}}{Q_i\{\hat\beta(p_{j\to 1})\}Q_i\{\hat\beta(p_{j\to 0})\}Q_i(\beta^*)^2},\\
	&U_j-U_j^*=(\gamma^k-\gamma)\frac{\nabla q_j(\beta^*)Q_{j0}(\beta^*)+\nabla q_j(\beta^*)Q_{j1}(\beta^*)}{Q_{j1}\{\hat\beta(p_{j\to 1})\}Q_{j0}\{\hat\beta(p_{j\to 0})\}Q_{j1}(\beta^*)Q_{j0}(\beta^*)}R_{j}^{(6)}.
	\end{align*}
	Combining the above arguments, we obtain
	\begin{align*}
	(S_j^*+\Delta_j)\{\hat\beta(p_{j\rightarrow 0})-\hat\beta(p_{j\rightarrow 1})\}+U_j^*+\Pi_j=0,
	\end{align*}
	where $\Delta_j=S_j-S_j^*+\widetilde S_j$ and $\Pi_j=U_j-U_j^*+\widetilde U_j$, and $\Delta_j, \Pi_j$ are smaller order terms.
	
	(iv) Recall that $q_i(\beta)=(1+e^{-x_i^\T \beta})^{-1}$, $b_{0i}=(1-\gamma)^{y_i}\gamma^{1-y_i}$,  $b_{1i}=(1-\gamma^k)^{y_i}\gamma^{k(1-y_i)}$ and $Q_i(\beta)=q_i(\beta)b_{0i}+\{1-q_i(\beta)\}b_{1i}$. We have uniformly over $i,j$ and $\beta\in\mathcal{B}$,
	\begin{align*}
	&|b_{0i}-b_{1i}|=\gamma^k-\gamma,\\
	&\min\{1-\gamma^k,\gamma\}\le Q_i(\beta)\le\max\{1-\gamma,\gamma^k\},\\
	&\gamma\le Q_{j0}(\beta)=q_j(\beta)\gamma+\{1-q_j(\beta)\}\gamma^k\le\gamma^k,\\
	&1-\gamma^k\le Q_{j1}(\beta)=q_j(\beta)(1-\gamma)+\{1-q_j(\beta)\}(1-\gamma^k)\le 1-\gamma,
	\end{align*}
	and
	\begin{align*}
	&\|\nabla q_i(\beta)\|=\left\|\frac{x_ie^{x_i^\T\beta}}{(1+e^{x_i^\T\beta})^2}\right\|\le c\|x_i\|,\\
	&\|\nabla^2q_i(\beta)\|=\left\|\frac{x_ix_i^{\T}e^{x_i^\T\beta}(1-e^{x_i^{\T}\beta})}{(1+e^{x_i^\T\beta})^3}\right\|\le c\|x_i\|^2,\\
	&\|\nabla \phi_i(\beta)\|=\left\|\frac{x_ie^{x_i^\T\beta}(1-4e^{x_i^\T\beta}+e^{2x_i^\T\beta})}{(1+e^{x_i^\T\beta})^4}\right\|\le c\|x_i\|.
	\end{align*}
	Then we have
	\begin{align*}
	\|S_j-S_j^*\|&\le\sum_{i\neq j}\left\|\frac{(b_{0i}-b_{1i})\left\{(b_{0i}-b_{1i})\nabla q_i(\beta^*)^{\otimes 2}-Q_i(\beta^*)\nabla^2q_i(\beta^*)\right\}R_{ij}^{(5)}}{Q_i\{\hat\beta(p_{j\to 1})\}Q_i\{\hat\beta(p_{j\to 0})\}Q_i(\beta^*)^2}\right\|
	\le c\sum_{i\neq j}\left\|x_i\right\|^2\cdot|R_{ij}^{(5)}|,
	\end{align*}
	and
	\begin{align*}
	\|\widetilde S_j\|&\le \sum_{i\neq j}\left\|\frac{(b_{0i}-b_{1i})(R_{ij}^{(1)}-R_{ij}^{(2)})}{Q_i\{\hat\beta(p_{j\to 1})\}Q_i\{\hat\beta(p_{j\to 0})\}}\right\|\le c\sum_{i\neq j}\left(\|R_{ij}^{(1)}\|+\|R_{ij}^{(2)}\|\right),
	\end{align*}
	where
	\begin{align*}
	&|R_{ij}^{(5)}|\le c\|x_i\|\left\{\|\hat\beta(p_{j\rightarrow 1})-\beta^*\|+\|\hat\beta(p_{j\rightarrow 0})-\beta^*\|\right\}+c\|x_i\|^2\|\hat\beta(p_{j\rightarrow 1})-\beta^*\|\|\hat\beta(p_{j\rightarrow 0})-\beta^*\|,\\
	&\|R_{ij}^{(1)}\|\le c\|x_i\|^3\left\{\|\tilde\beta_{2}-\beta^*\|+\|\hat\beta(p_{j\rightarrow 1})-\beta^*\|\right\}+c\|x_i\|^4\|\tilde\beta_{2}-\beta^*\|\|\hat\beta(p_{j\rightarrow 1})-\beta^*\|,\\
	&\|R_{ij}^{(2)}\|\le c\|x_i\|^3\left\{\|\bar\beta_{2}-\beta^*\|+\|\hat\beta(p_{j\rightarrow 1})-\beta^*\|\right\}+c\|x_i\|^4\|\bar\beta_{2}-\beta^*\|\|\hat\beta(p_{j\rightarrow 1})-\beta^*\|.
	\end{align*}
	Similarly, we have
	\begin{align*}
	\|U_j-U_j^*\|\le c\|x_j\|\|R_{j}^{(6)}\|\quad\text{and}\quad\|\widetilde U_j\|\le c\left(\|R_{j}^{(3)}\|+\|R_{j}^{(4)}\|\right),
	\end{align*}
	where
	\begin{align*}
	&\|R_{j}^{(6)}\|\le c\|x_j\|\left\{\|\hat\beta(p_{j\rightarrow 1})-\beta^*\|+\|\hat\beta(p_{j\rightarrow 0})-\beta^*\|\right\}+c\|x_j\|^2\|\hat\beta(p_{j\rightarrow 1})-\beta^*\|\|\hat\beta(p_{j\rightarrow 0})-\beta^*\|,\\
	&\|R_{j}^{(3)}\|\le c\|x_j\|^2\left\{\|\hat\beta(p_{j\rightarrow 1})-\beta^*\|+\|\hat\beta(p_{j\rightarrow 0})-\beta^*\|\right\}+c\|x_j\|^3\|\hat\beta(p_{j\rightarrow 1})-\beta^*\|\|\hat\beta(p_{j\rightarrow 0})-\beta^*\|,\\
	&\|R_{j}^{(4)}\|\le c\|x_j\|^2\left\{\|\hat\beta(p_{j\rightarrow 1})-\beta^*\|+\|\hat\beta(p_{j\rightarrow 0})-\beta^*\|\right\}+c\|x_j\|^3\|\hat\beta(p_{j\rightarrow 1})-\beta^*\|\|\hat\beta(p_{j\rightarrow 0})-\beta^*\|.
	\end{align*}
	As we assume that $\sup_{1\le i\le m}\mathbb{E}(\|x_i\|^{8})<\infty$, then by  Kolmogorov's strong law of large numbers, we know that $\sup_{1\le j\le m}\sum_{i\neq j}\|x_i\|^3=O_\mathbb{P}(m)$ and $\sup_{1\le j\le m}\sum_{i\neq j}\|x_i\|^4=O_\mathbb{P}(m)$.
	Combining the above inequalities with the result from Lemma \ref{lem_consistent} that $\sup_{1\leq j\leq m}|\hat\beta(p_{j\to a})- \beta^*|=o_\mathbb{P}(1)$ , we deduce that
	\begin{align}
	\begin{split}\label{ineq4}
	&S^{*}_j=\sum_{i\neq j}\frac{(b_{0i}-b_{1i})\left\{(b_{0i}-b_{1i})\nabla q_i(\beta^*)^{\otimes 2}-Q_i(\beta^*)\nabla^2q_i(\beta^*)\right\}}{Q_i(\beta^*)^2}=-\sum_{i\neq j}\nabla^2l(\beta^*;z_i),\\
	&\sup_{1\le j\le m}\|U^{*}_j\|=\sup_{1\le j\le m}\left\|(\gamma^k-\gamma)\frac{\nabla q_j(\beta^*)Q_{j0}(\beta^*)+\nabla q_j(\beta^*)Q_{j1}(\beta^*)}{Q_{j1}(\beta^*)Q_{j0}(\beta^*)}\right\|\le c\sup_{1\le j\le m}\|x_j\|=O_\mathbb{P}(1),
	\end{split}
	\end{align}
	and
	\begin{align}
	\begin{split}\label{ineq1}
	\sup_{1\le j\le m}\|\Delta_j\|\le& \sup_{1\le j\le m}\left(\|S_j-S_j^*\|+\|\widetilde S_j\|\right)
	\le  c\sup_{1\le j\le m}\sum_{i\neq j}\left(\left\|x_i\right\|^2\cdot|R_{ij}^{(5)}|+\|R_{ij}^{(1)}\|+\|R_{ij}^{(2)}\|\right)\\
	\le& c\sup_{1\le j\le m}\left\{\|\hat\beta(p_{j\rightarrow 1})-\beta^*\|+\|\hat\beta(p_{j\rightarrow 0})-\beta^*\|\right\}\sum_{i\neq j}\|x_i\|^3\\
	&+c\sup_{1\le j\le m}\|\hat\beta(p_{j\rightarrow 1})-\beta^*\|\|\hat\beta(p_{j\rightarrow 0})-\beta^*\|\sum_{i\neq j}\|x_i\|^4\\
	=&o_\mathbb{P}(m),
	\end{split}\\
	\begin{split}\label{ineq2}
	\sup_{1\le j\le m}\|\Pi_j\|\le&\sup_{1\le j\le m}\left(\|U_j-U_j^*\|+\|\widetilde U_j\|\right)\le c\sup_{1\le j\le m}\left(\|x_j\|\|R_{j}^{(6)}\|+\|R_{j}^{(3)}\|+\|R_{j}^{(4)}\|\right)\\
	\le& c\sup_{1\le j\le m}\left\{\|\hat\beta(p_{j\rightarrow 1})-\beta^*\|+\|\hat\beta(p_{j\rightarrow 0})-\beta^*\|\right\}\|x_j\|^2\\
	&+c\sup_{1\le j\le m}\|\hat\beta(p_{j\rightarrow 1})-\beta^*\|\|\hat\beta(p_{j\rightarrow 0})-\beta^*\|\|x_j\|^3\\
	=&o_\mathbb{P}(1).
	\end{split}
	\end{align}
	~
\end{proof}

\section{Other intermediate results}\label{sec-interm}

\begin{lemma}\label{tranc_moment}
	For a sequence of independent random variables $\{W_i\}_{i=1}^m$, if $\sup_{1\le i \le m}\mathbb{E}(|W_i|^{q+\epsilon})<\infty$ for some $q\ge 1$ and  any $\epsilon>0$, then we have
	\begin{align*}
	\frac{1}{m}\sup_{1\le i \le m}\mathbb{E}\left\{|W_i|^{u+1}\mathbb{I}(|W_i|\le m)\right\}\to 0,
	\end{align*}
	for any $0< u\le q.$
\end{lemma}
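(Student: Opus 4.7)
The plan is to prove this by a straightforward truncation argument that exploits the fact that $|W_i|\le m$ on the event of interest. The idea is to ``upgrade'' the exponent $u+1$ to the available exponent $q+\epsilon$ by paying a multiplicative factor controlled by $m$. Note that the independence hypothesis plays no role here: the conclusion is a uniform bound on a single-variable integral, so I will work with each $W_i$ separately.

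First I would split into two cases according to whether $u+1 \le q+\epsilon$ or $u+1 > q+\epsilon$. In the first case, the ordinary moment satisfies $\mathbb{E}|W_i|^{u+1} \le 1+\mathbb{E}|W_i|^{q+\epsilon}$, which is uniformly bounded in $i$ by hypothesis, so the truncated moment is bounded and dividing by $m$ yields the $o(1)$ conclusion trivially. In the second case, on $\{|W_i|\le m\}$ I would bound
\begin{align*}
|W_i|^{u+1}\mathbb{I}(|W_i|\le m) \;\le\; m^{u+1-q-\epsilon}\,|W_i|^{q+\epsilon},
\end{align*}
and taking expectations gives
\begin{align*}
\frac{1}{m}\,\mathbb{E}\!\left\{|W_i|^{u+1}\mathbb{I}(|W_i|\le m)\right\} \;\le\; m^{u-q-\epsilon}\,\mathbb{E}|W_i|^{q+\epsilon}.
\end{align*}
The hypothesis $u\le q$ together with $\epsilon>0$ guarantees that the exponent $u-q-\epsilon$ is strictly negative, and $\sup_{1\le i\le m}\mathbb{E}|W_i|^{q+\epsilon}<\infty$ makes the bound uniform in $i$, so the right-hand side tends to zero as $m\to\infty$.

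Both cases can in fact be unified in one line by writing the exponent as $(u+1-q-\epsilon)_+ - 1$, which is at most $-\min(\epsilon,1)<0$ under the stated hypotheses. There is no essential obstacle; the lemma is a routine truncation estimate, and the only point to check carefully is that the strict inequality $u-q-\epsilon<0$ survives the supremum over $i\in\{1,\dots,m\}$, which it does because the bound depends on $i$ only through $\mathbb{E}|W_i|^{q+\epsilon}$.
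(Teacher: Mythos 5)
Your proof is correct, and it is somewhat more elementary than the one in the paper. The paper first reduces to the case $u=q$, then writes the truncated moment via the layer-cake formula $\mathbb{E}\{|W_i|^{q+1}\mathbb{I}(|W_i|\le m)\}=(q+1)\int_0^m\{\mathbb{P}(|W_i|>s)-\mathbb{P}(|W_i|>m)\}s^q\,\mathrm{d}s$ and applies Markov's inequality to the tail probability inside the integral, obtaining the bound $(q+1)\mathbb{E}(|W_i|^{q+\epsilon})m^{1-\epsilon}/(1-\epsilon)$ (which implicitly requires choosing $\epsilon<1$ so that the integral $\int_0^m s^{-\epsilon}\,\mathrm{d}s$ is finite near the origin — harmless, since the hypothesis holds for every $\epsilon>0$). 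You instead dominate the integrand pointwise on the truncation event, $|W_i|^{u+1}\mathbb{I}(|W_i|\le m)\le m^{(u+1-q-\epsilon)_+}\,(1+|W_i|^{q+\epsilon})$, which avoids Fubini and the tail-integral representation entirely and needs no restriction on $\epsilon$. Both routes deliver the same uniform rate $O(m^{-\epsilon})$ (after dividing by $m$, in the case $u=q$), and you are also right that independence is never used; the paper's proof does not use it either.
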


\begin{proof}
	Since $\sup_{1\le i \le m}\mathbb{E}(|W_i|^{q+\epsilon})<\infty$ implies $\sup_{1\le i \le m}\mathbb{E}(|W_i|^{u+\epsilon})<\infty$ for any $u<q$, we only need to show
	\begin{align*}
	\frac{1}{m}\sup_{1\le i \le m}\mathbb{E}\left\{|W_i|^{q+1}\mathbb{I}(|W_i|\le m)\right\}\to 0.
	\end{align*}
	Let $F_i(w)$ be the distribution function of $W_i$. It shows that
	\begin{align*}
	&\mathbb{E}\left\{|W_i|^{q+1}\mathbb{I}(|W_i|\le m)\right\}=\int_{|w|\le m}|w|^{q+1}\mathrm{d}F_i(w)\\
	=&(q+1)\int_{|w|\le m}\int_0^{|w|}s^q\mathrm{d}s\; \mathrm{d}F_i(w)\overset{\text{Fubini}}{=}(q+1)\int_0^m\int_{s<|w|\le m}\mathrm{d}F_i(w)\;s^q\mathrm{d}s\\
	=&(q+1)\int_0^m\left\{\mathbb{P}\left(|W_i|>s\right)-\mathbb{P}\left(|W_i|>m\right)\right\}s^q\mathrm{d}s\\
	\le&(q+1)\int_0^m\mathbb{P}\left(|W_i|>s\right)s^q\mathrm{d}s
	\le (q+1)\int_0^m\mathbb{E}\left(\frac{|W_i|^{q+\epsilon}}{s^{q+\epsilon}}\right)s^q\mathrm{d}s\\
	=&(q+1)\mathbb{E}(|W_i|^{q+\epsilon})\frac{m^{1-\epsilon}}{1-\epsilon},
	\end{align*}
	which directly implies the desired result.
\end{proof}

\begin{lemma}\label{lem_leave_one}
	Consider a sequence of independent random variables $\{W_i\}_{i=1}^m$ with $\mathbb{E}(W_i)=0$ and $\sup_{1\le i \le m}\mathbb{E}(|W_i|^{q+\epsilon})<\infty$ for some $q\ge 2$ and  any $\epsilon>0$. Then for every $t>0$, we have
	$$\mathbb{P}\left(\left|\sum_{i=1}^mW_i\right| >mt\right)=o(m^{1-q}).$$
	Furthermore, we have
	\begin{align*}
	\mathbb{P}\left(\sup_{1\le j\le m}\left|\sum_{i\ne j}W_i\right|>mt\right)=o(m^{1-q}).
	\end{align*}
	If $W_i$'s are not necessarily mean-zero, we get $\sup_{1\le j\le m}\sum_{i\ne j}W_i=O_{\mathbb{P}}(m)$.
\end{lemma}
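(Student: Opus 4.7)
The plan is to reduce the first probability bound to a Rosenthal-type moment inequality after truncation. Set $\widetilde W_i = W_i \mathbb{I}(|W_i|\le m)$. A union bound combined with Markov at exponent $q+\epsilon$ gives $\mathbb{P}(\max_i |W_i|>m) \le \sum_i \mathbb{E}|W_i|^{q+\epsilon}/m^{q+\epsilon} = O(m^{1-q-\epsilon})$, which is $o(m^{1-q})$, so with the desired probability I may replace $\sum_i W_i$ by $\sum_i \widetilde W_i$. Using $\mathbb{E}W_i=0$, I also get
\begin{align*}
|\mathbb{E}\widetilde W_i| = |\mathbb{E}\{W_i\mathbb{I}(|W_i|>m)\}|\le m^{-(q+\epsilon-1)}\mathbb{E}|W_i|^{q+\epsilon},
\end{align*}
so $\sum_i|\mathbb{E}\widetilde W_i| = O(m^{2-q-\epsilon}) = o(1)$ for $q \ge 2$, a quantity easily absorbed into the threshold $mt$.

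For the centered, truncated sum, I would invoke Rosenthal's inequality at exponent $p=2q$:
\begin{align*}
\mathbb{E}\Big|\sum_i(\widetilde W_i - \mathbb{E}\widetilde W_i)\Big|^{2q} \le C_{2q}\Bigg\{\Big(\sum_i\mathbb{E}\widetilde W_i^2\Big)^{q} + \sum_i\mathbb{E}|\widetilde W_i|^{2q}\Bigg\}.
\end{align*}
The variance term is $O(m^q)$ since $\mathbb{E}\widetilde W_i^2\le \mathbb{E}W_i^2=O(1)$. For the $2q$-th moment term, I would combine the truncation $|\widetilde W_i|\le m$ with Lemma \ref{tranc_moment} at $u=q$ to obtain
\begin{align*}
\mathbb{E}|\widetilde W_i|^{2q}\le m^{q-1}\,\mathbb{E}\{|W_i|^{q+1}\mathbb{I}(|W_i|\le m)\}=m^{q-1}\cdot o(m)=o(m^q),
\end{align*}
uniformly in $i$, whence $\sum_i\mathbb{E}|\widetilde W_i|^{2q}=o(m^{q+1})$. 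The Rosenthal bound is therefore $o(m^{q+1})$, and Markov at exponent $2q$ yields $\mathbb{P}(|\sum_i(\widetilde W_i-\mathbb{E}\widetilde W_i)|>mt/2)\le o(m^{q+1})/(mt/2)^{2q}=o(m^{1-q})$. Combining with the truncation and centering errors gives the first bound.

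The supremum bound follows from the crude decomposition $\sup_j|\sum_{i\ne j}W_i| \le |\sum_i W_i| + \max_j|W_j|$, since $\mathbb{P}(\max_j|W_j|>mt/2)$ is controlled at order $o(m^{1-q})$ exactly as in the truncation step. For the non-centered statement, applying the supremum bound to the recentered variables $W_i-\mathbb{E}W_i$ gives $\sup_j|\sum_{i\ne j}(W_i-\mathbb{E}W_i)|=O_{\mathbb{P}}(m)$, while $\sum_i|\mathbb{E}W_i|=O(m)$ because $\sup_i \mathbb{E}|W_i|<\infty$, so $\sup_j |\sum_{i\ne j}W_i|=O_{\mathbb{P}}(m)$ by the triangle inequality.

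The main obstacle is the careful choice of the Rosenthal exponent: at the natural choice $p=q+1$ the variance-type contribution $(\sum_i\mathbb{E}\widetilde W_i^2)^{(q+1)/2}=O(m^{(q+1)/2})$ is \emph{not} $o(m^{q+1})$ when $q\ge 3$, so one cannot upgrade from $O(m^{1-q})$ to $o(m^{1-q})$ without enlarging the exponent. Taking $p=2q$ resolves this while still allowing Lemma \ref{tranc_moment} to bound the $2q$-th truncated moment via one extra factor of $m^{q-1}$. The $\epsilon$-slack in the moment assumption is crucial in both the truncation tail and the centering correction; without it these pieces would only be $O(m^{1-q})$ rather than $o(m^{1-q})$.
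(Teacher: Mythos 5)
Your proposal is correct and follows essentially the same route as the paper's own proof: truncation at level $m$ with a union/Markov bound using the $\epsilon$-slack, a negligible centering correction, Rosenthal's inequality at exponent $2q$ with Lemma \ref{tranc_moment} controlling the $2q$-th truncated moments, the decomposition $\sup_j|\sum_{i\ne j}W_i|\le|\sum_i W_i|+\max_j|W_j|$ for the leave-one-out supremum, and recentering plus $\sum_i\mathbb{E}|W_i|=O(m)$ for the non-mean-zero case. The only differences are cosmetic (you absorb the centering into the threshold rather than splitting off a third term, and apply Lemma \ref{tranc_moment} to $\widetilde W_i$ instead of its centered version), and your closing remark about why exponent $2q$ is needed matches the role it plays in the paper's argument.
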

\begin{remark}
	As seen from the proof, we can replace the condition that $\sup_{1\le i \le m}\mathbb{E}(|W_i|^{q+\epsilon})<\infty$ for some $q\ge 2$ and  any $\epsilon>0$ in Lemma \ref{lem_leave_one} by a weaker condition that $\sup_{1\le i \le m}\mathbb{E}(|W_i|^{q})<\infty$ for some $q\ge 2$  and $\{|W_i|^q\}_{i=1}^m$ are uniformly integrable.
\end{remark}

\begin{proof}
	Let $W_{m,i}=W_i\mathbb{I}(|W_i|\le m)$ and $W'_{m,i}=W_{m,i}-\mathbb{E}(W_{m,i})$. Further define $T_m=\sum_{i=1}^mW_i$, $\widehat{T}_m=\sum_{i=1}^mW_{m,i}$ and $T'_{m}=\sum_{i=1}^mW'_{m,i}$. Note that
	\begin{align*}
	&m^{q-1}\mathbb{P}\left(|T_m|>mt\right)=m^{q-1}\mathbb{P}\left(|T_m-\widehat{T}_m+\widehat{T}_m-T'_m+T'_m|>mt\right)\\
	\le&m^{q-1}\left\{\mathbb{P}\left(|T_m-\widehat{T}_m|>mt/3\right)+\mathbb{P}\left(|\widehat{T}_m-T'_m|>mt/3\right)+\mathbb{P}\left(|T'_m|>mt/3\right)\right\}.
	\end{align*}
	We show the first term in the above inequality converges to 0 by proving
	$\mathbb{P}(T_m\neq\widehat{T}_m)=o(m^{1-q}).$
	To see this, note that
	\begin{align*}
	&m^{q-1}\mathbb{P}\left(T_m\neq \widehat{T}_m\right)\le m^{q-1}\mathbb{P}\left\{\bigcup_{i=1}^m(W_i\neq W_{m,i})\right\}\\
	\le & m^{q-1}\sum_{i=1}^m\mathbb{P}\left(W_i\neq W_{m,i}\right)= m^{q-1}\sum_{i=1}^m\mathbb{P}\left(|W_i|> m\right)\\
	\le&m^{q-1}\sum_{i=1}^m\mathbb{E}\left(\frac{|W_i|^{q+\epsilon}}{m^{q+\epsilon}}\right)
	\le\frac{m\sup_{1\le i \le m}\mathbb{E}(|W_i|^{q+\epsilon})}{m^{1+\epsilon}} \to 0.
	\end{align*}
	To prove $m^{q-1}\mathbb{P}(|\widehat{T}_m-T'_m|>mt/3)\to 0$, we note that
	\begin{align*}
	&\sup_{1\le i \le m}\left|\mathbb{E}(W_{m,i})\right|=\sup_{1\le i \le m}\left|\mathbb{E}\left\{W_i\mathbb{I}(|W_i|\le m)\right\}\right|=\sup_{1\le i \le m}\left|\mathbb{E}\left\{W_i\mathbb{I}(|W_i|> m)\right\}\right|\\
	\le&\sup_{1\le i \le m}\mathbb{E}\left\{\frac{|W_i|^{1+\epsilon}}{m^\epsilon}\mathbb{I}(|W_i|> m)\right\}\le\frac{\sup_{1\le i \le m}\mathbb{E}(|W_i|^{1+\epsilon})}{m^{\epsilon}}\to 0,
	\end{align*}
	and
	\begin{align*}
	&|\widehat{T}_m-T'_m|=\left|\sum_{i=1}^m\mathbb{E}(W_{m,i})\right|\le m\sup_{1\le i \le m}\left|\mathbb{E}(W_{m,i})\right|.
	\end{align*}
	Hence $\mathbb{P}(|\widehat{T}_m-T'_m|>mt/3)=0$ as long as $m$ is large enough. Therefore, to show $\mathbb{P}\left(|T_m|>mt\right)=o(m^{1-q})$, we only need to prove $m^{q-1}\mathbb{P}\left(|T'_m|>mt\right)\to 0.$ Note that
	\begin{align*}
	&m^{q-1}\mathbb{P}\left(|T'_m|>mt\right)\le m^{q-1}\frac{\mathbb{E}(|T'_m|^{2q})}{m^{2q}t^{2q}}\\
	\le&\frac{c}{m^{q+1}t^{2q}}\left[\sum_{i=1}^m\mathbb{E}(|W'_{m,i}|^{2q})+\left\{\sum_{i=1}^m\mathbb{E}(|W'_{m,i}|^2)\right\}^q\right],
	\end{align*}
	where the last inequality follows from Rosenthal’s inequality. For the first term, we have
	\begin{align*}
	&\frac{c}{m^{q+1}t^{2q}}\sum_{i=1}^m\mathbb{E}(|W'_{m,i}|^{2q})\le \frac{2^{q-1}c}{m^2t^{2q}}\sum_{i=1}^m\mathbb{E}(|W'_{m,i}|^{q+1})\\
	\le&\frac{2^{q-1}c}{m^2t^{2q}}\sum_{i=1}^m\mathbb{E}\left[\left\{|W_{m,i}|+|\mathbb{E}(W_{m,i})|\right\}^{q+1}\right]\to 0,
	\end{align*}
	where we have used the result of Lemma \ref{tranc_moment} and the fact that $\sup_{1\le i \le m}\left|\mathbb{E}(W_{m,i})\right|\to 0$. For the second term, we deduce that
	\begin{align*}
	&\frac{c}{m^{q+1}t^{2q}}\left\{\sum_{i=1}^m\mathbb{E}(|W'_{m,i}|^2)\right\}^q\le \frac{c}{m^{q+1}t^{2q}}\left\{\sum_{i=1}^m\mathbb{E}(|W_{m,i}|^2)\right\}^q\\
	\le &\frac{c}{m^{q+1}t^{2q}}\left\{m\sup_{1\le i \le m}\mathbb{E}(|W_{i}|^2)\right\}^q\to 0.
	\end{align*}
	Thus we have proved that $\mathbb{P}\left(|T_m|>mt\right)=o(m^{1-q})$. Furthermore, we have
	\begin{align*}
	\mathbb{P}\left(\sup_{1\le j\le m}\left|\sum_{i\neq j}W_i\right|>mt\right)
	\le&\mathbb{P}\left(|T_m|>mt/2\right)+\mathbb{P}\left(\sup_{1\le j\le m}\left|W_j\right|>mt/2\right)\\
	\le&o(m^{1-q})+\sum_{i=1}^m\frac{\mathbb{E}(|W_i|^q)}{(mt/2)^q}=o(m^{1-q}),
	\end{align*}
	which indicates that $\sup_{1\le j\le m}\sum_{i\ne j}W_i=o_{\mathbb{P}}(m)$.  If $W_i$'s are not necessarily mean-zero, then
	\begin{align*}
	&\left|\sup_{1\le j\le m}\sum_{i\ne j}W_i\right|=\left|\sup_{1\le j\le m}\sum_{i\ne j}\{W_i-\mathbb{E}(W_i)+\mathbb{E}(W_i)\}\right|\\
	\le&\left| \sup_{1\le j\le m}\sum_{i\ne j}\{W_i-\mathbb{E}(W_i)\}\right|+\left|\sup_{1\le j\le m}\sum_{i\ne j}\mathbb{E}(W_i)\right|\le o_{\mathbb{P}}(m)+\sum_{i=1}^m\mathbb{E}(|W_i|)=O_{\mathbb{P}}(m).
	\end{align*}
	~
\end{proof}

\begin{proposition}\label{pro_tail_s}
	Assume that $\sup_{1\le i\le m}\mathbb{E}(\|x_i\|^{q'+\epsilon})<\infty$ for some $q'$ and any $\epsilon>0$. Then we have
	\begin{align*}
	&\sup_{1\leq j\leq m}\mathbb{P}\left\{\left\|\frac{S_j^*}{m}-\mathbb{E}\left(\frac{S_j^*}{m}\right)\right\|>t\right\}=o(m^{1-q}),\quad \text{ if }q'=2q,\\
	&\sup_{1\le j\le m}\mathbb{P}\left\{\left|\frac{\sum_{i\ne j}\|x_i\|^3}{m}-\mathbb{E}\left(\frac{\sum_{i\ne j}\|x_i\|^3}{m}\right)\right|>t\right\}=o(m^{1-q}),\quad \text{ if }q'=3q,\\
	&\sup_{1\le j\le m}\mathbb{P}\left\{\left|\frac{\sum_{i\ne j}\|x_i\|^4}{m}-\mathbb{E}\left(\frac{\sum_{i\ne j}\|x_i\|^4}{m}\right)\right|>t\right\}=o(m^{1-q}),\quad \text{ if }q'=4q,
	\end{align*}
	where $q\ge 2$.
\end{proposition}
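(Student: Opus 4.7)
The plan is to reduce each of the three displays to a direct application of Lemma \ref{lem_leave_one} applied to carefully chosen, centered, independent scalar summands. Assumption \ref{ass_indp_z} supplies the independence of $\{z_i\}$ and hence of any measurable function of each $z_i$. Moreover, a trivial monotonicity gives $\sup_{1\le j\le m}\mathbb{P}(|T_j|>mt)\le \mathbb{P}(\sup_{1\le j\le m}|T_j|>mt)$, so Lemma \ref{lem_leave_one} in fact supplies a strictly stronger form than what is claimed, and the $\sup_j$ can be pulled outside the probability at no cost. It therefore suffices in each case to verify the moment hypothesis $\sup_i\mathbb{E}|W_i|^{q+\epsilon}<\infty$ of Lemma \ref{lem_leave_one}.

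For the second and third displays I would set $W_i=\|x_i\|^3-\mathbb{E}\|x_i\|^3$ and $W_i=\|x_i\|^4-\mathbb{E}\|x_i\|^4$, respectively. By Minkowski's inequality, $\mathbb{E}|W_i|^{q+\epsilon}\le c\,\mathbb{E}\|x_i\|^{s(q+\epsilon)}$ with $s=3$ or $4$; the assumption $\sup_i\mathbb{E}\|x_i\|^{q'+\epsilon'}<\infty$ with $q'=sq$ and any $\epsilon'>0$ supplies this bound after a suitable relabeling of the two $\epsilon$'s. Lemma \ref{lem_leave_one} then yields the $o(m^{1-q})$ tail bound directly.

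For the first display I would begin with the identity $S_j^*=-\sum_{i\ne j}\nabla^2 l(\beta^*;z_i)$, already established in the proof of Proposition \ref{pro_beta01}. Writing $l(\beta;z_i)=\log Q_i(\beta)$ and differentiating twice, one obtains $\nabla^2 l(\beta;z_i)=\nabla^2Q_i(\beta)/Q_i(\beta)-\nabla Q_i(\beta)\nabla Q_i(\beta)^\T/Q_i(\beta)^2$ with $\nabla Q_i=(b_{0i}-b_{1i})\nabla q_i$ and $\nabla^2Q_i=(b_{0i}-b_{1i})\nabla^2q_i$. The deterministic bounds $\|\nabla q_i(\beta)\|\le c\|x_i\|$, $\|\nabla^2 q_i(\beta)\|\le c\|x_i\|^2$ and $Q_i(\beta)\ge \min\{1-\gamma^k,\gamma\}$ recalled in step (iv) of the proof of Proposition \ref{pro_beta01} together yield $\|\nabla^2 l(\beta^*;z_i)\|\le c\|x_i\|^2$. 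Since $S_j^*$ is a $d\times d$ matrix with $d$ fixed, the spectral norm and the entrywise maximum norm are equivalent up to a constant depending only on $d$. I would then apply Lemma \ref{lem_leave_one} to each of the $d^2$ entrywise sequences $W_i^{(k,\ell)}=[\nabla^2 l(\beta^*;z_i)]_{k,\ell}-\mathbb{E}[\nabla^2 l(\beta^*;z_i)]_{k,\ell}$, whose $(q+\epsilon)$-th absolute moment is bounded by $c\,\mathbb{E}\|x_i\|^{2(q+\epsilon)}<\infty$ under $q'=2q$, and take a union bound over the constantly many entries to conclude.

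The only delicate point, and therefore the main (and quite mild) obstacle, is the book-keeping around the two $\epsilon$ parameters: the $\epsilon$ appearing in the moment hypothesis $\sup_i\mathbb{E}\|x_i\|^{q'+\epsilon}<\infty$ of Proposition \ref{pro_tail_s} and the $\epsilon$ appearing in the hypothesis of Lemma \ref{lem_leave_one} must be matched via $\epsilon_{\text{Lemma}}=\epsilon/s$, and one must check that the matrix-to-scalar reduction absorbs only $d$-dependent constants that do not affect the $o(m^{1-q})$ rate. Beyond this routine verification, each of the three claims is an immediate invocation of Lemma \ref{lem_leave_one}.
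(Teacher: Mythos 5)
Your proposal is correct and follows essentially the same route as the paper: reduce the matrix statement to entrywise (the paper passes through the Frobenius norm rather than the entrywise max norm, which is immaterial for fixed $d$), bound the $(q+\epsilon)$-th moment of each centered summand by $c\,\sup_i\mathbb{E}(\|x_i\|^{s(q+\epsilon)})$ using the uniform boundedness of $h(\beta^*;z_i)$ and $Q_i(\beta^*)^{-1}$ (the paper uses H\"older's inequality where you use the deterministic bound $\|\nabla^2 l(\beta^*;z_i)\|\le c\|x_i\|^2$), and then invoke Lemma \ref{lem_leave_one}. The $\epsilon$-relabeling you flag is exactly the book-keeping the paper performs, so no gap remains.
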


\begin{proof}
	Let $\|A\|_F$ be Frobenius norm of a matrix $A$.
	Recall that $S_j^*=-\sum_{i\neq j}\nabla^2l(\beta^*;z_i)=\sum_{i\neq j}h(\beta^*;z_i)x_ix_i^\T$ for some function $h$, where $h(\beta;z)$ is uniformly bounded over $\beta$ and $z$ . For any $t>0$,
	\begin{align*}
	&\mathbb{P}\left\{\left\|\frac{S_j^*}{m}-\mathbb{E}\left(\frac{S_j^*}{m}\right)\right\|>t\right\}\le \mathbb{P}\left\{\left\|\frac{S_j^*}{m}-\mathbb{E}\left(\frac{S_j^*}{m}\right)\right\|_F>t\right\}\\
	\le&\sum_{u=1}^d\sum_{v=1}^d \mathbb{P}\left\{\left|\frac{S_j^*(u,v)}{m}-\mathbb{E}\left(\frac{S_j^*(u,v)}{m}\right)\right|^2>\frac{t^2}{d^2}\right\}\\
	=&\sum_{u=1}^d\sum_{v=1}^d\mathbb{P}\left\{\left|\frac{S_j^*(u,v)}{m}-\mathbb{E}\left(\frac{S_j^*(u,v)}{m}\right)\right|>\frac{t}{d}\right\}\\
	=&\sum_{u=1}^d\sum_{v=1}^d\mathbb{P}\left[\left|\frac{1}{m}\sum_{i\neq j}h(\beta^*;z_i)x_i(u)x_i(v)-\frac{1}{m}\sum_{i\neq j}\mathbb{E}\left\{h(\beta^*;z_i)x_i(u)x_i(v)\right\}\right|>\frac{t}{d}\right],
	\end{align*}
	where $S_j^*(u,v)$ is the $(u,v)$th element of the matrix $S_j^*$. For $q\ge 2$ and any $\epsilon>0$, by $\text{H}\ddot{\text{o}}\text{lder}$ inequality, we have
	\begin{align*}
	\sup_{1\le i\le m}\mathbb{E}\left\{|h(\beta^*;z_{i})x_{i}(u)x_{i}(v)|^{q+\epsilon}\right\}\le& c\sup_{1\le i\le m}\mathbb{E}\left\{|x_i(u)|^{2(q+\epsilon)}\right\}^{\frac{1}{2}}\mathbb{E}\left\{|x_i(v)|^{2(q+\epsilon)}\right\}^{\frac{1}{2}}
	\\ \le& c\sup_{1\le i\le m}\mathbb{E}\left\{\|x_i\|^{2(q+\epsilon)}\right\}.
	\end{align*}
	Then proof is completed by applying the result of Lemma \ref{lem_leave_one}.
\end{proof}

\begin{lemma}\label{lem_dudley}
	If Assumption 2 holds and $\sup_{1\le i\le m}\mathbb{E}(\|x_i\|^{2})<\infty$, then
	\begin{align*}
	\mathbb{E}\left[\sup_{\beta\in\mathcal{B}}\left|\mathbb{P}_ml(\beta)-\mathbb{E}\left\{\mathbb{P}_ml(\beta)\right\}\right|\right]=O(m^{-\frac{1}{2}}).
	\end{align*}
\end{lemma}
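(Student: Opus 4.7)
The plan is to apply symmetrization followed by Dudley's chaining, as suggested by the lemma name. The essential fact I would exploit is that Lemma~\ref{lem_ulln} already established, in its proof, that $l(\beta;z_i)$ is uniformly bounded and $c\|x_i\|$-Lipschitz in $\beta$ on the compact set $\mathcal{B}\subset\mathbb{R}^d$; these are the only structural features of the log-likelihood the argument will need.

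First I would apply the standard symmetrization inequality, valid because $\{z_i\}$ are independent by Assumption~\ref{ass_indp_z}, to bound
\[
\mathbb{E}\!\left[\sup_{\beta\in\mathcal{B}}\left|\mathbb{P}_ml(\beta)-\mathbb{E}\{\mathbb{P}_ml(\beta)\}\right|\right] \le 2\,\mathbb{E}\!\left[\sup_{\beta\in\mathcal{B}}\left|\frac{1}{m}\sum_{i=1}^m\varepsilon_i\, l(\beta;z_i)\right|\right],
\]
with $\{\varepsilon_i\}$ i.i.d.\ Rademacher signs independent of $\{z_i\}$, so that it suffices to control the symmetrized expression.

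Conditional on $\{z_i\}$, the process $\beta\mapsto m^{-1}\sum_i\varepsilon_i l(\beta;z_i)$ has sub-Gaussian increments (by Hoeffding, using boundedness of $l$) with variance proxy $d_m(\beta_1,\beta_2)^2/m$, where $d_m(\beta_1,\beta_2)^2 = m^{-1}\sum_i\{l(\beta_1;z_i)-l(\beta_2;z_i)\}^2$. The Lipschitz estimate gives $d_m(\beta_1,\beta_2) \le c\,\widehat\sigma_m \|\beta_1-\beta_2\|$ with $\widehat\sigma_m^2 = m^{-1}\sum_i\|x_i\|^2$, hence a Euclidean volume count shows the $d_m$-covering number of $\mathcal{B}$ is at most $(3cD\widehat\sigma_m/\eta)^d$ where $D=\operatorname{diam}(\mathcal{B})$. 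Dudley's entropy integral then yields
\[
\mathbb{E}_{\varepsilon}\!\sup_{\beta\in\mathcal{B}}\left|\frac{1}{m}\sum_{i=1}^m\varepsilon_i l(\beta;z_i)\right| \le \frac{c}{\sqrt{m}}\int_0^{c\widehat\sigma_m D}\!\sqrt{d\,\log\!\tfrac{3cD\widehat\sigma_m}{\eta}}\,d\eta \le \frac{c'\,D\sqrt{d}\,\widehat\sigma_m}{\sqrt{m}}.
\]

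Finally I would take outer expectation; Jensen's inequality together with the moment hypothesis $\sup_i\mathbb{E}(\|x_i\|^2)<\infty$ yields $\mathbb{E}(\widehat\sigma_m)\le\sqrt{\mathbb{E}(\widehat\sigma_m^2)} = O(1)$, so combining with the previous display delivers the claimed $O(m^{-1/2})$ rate. The main (mild) obstacle is justifying the chaining in the data-dependent metric $d_m$, which is handled by the Hoeffding-based sub-Gaussian tail on the conditional process holding uniformly because $l$ is bounded; everything else is routine bookkeeping built on top of the Lipschitz estimate already recorded in the proof of Lemma~\ref{lem_ulln}. I note that the cruder approach of a single $\eta$-net plus a union bound would only yield the sub-optimal $O(\sqrt{\log m/m})$ rate, so chaining is genuinely necessary to shave off the logarithmic factor.
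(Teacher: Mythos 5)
Your proposal is correct and follows essentially the same route as the paper's proof: symmetrization with Rademacher signs, the sub-Gaussian increment bound in the data-dependent metric driven by the $c\|x_i\|$-Lipschitz property of $l(\cdot;z_i)$, Dudley's entropy integral over the compact $\mathcal{B}$ with the standard $d\log(1+c/\epsilon)$ covering-number bound, and finally Jensen's inequality with $\sup_i\mathbb{E}(\|x_i\|^2)<\infty$ to control $\mathbb{E}(\widehat\sigma_m)$. No substantive differences.
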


\begin{proof}
	Let $e_i$ be i.i.d Rademacher variables. By symmetrization argument, we have
	\begin{align*}
	\mathbb{E}\left[\sup_{\beta\in\mathcal{B}}\left|\mathbb{P}_ml(\beta)-\mathbb{E}\left\{\mathbb{P}_ml(\beta)\right\}\right|\right]\le\frac{2}{\sqrt{m}}\mathbb{E}\left[\mathbb{E}\left\{\sup_{\beta\in\mathcal{B}}\left|\frac{1}{\sqrt{m}}\sum_{i=1}^me_il(\beta;z_i)\right|\mid z_1,...,z_m\right\}\right].
	\end{align*}
	Define $\mathcal{E}_{m,\beta}=\sum_{i=1}^me_il(\beta;z_i)/\sqrt{m}$. Then
	\begin{align*}
	&\mathbb{E}\left[\exp\{\lambda(\mathcal{E}_{m,\beta_1}-\mathcal{E}_{m,\beta_2})\}\mid z_1,...,z_m\right]
	=\prod_{i=1}^m\mathbb{E}\left(\exp\left[\frac{\lambda}{\sqrt{m}}e_i\left\{l(\beta_1;z_i)-l(\beta_2;z_i)\right\}\right]\mid z_1,...,z_m\right)\\
	\le&\prod_{i=1}^m\exp\left[\frac{\lambda^2}{2m}\left\{l(\beta_1;z_i)-l(\beta_2;z_i)\right\}^2\right]
	=\exp\left[\frac{\lambda^2}{2}\frac{1}{m}\sum_{i=1}^m\left\{l(\beta_1;z_i)-l(\beta_2;z_i)\right\}^2\right]\\
	\le&\exp\left(\frac{\lambda^2}{2}\frac{1}{m}\sum_{i=1}^mc\|x_i\|^2\|\beta_1-\beta_2\|^2\right)=\exp\left(\frac{\lambda^2}{2}\|\beta_1-\beta_2\|^2\rho_m^2\right),
	\end{align*}
	where $\rho_m^2=\sum_{i=1}^mc\|x_i\|^2/m$. Define
	\begin{align*}
	N(\mathcal{B},\|\cdot\|,\epsilon)=\inf\Big\{N:&\text{ there exists a set }\{\beta_i\}_{i=1}^N,\\
	& \text{ such that for any }\beta\in\mathcal{B},\text{ there exists an }i, \text{ s.t. }\|\beta-\beta_i\|\le\epsilon  \Big\}
	\end{align*}
	and $\text{diam}(\mathcal{B})=\sup\{\|\beta_1-\beta_2\|: \beta_1,\beta_2\in\mathcal{B}\}$. By Dudley's entropy integral, we have
	\begin{align*}
	\mathbb{E}\left(\sup_{\beta\in \mathcal{B}}\left|\mathcal{E}_{m,\beta}\right|\mid z_1,...,z_m\right)\le 4\sqrt{2}\int_{0}^{\frac{\text{diam}(\mathcal{B})\rho_m}{2}}\sqrt{\log\left\{N\left(\mathcal{B},\|\cdot\|,\frac{\epsilon}{\rho_m}\right)\right\}}d\epsilon,
	\end{align*}
	Set $u=2\epsilon/\{\text{diam}(\mathcal{B})\rho_m\}$ and note that $\log N(\mathcal{B},\|\cdot\|,\epsilon)\le d\log\{1+2\mathrm{diam}(\mathcal{B})/\epsilon\}$. We deduce that
	\begin{align*}
	&\mathbb{E}\left[\sup_{\beta\in\mathcal{B}}\left|\mathbb{P}_ml(\beta)-\mathbb{E}\left\{\mathbb{P}_ml(\beta)\right\}\right|\right]\le\frac{8\sqrt{2}}{\sqrt{m}}\mathbb{E}\left[\int_{0}^{\frac{\text{diam}(\mathcal{B})\rho_m}{2}}\sqrt{\log\left\{N\left(\mathcal{B},\|\cdot\|,\frac{\epsilon}{\rho_m}\right)\right\}}d\epsilon\right]\\
	=&\frac{8\sqrt{2}}{\sqrt{m}}\mathbb{E}\left[\int_{0}^{1}\sqrt{\log\left\{N\left(\mathcal{B},\|\cdot\|,\frac{\text{diam}(\mathcal{B})u}{2}\right)\right\}}\frac{\text{diam}(\mathcal{B})\rho_m}{2}du\right]\\
	\le&\frac{4\sqrt{2}}{\sqrt{m}}\text{diam}(\mathcal{B})\mathbb{E}\left\{\rho_m\int_{0}^{1}\sqrt{d\log\left(1+\frac{4}{u}\right)}du\right\}
	=cm^{-\frac{1}{2}}\mathbb{E}(\rho_m)=O(m^{-\frac{1}{2}}).
	\end{align*}
	~
\end{proof}

\begin{lemma}\label{lem_sup}
	For two functions $f$ and $g$ defined on the same space $\mathcal{W}$, we have
	\begin{align*}
	\left|\sup_{w\in\mathcal{W}}f(w)-\sup_{w\in\mathcal{W}}g(w)\right|\le\sup_{w\in\mathcal{W}}\left|f(w)-g(w)\right|.
	\end{align*}
\end{lemma}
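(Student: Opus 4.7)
The plan is to prove this by a symmetric two-sided bound, handling each direction by a pointwise inequality followed by taking a supremum. The argument is elementary and self-contained; it does not use any other result from the paper.

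First, I would fix an arbitrary $w \in \mathcal{W}$ and write
\begin{equation*}
f(w) = g(w) + \{f(w) - g(w)\} \le g(w) + |f(w) - g(w)| \le \sup_{w'\in\mathcal{W}} g(w') + \sup_{w'\in\mathcal{W}} |f(w') - g(w')|.
\end{equation*}
Since the right-hand side no longer depends on $w$, I can take the supremum over $w \in \mathcal{W}$ on the left to obtain
\begin{equation*}
\sup_{w\in\mathcal{W}} f(w) - \sup_{w\in\mathcal{W}} g(w) \le \sup_{w\in\mathcal{W}} |f(w) - g(w)|.
\end{equation*}

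Next, I would repeat the same argument with the roles of $f$ and $g$ swapped to obtain
\begin{equation*}
\sup_{w\in\mathcal{W}} g(w) - \sup_{w\in\mathcal{W}} f(w) \le \sup_{w\in\mathcal{W}} |f(w) - g(w)|.
\end{equation*}
Combining the two displays yields the claimed two-sided bound. A minor technicality worth noting is that the inequality should be interpreted in the extended real sense when one or both suprema are infinite; in the finite case, which is the only one used in the applications of this lemma in the paper, the manipulations above are unambiguous. There is no real obstacle here — the only subtlety is being careful that in the first step the right-hand side is already free of $w$ before taking the outer supremum, which is what allows the single inequality to pass from pointwise to uniform.
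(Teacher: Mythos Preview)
Your proof is correct and is essentially the same symmetric pointwise-then-sup argument as the paper's; the paper just phrases the first direction as $\sup_w f(w) - \sup_{w'} g(w') = \sup_w \inf_{w'}\{f(w)-g(w')\} \le \sup_w\{f(w)-g(w)\}$, which is exactly your step of bounding $f(w)$ by $g(w)+|f(w)-g(w)|$ and then taking the supremum.
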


\begin{proof}
	Note that
	\begin{align*}
	&\sup_{w\in\mathcal{W}}f(w)-\sup_{w\in\mathcal{W}}g(w)=\sup_{w\in\mathcal{W}}\inf_{w'\in\mathcal{W}}\left\{f(w)-g(w')\right\}\\
	&\le\sup_{w\in\mathcal{W}}\left\{f(w)-g(w)\right\}\le\sup_{w\in\mathcal{W}}\left|f(w)-g(w)\right|,
	\end{align*}
	and similarly,
	\begin{align*}
	\sup_{w\in\mathcal{W}}g(w)-\sup_{w\in\mathcal{W}}f(w)\le\sup_{w\in\mathcal{W}}\left|f(w)-g(w)\right|,
	\end{align*}
	which completes the proof.
\end{proof}

\begin{proposition}\label{pro_tail_beta}
	Suppose the following conditions are satisfied:\\
	(i) Assumptions 2--4 hold;\\
	(ii) we have $\sup_{1\le i \le m}\mathbb{E}(\|x_i\|^{2})<\infty$;\\
	(iii) for some $\nu>0$, we have $\sup_{\beta\in\mathcal{B}}\left|\mathbb{E}\left\{\mathbb{P}_ml(\beta)\right\}-\mathcal{L}(\beta)\right|=O(m^{-\nu})$; \\
	(iv) the function $\mathcal{L}(\beta)$ is twice continuously differentiable;  \\
	(v) the global maximizer$\beta^*$ is not on the boundary of $\mathcal{B}$; and\\
	(vi) for some $c>0$, we have $\nabla^2 \mathcal{L}(\beta^*)\preceq -cI$, where $I$ is the identity matrix.
	\\Then for $t=O(m^{-\omega})$ with $0<\omega<(\nu/2)\wedge(1/4)$,  we have for $a=0,1$,
	\begin{align*}
	\mathbb{P}\left\{\sup_{1\le j\le m}\|\hat\beta(p_{j\to a})-\beta^*\|>t\right\}\le\exp[-c\{t^4+o(t^4)\}m].
	\end{align*}
\end{proposition}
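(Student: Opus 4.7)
The plan is a standard M-estimator concentration argument combined with McDiarmid's bounded-differences inequality, exploiting the fact, established in the proof of Lemma \ref{lem_ulln}, that $l(\beta;z)$ is uniformly bounded: $L_1\le l\le L_2$ with $L:=L_2-L_1$. First I would use conditions (iv)--(vi) to localize the quadratic behaviour of $\mathcal{L}$ at $\beta^*$. Since $\beta^*$ is interior and a global maximizer, $\nabla\mathcal{L}(\beta^*)=0$, so Taylor expansion together with continuity of $\nabla^2\mathcal{L}$ yields, for any fixed $\eta\in(0,c)$, a $\delta>0$ with
\[
\mathcal{L}(\beta^*)-\mathcal{L}(\beta)\ge\tfrac{c-\eta}{2}\|\beta-\beta^*\|^2\qquad\text{whenever }\|\beta-\beta^*\|\le\delta.
\]
By Assumption \ref{ass_unique_beta} and compactness of $\mathcal{B}$ there exists $\rho>0$ with $\sup_{\|\beta-\beta^*\|>\delta}\mathcal{L}(\beta)\le\mathcal{L}(\beta^*)-\rho$, so for all $t\le\delta$ small enough that $(c-\eta)t^2/2\le\rho$, the event $\{\|\hat\beta(p_{j\to a})-\beta^*\|>t\}$ is contained in $\{\mathcal{L}(\beta^*)-\mathcal{L}(\hat\beta(p_{j\to a}))>(c-\eta)t^2/2\}$.

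Second, I would reduce the uniform-in-$j$ tail bound to a single empirical-process concentration. Writing $\mathbb{P}_m^{j\to a}l(\beta)=\mathbb{P}_ml(\beta)+m^{-1}[l\{\beta;(x_j,a)\}-l(\beta;z_j)]$ as in the proof of Lemma \ref{lem_consistent}, uniform boundedness gives $\sup_{j,a,\beta}|\mathbb{P}_m^{j\to a}l(\beta)-\mathbb{P}_ml(\beta)|\le L/m$. Since $\hat\beta(p_{j\to a})$ maximizes $\mathbb{P}_m^{j\to a}l$, the standard M-estimator sandwich gives
\[
\mathcal{L}(\beta^*)-\mathcal{L}(\hat\beta(p_{j\to a}))\le 2\sup_{\beta\in\mathcal{B}}\!\left|\mathbb{P}_m^{j\to a}l(\beta)-\mathcal{L}(\beta)\right|\le 2G+\tfrac{2L}{m},\qquad G:=\sup_{\beta\in\mathcal{B}}\!\left|\mathbb{P}_ml(\beta)-\mathcal{L}(\beta)\right|,
\]
uniformly in $j$ and in $a\in\{0,1\}$. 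Combined with the previous display this yields, uniformly in $j$,
\[
\mathbb{P}\!\left(\sup_{1\le j\le m}\|\hat\beta(p_{j\to a})-\beta^*\|>t\right)\le \mathbb{P}\!\left(G>\tfrac{(c-\eta)t^2}{4}-\tfrac{L}{m}\right),
\]
so no union bound over $j$ is required; the $m$ leave-one-out estimators are handled by a single concentration inequality.

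Third, I would bound $G\le F+\sup_\beta|\mathbb{E}\{\mathbb{P}_ml(\beta)\}-\mathcal{L}(\beta)|$ with $F:=\sup_\beta|\mathbb{P}_ml(\beta)-\mathbb{E}\{\mathbb{P}_ml(\beta)\}|$. Condition (iii) bounds the deterministic bias by $O(m^{-\nu})$, and Lemma \ref{lem_dudley} gives $\mathbb{E}(F)=O(m^{-1/2})$. For the tail of $F$, Assumption \ref{ass_indp_z} ensures the $z_i$ are independent, and since $|l|\le L$, swapping a single $z_i$ changes $F$ by at most $L/m$; McDiarmid's inequality therefore yields $\mathbb{P}(F>\mathbb{E}(F)+s)\le\exp(-2ms^2/L^2)$. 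For $t=O(m^{-\omega})$ with $0<\omega<(\nu/2)\wedge(1/4)$, choosing $s=(c-\eta)t^2/4-L/m-O(m^{-\nu})-O(m^{-1/2})=(c-\eta)t^2/4-o(t^2)$ (positive for large $m$ precisely because $\omega<(\nu/2)\wedge(1/4)$) gives
\[
\mathbb{P}\!\left(\sup_j\|\hat\beta(p_{j\to a})-\beta^*\|>t\right)\le\exp\!\left\{-\tfrac{2m}{L^2}\Bigl(\tfrac{(c-\eta)t^2}{4}-o(t^2)\Bigr)^{\!2}\right\}=\exp\bigl[-c\{t^4+o(t^4)\}m\bigr].
\]

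The main obstacle, which is modest, is the careful bookkeeping that places the $O(m^{-1/2})$, $O(m^{-\nu})$ and $O(m^{-1})$ corrections inside the $o(t^4)$ absorbed into the exponent; this is exactly why the parameter range $\omega<(\nu/2)\wedge(1/4)$ appears. The rest of the proof is routine once one recognises that (a) uniform boundedness of $l$ makes McDiarmid directly applicable and (b) the $O(1/m)$ perturbation between $\mathbb{P}_m^{j\to a}l$ and $\mathbb{P}_ml$ eliminates the need for any union bound over $j$.
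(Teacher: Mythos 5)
Your proposal is correct and follows essentially the same route as the paper's proof: a local quadratic separation of $\mathcal{L}$ at $\beta^*$ from conditions (iv)--(vi) and Assumption \ref{ass_unique_beta}, the M-estimator sandwich reducing the tail event to concentration of $\sup_{\beta\in\mathcal{B}}|\mathbb{P}_ml(\beta)-\mathbb{E}\{\mathbb{P}_ml(\beta)\}|$, Lemma \ref{lem_dudley} for its expectation, McDiarmid's inequality via the $L/m$ bounded-difference property, and absorption of the $O(m^{-1})$ leave-one-out perturbation so that no union-bound penalty over $j$ is incurred. The bookkeeping placing the $O(m^{-\nu})$, $O(m^{-1/2})$ and $O(m^{-1})$ corrections inside the $o(t^4)$ matches the paper's treatment.
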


\begin{proof}
	Under Assumption 4 and Conditions (iv)--(vi), we know that there exists $\delta$ such that for all $\|\beta-\beta^*\|\le \delta$, we have $\bigtriangledown^2 \mathcal{L}(\beta)\preceq -cI$ for some constant $c>0$. Under Condition (v) and by the Taylor expansion, we have
	\begin{align*}
	\mathcal{L}(\beta) =& \mathcal{L}(\beta^*) + (\beta-\beta^*)^\T\bigtriangledown^2 \mathcal{L}(\tilde\beta)(\beta-\beta^*)/2\\
	\le &\mathcal{L}(\beta^*) - c\|\beta-\beta^*\|^2.
	\end{align*}
	Then for small enough $t$ with $t\le\delta$, we have
	\begin{align*}
	\mathcal{L}(\beta^*)-\sup_{\beta:\|\beta-\beta^*\|>t}\mathcal{L}(\beta)=& \mathcal{L}(\beta^*) - \max\left\{\sup_{\beta:\|\beta-\beta^*\|\ge\delta}\mathcal{L}(\beta),\sup_{\beta:t<\|\beta-\beta^*\|<\delta}\mathcal{L}(\beta)\right\}\\
	\ge &\mathcal{L}(\beta^*) - \max\left\{\mathcal{L}(\beta^*)-c,\mathcal{L}(\beta^*)-ct^2\right\}\\
	\ge&\min\{c,ct^2\}= ct^2.
	\end{align*}
	For any $\beta\in\mathcal{B}$ and $z_1,z_2\in \mathbb{R}^d\times\{0,1\}$, $|l(\beta;z_1)-l(\beta;z_2)|\le L$.
	Applying the result of Lemma \ref{lem_sup}, we have
	\begin{align*}
	&\left|\sup_{\beta\in\mathcal{B}}\bigg|\frac{1}{m}\bigg\{\sum_{i\neq j}l(\beta;z_i)+l(\beta;z_j)\bigg\}-\mathbb{E}\left\{\mathbb{P}_ml(\beta)\right\}\bigg|-\sup_{\beta\in\mathcal{B}}\bigg|\frac{1}{m}\bigg\{\sum_{i\neq j}l(\beta;z_i)+l(\beta;z'_j)\bigg\}-\mathbb{E}\left\{\mathbb{P}_ml(\beta)\right\}\bigg|\right|\\
	\le&\sup_{\beta\in\mathcal{B}}\left|\bigg|\frac{1}{m}\bigg\{\sum_{i\neq j}l(\beta;z_i)+l(\beta;z_j)\bigg\}-\mathbb{E}\left\{\mathbb{P}_ml(\beta)\right\}\bigg|-\bigg|\frac{1}{m}\bigg\{\sum_{i\neq j}l(\beta;z_i)+l(\beta;z'_j)\bigg\}-\mathbb{E}\left\{\mathbb{P}_ml(\beta)\right\}\bigg|\right|\\
	\le&\sup_{\beta\in\mathcal{B}}\left|\frac{l(\beta;z_j)}{m}-\frac{l(\beta;z'_j)}{m}\right|\le\frac{L}{m},
	\end{align*}
	which means that $\sup_{\beta\in\mathcal{B}}|\mathbb{P}_ml(\beta)-\mathbb{E}\{\mathbb{P}_ml(\beta)\}|$ is $L/m$-bounded difference function. Note that
	\begin{align*}
	&\mathbb{P}(\|\hat\beta-\beta^*\|> t)\\
	\le& \mathbb{P}\left\{\sup_{\beta:\|\beta-\beta^*\|>t}\mathbb{P}_ml(\beta)>\mathbb{P}_ml(\beta^*)\right\}\\
	\le&\mathbb{P}\left(\sup_{\beta:\|\beta-\beta^*\|>t}\left[\mathbb{P}_ml(\beta)-\mathbb{E}\left\{\mathbb{P}_ml(\beta)\right\}\right]+\sup_{\beta:\|\beta-\beta^*\|>t}\mathbb{E}\left\{\mathbb{P}_ml(\beta)\right\}>\mathbb{P}_ml(\beta^*)-\mathbb{E}\left\{\mathbb{P}_ml(\beta^*)\right\}+\mathbb{E}\left\{\mathbb{P}_ml(\beta^*)\right\}\right)\\
	=&\mathbb{P}\left(\sup_{\beta:\|\beta-\beta^*\|>t}\left[\mathbb{P}_ml(\beta)-\mathbb{E}\left\{\mathbb{P}_ml(\beta)\right\}\right]-\left[\mathbb{P}_ml(\beta^*)-\mathbb{E}\left\{\mathbb{P}_ml(\beta^*)\right\}\right]>\mathbb{E}\left\{\mathbb{P}_ml(\beta^*)\right\}-\sup_{\beta:\|\beta-\beta^*\|>t}\mathbb{E}\left\{\mathbb{P}_ml(\beta)\right\}\right)\\
	\le&\mathbb{P}\left[2\sup_{\beta\in\mathcal{B}}\left|\mathbb{P}_ml(\beta)-\mathbb{E}\left\{\mathbb{P}_ml(\beta)\right\}\right|>\mathbb{E}\left\{\mathbb{P}_ml(\beta^*)\right\}-\sup_{\beta:\|\beta-\beta^*\|>t}\mathbb{E}\left\{\mathbb{P}_ml(\beta)\right\}\right],
	\end{align*}
	and
	\begin{align*}
	&\mathbb{E}\left\{\mathbb{P}_ml(\beta^*)\right\}-\sup_{\beta:\|\beta-\beta^*\|>t}\mathbb{E}\left\{\mathbb{P}_ml(\beta)\right\}\\
	=&\mathbb{E}\left\{\mathbb{P}_ml(\beta^*)\right\}-\mathcal{L}(\beta^*)+\mathcal{L}(\beta^*)-\sup_{\beta:\|\beta-\beta^*\|>t}\mathcal{L}(\beta)+\sup_{\beta:\|\beta-\beta^*\|>t}\mathcal{L}(\beta)-\sup_{\beta:\|\beta-\beta^*\|>t}\mathbb{E}\left\{\mathbb{P}_ml(\beta)\right\}\\
	\ge&-\left|\mathbb{E}\left\{\mathbb{P}_ml(\beta^*)\right\}-\mathcal{L}(\beta^*)\right|+\mathcal{L}(\beta^*)-\sup_{\beta:\|\beta-\beta^*\|>t}\mathcal{L}(\beta)-\sup_{\beta:\|\beta-\beta^*\|>t}\left|\mathbb{E}\left\{\mathbb{P}_ml(\beta)\right\}-\mathcal{L}(\beta)\right|\\
	\ge& \mathcal{L}(\beta^*)-\sup_{\beta:\|\beta-\beta^*\|>t}\mathcal{L}(\beta)-2\sup_{\beta\in\mathcal{B}}\left|\mathbb{E}\left\{\mathbb{P}_ml(\beta)\right\}-\mathcal{L}(\beta)\right|\\
	\ge&ct^2-2\sup_{\beta\in\mathcal{B}}\left|\mathbb{E}\left\{\mathbb{P}_ml(\beta)\right\}-\mathcal{L}(\beta)\right|,
	\end{align*}
	where we have applied the result of Lemma \ref{lem_sup} in the first inequality. We deduce that
	\begin{align*}
	&\mathbb{P}(\|\hat\beta-\beta^*\|> t)\\
	\le&\mathbb{P}\left[\sup_{\beta\in\mathcal{B}}\left|\mathbb{P}_ml(\beta)-\mathbb{E}\left\{\mathbb{P}_ml(\beta)\right\}\right|> ct^2-\sup_{\beta\in\mathcal{B}}\left|\mathbb{E}\left\{\mathbb{P}_ml(\beta)\right\}-\mathcal{L}(\beta)\right|\right]\\
	=&\mathbb{P}\Bigg(\sup_{\beta\in\mathcal{B}}\left|\mathbb{P}_ml(\beta)-\mathbb{E}\left\{\mathbb{P}_ml(\beta)\right\}\right|-\mathbb{E}\left[\sup_{\beta\in\mathcal{B}}\left|\mathbb{P}_ml(\beta)-\mathbb{E}\left\{\mathbb{P}_ml(\beta)\right\}\right|\right]\\
	&\qquad>ct^2-\sup_{\beta\in\mathcal{B}}\left|\mathbb{E}\left\{\mathbb{P}_ml(\beta)\right\}-\mathcal{L}(\beta)\right|-\mathbb{E}\left[\sup_{\beta\in\mathcal{B}}\left|\mathbb{P}_ml(\beta)-\mathbb{E}\left\{\mathbb{P}_ml(\beta)\right\}\right|\right]\Bigg)\\
	\le&\exp\left[-2\left\{ct^2+O(m^{-\nu})+O(m^{-\frac{1}{2}})\right\}^2m/L^2\right]=\exp[-c\{t^4+o(t^4)\}m],
	\end{align*}
	where we have used the Condition (iii) that $\sup_{\beta\in\mathcal{B}}|\mathbb{E}\{\mathbb{P}_ml(\beta)\}-\mathcal{L}(\beta)|=O(m^{-\nu})$, the result of Lemma \ref{lem_dudley}, the condition that $t=O(m^{-\omega})$ with $0<\omega<(\nu/2)\wedge(1/4)$, and the McDiarmid's inequality. Recall that
	$\sup_{1\le j\le m}\sup_{\beta\in \mathcal{B}}|\mathbb{P}^{j\to a}_ml(\beta)-\mathbb{P}_ml(\beta)|\le L/m$. Then we have
	\begin{align*}
	&\mathbb{P}\{\sup_{1\le j\le m}\|\hat\beta(p_{j\to a})-\beta^*\|> t\}\\
	\le& \mathbb{P}\left[\bigcup_{1\le j\le m}\left\{\sup_{\beta:\|\beta-\beta^*\|>t}\mathbb{P}_m^{j\to a}l(\beta)>\mathbb{P}_m^{j\to a}l(\beta^*)\right\}\right]\\
	\le&\mathbb{P}\left(\bigcup_{1\le j\le m}\left[\sup_{\beta\in\mathcal{B}}\left|\mathbb{P}_m^{j\to a}l(\beta)-\mathbb{E}\left\{\mathbb{P}_m^{j\to a}l(\beta)\right\}\right|> ct^2-\sup_{\beta\in\mathcal{B}}\left|\mathbb{E}\left\{\mathbb{P}_m^{j\to a}l(\beta)\right\}-\mathcal{L}(\beta)\right|\right]\right)\\
	=&\mathbb{P}\Bigg(\bigcup_{1\le j\le m}\Bigg[\sup_{\beta\in\mathcal{B}}\left|\mathbb{P}_m^{j\to a}l(\beta)-\mathbb{P}_ml(\beta)+\mathbb{P}_ml(\beta)-\mathbb{E}\left\{\mathbb{P}_ml(\beta)\right\}+\mathbb{E}\left\{\mathbb{P}_ml(\beta)\right\}-\mathbb{E}\left\{\mathbb{P}_m^{j\to a}l(\beta)\right\}\right|\\
	&\qquad\qquad\quad\;\;> ct^2-\sup_{\beta\in\mathcal{B}}\left|\mathbb{E}\left\{\mathbb{P}_m^{j\to a}l(\beta)\right\}-\mathbb{E}\left\{\mathbb{P}_ml(\beta)\right\}+\mathbb{E}\left\{\mathbb{P}_ml(\beta)\right\}-\mathcal{L}(\beta)\right|\Bigg]\Bigg)\\
	\le&\mathbb{P}\left(\bigcup_{1\le j\le m}\left[\sup_{\beta\in\mathcal{B}}\left|\mathbb{P}_ml(\beta)-\mathbb{E}\left\{\mathbb{P}_ml(\beta)\right\}\right|+\frac{2L}{m}> ct^2-\sup_{\beta\in\mathcal{B}}\left|\mathbb{E}\left\{\mathbb{P}_ml(\beta)\right\}-\mathcal{L}(\beta)\right|-\frac{L}{m}\right]\right)\\
	=&\mathbb{P}\Bigg(\sup_{\beta\in\mathcal{B}}\left|\mathbb{P}_ml(\beta)-\mathbb{E}\left\{\mathbb{P}_ml(\beta)\right\}\right|-\mathbb{E}\left[\sup_{\beta\in\mathcal{B}}\left|\mathbb{P}_ml(\beta)-\mathbb{E}\left\{\mathbb{P}_ml(\beta)\right\}\right|\right]\\
	&\qquad>ct^2-\sup_{\beta\in\mathcal{B}}\left|\mathbb{E}\left\{\mathbb{P}_ml(\beta)\right\}-\mathcal{L}(\beta)\right|-\frac{3L}{m}-\mathbb{E}\left[\sup_{\beta\in\mathcal{B}}\left|\mathbb{P}_ml(\beta)-\mathbb{E}\left\{\mathbb{P}_ml(\beta)\right\}\right|\right]\Bigg)\\
	\le&\exp\left[-2\left\{ct^2+O(m^{-\nu})+O(m^{-1})+O(m^{-\frac{1}{2}})\right\}^2m/L^2\right]=\exp[-c\{t^4+o(t^4)\}m].
	\end{align*}
	~
\end{proof}

\begin{lemma}\label{lem_sum_of bernoulli}
	Denote by $\text{B}(m,p)$ the binomial distribution with $m$ Bernoulli trials and success probability $p$. Consider two sequences of independent Bernoulli random variables $\{V_i\}_{i=1}^m$ and $\{U_i\}_{i=1}^m$, where $V_i\sim \text{B}(1,p_i)$ independently and $U_i\sim^{i.i.d}\text{B}(1,p)$ with $p_i\ge p$ for $i=1,...,m$. Let $W_1=\sum_{i=1}^mV_i$ and $W_2=\sum_{i=1}^mU_i$. Suppose $f$ is non-decreasing and $W_1$ and $W_2$ belong to the domain of $f$ almost surely.
	Then we have
	\begin{align*}
	\mathbb{E}\left\{f(W_1)\right\}\ge \mathbb{E}\left\{f(W_2)\right\}.
	\end{align*}
\end{lemma}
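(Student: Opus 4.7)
The plan is to establish the result via a coupling argument that produces almost-sure stochastic dominance of $W_1$ over $W_2$. The standard trick is to place both sequences on a common probability space using a single array of uniform random variables. Specifically, let $\{\xi_i\}_{i=1}^m$ be independent and identically distributed $\text{Unif}([0,1])$ random variables, and define
\begin{align*}
\widetilde V_i = \mathbb{I}(\xi_i \le p_i), \qquad \widetilde U_i = \mathbb{I}(\xi_i \le p).
\end{align*}
Then $\widetilde V_i \sim \text{B}(1, p_i)$ independently and $\widetilde U_i \sim^{i.i.d.} \text{B}(1, p)$, so $(\widetilde V_1,\dots,\widetilde V_m)$ has the same joint distribution as $(V_1,\dots,V_m)$, and $(\widetilde U_1,\dots,\widetilde U_m)$ has the same joint distribution as $(U_1,\dots,U_m)$.

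The key observation is that since $p_i \ge p$, the event $\{\xi_i \le p\}$ is contained in $\{\xi_i \le p_i\}$, so $\widetilde V_i \ge \widetilde U_i$ pointwise on the common probability space. Summing over $i$ yields $\widetilde W_1 := \sum_{i=1}^m \widetilde V_i \ge \sum_{i=1}^m \widetilde U_i =: \widetilde W_2$ almost surely. Since $f$ is non-decreasing and both $\widetilde W_1$ and $\widetilde W_2$ lie in its domain almost surely, we obtain $f(\widetilde W_1) \ge f(\widetilde W_2)$ almost surely, and taking expectations gives $\mathbb{E}\{f(\widetilde W_1)\} \ge \mathbb{E}\{f(\widetilde W_2)\}$. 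Because $\widetilde W_1 \stackrel{d}{=} W_1$ and $\widetilde W_2 \stackrel{d}{=} W_2$, this is exactly the desired inequality.

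There is no real obstacle here; the whole proof is just the observation that coupling through common uniforms preserves the marginal distributions while enforcing the pointwise order $\widetilde V_i \ge \widetilde U_i$. The only care needed is to note that the joint law of $(\widetilde V_1,\dots,\widetilde V_m)$ agrees with $(V_1,\dots,V_m)$ (which follows from the independence of the $\xi_i$'s) so that the expectation computed under the coupling equals the expectation in the original probability space.
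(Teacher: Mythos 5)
Your proof is correct, and it takes a genuinely different route from the paper's. The paper proves the lemma by induction on the number of indices with $p_i>p$: it replaces one $\text{B}(1,p_{s+1})$ variable by a $\text{B}(1,p)$ variable at a time, writes the resulting change in $\mathbb{E}\{f(\cdot)\}$ explicitly as $(p_{s+1}-p)\left[\mathbb{E}\{f(T+1)\}-\mathbb{E}\{f(T)\}\right]$ where $T$ is the sum of the remaining variables, observes this is nonnegative because $f$ is non-decreasing, and then invokes the induction hypothesis on the remaining sum. Your coupling through common uniforms $\xi_i$ establishes the stronger statement that $W_1$ stochastically dominates $W_2$ via an almost-sure pointwise ordering $\widetilde V_i\ge\widetilde U_i$ on a common probability space; the expectation inequality for any non-decreasing $f$ then follows in one line. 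Your argument is shorter and more conceptual, and it makes transparent that the result is just first-order stochastic dominance of sums of independent Bernoullis; the paper's induction is more pedestrian but entirely self-contained, avoiding the (mild) bookkeeping of verifying that the coupled vectors have the correct joint laws. You do handle that bookkeeping correctly — the independence of the $\xi_i$'s gives the right joint distribution for $(\widetilde V_1,\dots,\widetilde V_m)$, and the domain condition transfers because $\widetilde W_a$ has the same law as $W_a$ — so there is no gap.
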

\begin{proof}
	Suppose there are $n$ $p_i$'s such that $p_i\neq p$. Without loss of generality, we assume $p_i>p$ for $i=1,...,n$ and $p_i=p$ for $i=n+1,...,m$.
	Then the conclusion is obviously true if $n=0$. We use the induction argument. Suppose the conclusion is true if $n=s$. Consider the case $n=s+1$. Let $T=\sum_{i=1}^s\text{B}(1,p_i)+\text{B}(m-s-1,p)$, where $\text{B}(1,p_i)$'s and $\text{B}(m-s-1,p)$ are independent with each other.
	We note that
	\begin{align*}
	&\mathbb{E}\left\{f(W_1)\right\}-\mathbb{E}\left\{f(W_2)\right\}=\mathbb{E}\left\{f\left(\sum_{i=1}^mV_i\right)\right\}-\mathbb{E}\left\{f\left(\sum_{i=1}^mU_i\right)\right\}\\
	=&\mathbb{E}\left[f\left\{\sum_{i=1}^s\text{B}(1,p_i)+\text{B}(1,p_{s+1})+\text{B}(m-s-1,p)\right\}\right]\\
	&-\mathbb{E}\left[f\left\{\sum_{i=1}^s\text{B}(1,p_i)+\text{B}(1,p)+\text{B}(m-s-1,p)\right\}\right]\\
	&+\mathbb{E}\left[f\left\{\sum_{i=1}^s\text{B}(1,p_i)+\text{B}(1,p)+\text{B}(m-s-1,p)\right\}\right]-\mathbb{E}\left[f\left\{\text{B}(m,p)\right\}\right]\\
	=&p_{s+1}\mathbb{E}\left\{f(T+1)\right\}+(1-p_{s+1})\mathbb{E}\left\{f(T)\right\}-p\mathbb{E}\left\{f(T+1)\right\}-(1-p)\mathbb{E}\left\{f(T)\right\}\\
	&+\mathbb{E}\left[f\left\{\sum_{i=1}^s\text{B}(1,p_i)+\text{B}(1,p)+\text{B}(m-s-1,p)\right\}\right]-\mathbb{E}\left[f\left\{\text{B}(m,p)\right\}\right]\\
	=&\left(p_{s+1}-p\right)\left[\mathbb{E}\left\{f(T+1)\right\}-\mathbb{E}\left\{f(T)\right\}\right]\\
	&+\mathbb{E}\left[f\left\{\sum_{i=1}^s\text{B}(1,p_i)+\text{B}(m-s,p)\right\}\right]-\mathbb{E}\left[f\left\{\text{B}(m,p)\right\}\right]\geq 0,
	\end{align*}
	where the first term in the last equation is greater or equal to zero as $f$ is non-decreasing and $p_{s+1}>p$, and the second term is greater or equal to zero by the  induction hypothesis.
\end{proof}

\begin{lemma}\label{lem_singular_value_contin}
	For a matrix $A$ with $\|A\|<\infty$, let $\sigma_{\min}\left(A\right)=\inf_{\|w\|=1}\|Aw\|$. For two matrices $A,B\in\mathbb{R}^{n_1\times n_2}$ and any $\epsilon>0$, there exists $\delta$ such that $|\sigma_{\min}(A)-\sigma_{\min}(B)|\le\epsilon$ as long as $\|A-B\|\le\delta$.
\end{lemma}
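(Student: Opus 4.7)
The plan is to establish the stronger quantitative statement that $\sigma_{\min}$ is $1$-Lipschitz with respect to the operator norm, from which the $\epsilon$-$\delta$ continuity claim follows immediately by taking $\delta=\epsilon$. This avoids any indirect compactness argument and makes the constant explicit, which may be useful elsewhere in the paper.

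The key step is a standard triangle-inequality manipulation at the level of unit vectors. For any $w$ with $\|w\|=1$, the triangle inequality gives
\begin{align*}
\|Aw\|\le \|Bw\|+\|(A-B)w\|\le \|Bw\|+\|A-B\|.
\end{align*}
Since this holds for every unit $w$, I would take the infimum over $\{w:\|w\|=1\}$ on both sides. The right-hand side becomes $\sigma_{\min}(B)+\|A-B\|$ because $\|A-B\|$ does not depend on $w$, while the left-hand infimum is $\sigma_{\min}(A)$. This yields $\sigma_{\min}(A)-\sigma_{\min}(B)\le \|A-B\|$. Swapping the roles of $A$ and $B$ (the argument is symmetric) gives the reverse bound, so $|\sigma_{\min}(A)-\sigma_{\min}(B)|\le \|A-B\|$.

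Given this Lipschitz bound, the lemma is immediate: for any prescribed $\epsilon>0$, choose $\delta=\epsilon$; then $\|A-B\|\le \delta$ forces $|\sigma_{\min}(A)-\sigma_{\min}(B)|\le \epsilon$. There is no real obstacle here — the only subtlety worth noting is that the argument uses the variational characterization $\sigma_{\min}(A)=\inf_{\|w\|=1}\|Aw\|$ directly (valid for rectangular $A$), so one does not need to invoke the full Weyl inequalities for singular values; the triangle inequality at the vector level suffices. The assumption $\|A\|<\infty$ ensures $\sigma_{\min}(A)$ is a well-defined finite number so the subtraction $\sigma_{\min}(A)-\sigma_{\min}(B)$ is meaningful.
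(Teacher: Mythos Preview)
Your proof is correct and yields the same $1$-Lipschitz bound $|\sigma_{\min}(A)-\sigma_{\min}(B)|\le\|A-B\|$ that the paper obtains. The core idea is identical: both arguments use the vector-level triangle inequality $\|Aw\|\le\|Bw\|+\|(A-B)w\|$ together with symmetry in $A,B$. The only difference is cosmetic: the paper first invokes the SVD and continuity of $w\mapsto\|Aw\|$ to argue the infimum is attained at some $w_b$, then applies the triangle inequality at that specific minimizer, whereas you take the infimum directly on both sides of the pointwise inequality and thereby avoid needing the minimizer to exist. Your route is slightly cleaner for that reason, but the two proofs are otherwise the same.
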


\begin{proof}
	Let $A=U\Sigma V^\T$ be the singular value decomposition of the matrix $A$. Then
	\begin{align*}
	\inf_{\|w\|=1}\|Aw\|=\inf_{\|w\|=1}\|U\Sigma V^\T w\|=\inf_{\|Vw\|=1}\|U\Sigma w\|=\inf_{\|w\|=1}\|\Sigma w\|=\sigma_{\min}\left(A\right).
	\end{align*}
	Define the function $A(w)=\|Aw\|$, which is continuous.
	This can be easily proved by noting that $\big|\|Aw_1\|-\|Aw_2\|\big|\le\|A\|\|w_1-w_2\|$. Thus $\inf_{\|w\|=1}\|Aw\|=\min_{\|w\|=1}\|Aw\|$. Let $\sigma_{\min}(A)=a$ and $\sigma_{\min}(B)=b$. Further set $w_a=\arg\min_{\|w\|=1}\|Aw\|$ and $w_b=\arg\min_{\|w\|=1}\|Bw\|$.
	For any $\epsilon>0$ and $\|A-B\|\le\epsilon$, we have
	\begin{align*}
	a\le\|Aw_b\|=\|(A-B+B)w_b\|\le \|A-B\|\|w_b\|+\|Bw_b\|\le\epsilon+b,
	\end{align*}
	and similarly $b\le\epsilon+a$, which leads to $|a-b|\le\epsilon$.
\end{proof}

\section{Proof of Theorem 1}\label{sec-theorem1}

\begin{proof}
	
	By Proposition 1 and the Cauchy-Schwarz inequality, we have
	\begin{align*}
	J_{m,1}=&\sum_{j=1}^m\mathbb{E}\left[\frac{|x_j^\T\{\hat\beta(p_{j\rightarrow 0})-\hat\beta(p_{j\rightarrow 1})\}|}{\left\{c\alpha^{-1}\sum_{i\neq j}\mathbb{I}(p_i>\gamma)\right\}\vee\varepsilon^{1/(1-k)}}\right]\\
	\le &\sum_{j=1}^m\left(\mathbb{E}\left[\left|x_j^\T\{\hat\beta(p_{j\rightarrow 0})-\hat\beta(p_{j\rightarrow 1})\}\right|^2\right]\right)^{1/2}\left\{\mathbb{E}\left(\left[\left\{c\alpha^{-1}\sum_{i\neq j}\mathbb{I}(p_i>\gamma)\right\}\vee\varepsilon^{1/(1-k)}\right]^{-2}\right)\right\}^{1/2}\\
	\le&\sum_{j=1}^m\left\{\mathbb{E}(\|x_j\|^4)\right\}^{\frac{1}{4}}\left[\mathbb{E}\left\{\left\|\hat\beta(p_{j\rightarrow 0})-\hat\beta(p_{j\rightarrow 1})\right\|^4\right\}\right]^{\frac{1}{4}}\left\{\mathbb{E}\left(\left[\left\{c\alpha^{-1}\sum_{i\neq j}\mathbb{I}(p_i>\gamma)\right\}\vee\varepsilon^{1/(1-k)}\right]^{-2}\right)\right\}^{\frac{1}{2}},
	\end{align*}
	and
	\begin{align*}
	J_{m,2}=&\sum_{j=1}^m\mathbb{E}\left(\frac{\alpha^{-1}\sum_{i\neq j}\mathbb{I}(p_i>\gamma)|x_i^\T\{\hat\beta(p_{j\rightarrow 0})-\hat\beta(p_{j\rightarrow 1})\}|+\alpha^{-1}}{\left[\left\{c\alpha^{-1}\sum_{i\neq j}\mathbb{I}(p_i>\gamma)\right\}\vee\varepsilon^{1/(1-k)}\right]^2}\right)\\
	\le&\alpha^{-1}\sum_{j=1}^m\left\{\mathbb{E}\left(\left[\sum_{i\neq j}\mathbb{I}(p_i>\gamma)\left|x_i^\T\{\hat\beta(p_{j\rightarrow 0})-\hat\beta(p_{j\rightarrow 1})\}\right|\right]^2\right)\right\}^{1/2}\\
	&\qquad\quad\;\;\;\left\{\mathbb{E}\left(\left[\left\{c\alpha^{-1}\sum_{i\neq j}\mathbb{I}(p_i>\gamma)\right\}\vee\varepsilon^{1/(1-k)}\right]^{-4}\right)\right\}^{1/2}\\
	&+\alpha^{-1}\sum_{j=1}^m\mathbb{E}\left(\left[\left\{c\alpha^{-1}\sum_{i\neq j}\mathbb{I}(p_i>\gamma)\right\}\vee\varepsilon^{1/(1-k)}\right]^{-2}\right)\\
	\le&\alpha^{-1}\sum_{j=1}^mm\left(\mathbb{E}\left[\left\{\frac{1}{m}\sum_{i\neq j}\mathbb{I}(p_i>\gamma)\|x_i\|\right\}^4\right]\right)^{1/4}\left[\mathbb{E}\left\{\left\|\hat\beta(p_{j\rightarrow 0})-\hat\beta(p_{j\rightarrow 1})\right\|^4\right\}\right]^{1/4}\\
	&\qquad\quad\;\;\;\left\{\mathbb{E}\left(\left[\left\{c\alpha^{-1}\sum_{i\neq j}\mathbb{I}(p_i>\gamma)\right\}\vee\varepsilon^{1/(1-k)}\right]^{-4}\right)\right\}^{1/2}\\
	&+\alpha^{-1}\sum_{j=1}^m\mathbb{E}\left(\left[\left\{c\alpha^{-1}\sum_{i\neq j}\mathbb{I}(p_i>\gamma)\right\}\vee\varepsilon^{1/(1-k)}\right]^{-2}\right)\\
	\le&\alpha^{-1}\sum_{j=1}^mm\left\{\frac{1}{m}\sum_{i=1}^m\mathbb{E}(\|x_i\|^4)\right\}^{1/4}\left[\mathbb{E}\left\{\left\|\hat\beta(p_{j\rightarrow 0})-\hat\beta(p_{j\rightarrow 1})\right\|^4\right\}\right]^{1/4}\\
	&\qquad\quad\;\;\;\left\{\mathbb{E}\left(\left[\left\{c\alpha^{-1}\sum_{i\neq j}\mathbb{I}(p_i>\gamma)\right\}\vee\varepsilon^{1/(1-k)}\right]^{-4}\right)\right\}^{1/2}\\
	&+\alpha^{-1}\sum_{j=1}^m\mathbb{E}\left(\left[\left\{c\alpha^{-1}\sum_{i\neq j}\mathbb{I}(p_i>\gamma)\right\}\vee\varepsilon^{1/(1-k)}\right]^{-2}\right).
	\end{align*}
	
	To get the last inequality, we note that for a sequence of random variables $\{W_{i}\}_{i=1}^m$ and $N\geq 1$,
	\begin{align*}
	\mathbb{E}\left\{\left(\frac{1}{m}\sum_{i=1}^m\|W_i\|\right)^N\right\}\le \mathbb{E}\left(\frac{1}{m}\sum_{i=1}^m\|W_i\|^N\right)=\frac{1}{m}\sum_{i=1}^m\mathbb{E}\left(\|W_i\|^N\right)
	\end{align*}
	by Jensen's inequality. Note that
	\begin{align*}
	&\left[\left\{c\alpha^{-1}\sum_{i\neq j}\mathbb{I}(p_i>\gamma)\right\}\vee\varepsilon^{1/(1-k)}\right]^{-N}\le
	\left[\left\{c\alpha^{-1}\sum\limits_{{i\in\mathcal{M}_0,i\neq j}}\mathbb{I}(p_i>\gamma)\right\}\vee\varepsilon^{1/(1-k)}\right]^{-N}\\
	\le&\left[\left\{c\alpha^{-1}\sum\limits_{{i,j\in\mathcal{M}_0,i\neq j}}\mathbb{I}(p_i>\gamma)\right\}\vee\varepsilon^{1/(1-k)}\right]^{-N}.
	\end{align*}
	Under Assumption 1, we have $\mathbb{P}(p_i>\gamma)\ge1-\gamma$ for $i\in\mathcal{M}_0$. Consider $m_0-1$ independent random variables $u_i$'s which follow uniform distribution on $[0,1]$. Set $\varepsilon^{1/(1-k)}\le c\alpha^{-1}$, then $\{c\alpha^{-1}\sum_{i=1}^{m_0-1}\mathbb{I}(u_i>\gamma)\ge\varepsilon^{1/(1-k)}\}=\{\sum_{i=1}^{m_0-1}\mathbb{I}(u_i>\gamma)\ge 1\}$. Let $\mu=1-\gamma$. Then by the result of Lemma \ref{lem_sum_of bernoulli}, for any positive integer $N$, we have
	\begin{align*}
	&\sup_{1\le j\le m}\mathbb{E}\left(\left[\left\{c\alpha^{-1}\sum_{i\neq j}\mathbb{I}(p_i>\gamma)\right\}\vee\varepsilon^{1/(1-k)}\right]^{-N}\right)\le \mathbb{E}\left(\left[\left\{c\alpha^{-1}\sum_{i=1}^{m_0-1}\mathbb{I}(u_i>\gamma)\right\}\vee\varepsilon^{1/(1-k)}\right]^{-N}\right)\\
	=&(1-\mu)^{m_0-1}\varepsilon^{-N/(1-k)}+c\alpha^N\sum_{i=1}^{m_0-1}\binom{m_0-1}{i}\mu^{i}(1-\mu)^{m_0-1-i}i^{-N}.
	\end{align*}
	We notice that
	\begin{align*}
	\sum_{i=1}^{m_0-1}\binom{m_0-1}{i}\mu^{i}(1-\mu)^{m_0-1-i}i^{-N}=&\sum_{j=0}^{m_0-2}\binom{m_0-1}{j+1}\mu^{j+1}(1-\mu)^{m_0-1-(j+1)}(j+1)^{-N}\\
	=&\sum_{j=0}^{m_0-2}\binom{m_0-2}{j}\frac{m_0-1}{j+1}\mu^{j+1}(1-\mu)^{m_0-2-j}(j+1)^{-N}\\
	=&(m_0-1)\mu\sum_{j=0}^{m_0-2}\binom{m_0-2}{j}\mu^{j}(1-\mu)^{m_0-2-j}(j+1)^{-(N+1)}\\
	=&(m_0-1)\mu\mathbb{E}\left\{(W+1)^{-(N+1)}\right\}=O(m_0^{-N}),
	\end{align*}
	where $W\sim\text{B}(m_0-2,\mu)$ and we have used the result from \citet{Cribari-Neto:2000}.
	As $(1-\mu)^{m_0-1}=o(\alpha^Nm_0^{-N})$ and $m_0=cm$ from Condition (viii), it follows that
	\begin{align*}
	\sup_{1\le j\le m}\mathbb{E}\left(\left[\left\{c\alpha^{-1}\sum_{i\neq j}\mathbb{I}(p_i>\gamma)\right\}\vee\varepsilon^{1/(1-k)}\right]^{-N}\right)=O(\alpha^N m_0^{-N})=O(\alpha^N m^{-N}).
	\end{align*}
	Since $\mathcal{B}$ is compact,
	$\left\|\hat\beta(p_{j\rightarrow a})-\beta^{*}\right\|\le c$ for $a=0,1$. Thus we get
	\begin{align*}
	&\mathbb{E}\left\{\left\|\hat\beta(p_{j\rightarrow 0})-\hat\beta(p_{j\rightarrow 1})\right\|^N\right\}\\
	=&\mathbb{E}\left[\left\|\hat\beta(p_{j\rightarrow 0})-\hat\beta(p_{j\rightarrow 1})\right\|^N\mathbb{I}\left\{\left\|\hat\beta(p_{j\rightarrow 0})-\hat\beta(p_{j\rightarrow 1})\right\|\le K/m\right\}\right]\\
	&+\mathbb{E}\left[\left\|\hat\beta(p_{j\rightarrow 0})-\hat\beta(p_{j\rightarrow 1})\right\|^N\mathbb{I}\left\{\left\|\hat\beta(p_{j\rightarrow 0})-\hat\beta(p_{j\rightarrow 1})\right\|> K/m\right\}\right]\\
	\le &\left(Km^{-1}\right)^{N}+\mathbb{E}\left[\left\|\hat\beta(p_{j\rightarrow 0})-\hat\beta(p_{j\rightarrow 1})\right\|^N\mathbb{I}\left\{\left\|\hat\beta(p_{j\rightarrow 0})-\hat\beta(p_{j\rightarrow 1})\right\|> K/m\right\}\right]\\
	\le&\left(Km^{-1}\right)^{N}+c\mathbb{P}\left\{\left\|\hat\beta(p_{j\rightarrow 0})-\hat\beta(p_{j\rightarrow 1})\right\|> K/m\right\}.
	\end{align*}
	Recall that $S_j^*=-\sum_{i\neq j}\nabla^2l(\beta^*;z_i)=\sum_{i\neq j}h(\beta^*;z_i)x_ix_i^\T$ with $h(\beta;z)$ being uniformly bounded over $\beta$ and $z$. Under Condition (vii) , we have  $\sigma_{\min}[-\mathbb{E}\{\sum_{i=1}^m\nabla^2l(\beta^{*};z_i)/m\}]>c$ as $m\to\infty$. Then we have
	\begin{align*}
	\inf_{1\le j\le m}\sigma_{\min}\left\{\mathbb{E}\left(\frac{S_j^{*}}{m}\right)\right\}=\inf_{1\le j\le m}\sigma_{\min}\left[\mathbb{E}\left\{\frac{-\sum_{i=1}^m\nabla^2l(\beta^{*};z_i)}{m}+\frac{h(\beta^{*};z_j)x_jx_j^\T}{m}\right\}\right]>c
	\end{align*}
	as $m\to\infty$, where we have used Condition (ii) and Lemma \ref{lem_singular_value_contin}. Let $\{\lambda_i\}^{4}_{i=1}$ be four positive numbers such that $\lambda_1=o(1)$, $\lambda_2=O(m^{-\omega})$ with $0<\omega<1/4$, $\lambda_3=o(1)$ and $\lambda_4=o(1)$. If $\|S_j^*/m-\mathbb{E}(S_j^*/m)\|\le\lambda_1$, $\|\hat\beta(p_{j\to a})-\beta^*\|\le\lambda_2$ for $a=0,1$, $|\sum_{i\ne j}\|x_i\|^3/m-\mathbb{E}(\sum_{i\ne j}\|x_i\|^3/m)|\le\lambda_3$ and $|\sum_{i\ne j}\|x_i\|^4/m-\mathbb{E}(\sum_{i\ne j}\|x_i\|^4/m)|\le\lambda_4$ for any $j=1,...,m$, then by (\ref{ineq1}) and Condition (ii), we have
	\begin{align*}
	\sup_{1\le j\le m}\left\|\frac{S_j^{*}+\Delta_j}{m}-\mathbb{E}\left(\frac{S_j^*}{m}\right)\right\|\le c\left(\lambda_1+\lambda_2+\lambda_2^2\right),
	\end{align*}
	which leads to that $\sup_{1\le j\le m}\|\{(S_j^*+\Delta_j)/m\}^{-1}\|=\sup_{1\le j\le m}[\sigma_{\min}\{(S_j^*+\Delta_j)/m\}]^{-1}<c$ as $m\to\infty$ by Lemma \ref{lem_singular_value_contin}. And by (\ref{ineq4}) and (\ref{ineq2}), for $j=1,...,m$, we have $$\left\|U_j^*+\Pi_j\right\|\le c(\|x_j\|+\lambda_2\|x_j\|^2+\lambda_2^2\|x_j\|^3).$$
	
	Applying the result from Proposition 2, we have
	\begin{align*}
	&\sup_{1\le j\le m}\mathbb{P}\left\{\left\|\hat\beta(p_{j\rightarrow 0})-\hat\beta(p_{j\rightarrow 1})\right\|> K/m\right\}\\
	=&\sup_{1\le j\le m}\mathbb{P}\left\{\left\|(S_j^*+\Delta_j)^{-1}(U_j^*+\Pi_j)\right\|>K/m\right\}\\
	\le&\sup_{1\le j\le m}\Bigg[\mathbb{P}\Bigg\{\left\|(S_j^*+\Delta_j)^{-1}(U_j^*+\Pi_j)\right\|>K/m, \left\|\frac{S_j^*}{m}-\mathbb{E}\left(\frac{S_j^*}{m}\right)\right\|\le\lambda_1,\|\hat\beta(p_{j\to a})-\beta^*\|\le\lambda_2,\\
	&\qquad\qquad\qquad\left|\frac{\sum_{i\ne j}\|x_i\|^3}{m}-\mathbb{E}\left(\frac{\sum_{i\ne j}\|x_i\|^3}{m}\right)\right|\le\lambda_3,\left|\frac{\sum_{i\ne j}\|x_i\|^4}{m}-\mathbb{E}\left(\frac{\sum_{i\ne j}\|x_i\|^4}{m}\right)\right|\le\lambda_4\Bigg\}\\
	&\qquad\quad\;+\mathbb{P}\left\{\left\|\frac{S_j^*}{m}-\mathbb{E}\left(\frac{S_j^*}{m}\right)\right\|>\lambda_1\right\}+\mathbb{P}\left\{\|\hat\beta(p_{j\to a})-\beta^*\|>\lambda_2\right\}\\
	&\qquad\quad\;+\mathbb{P}\left\{\left|\frac{\sum_{i\ne j}\|x_i\|^3}{m}-\mathbb{E}\left(\frac{\sum_{i\ne j}\|x_i\|^3}{m}\right)\right|>\lambda_3\right\}+\mathbb{P}\left\{\left|\frac{\sum_{i\ne j}\|x_i\|^4}{m}-\mathbb{E}\left(\frac{\sum_{i\ne j}\|x_i\|^4}{m}\right)\right|>\lambda_4\right\}\Bigg]\\
	\le& \sup_{1\le j\le m}\Bigg[\mathbb{P}\left(\|x_j\|+\lambda_2\|x_j\|^2+\lambda_2^2\|x_j\|^3>cK\right)+\mathbb{P}\left\{\left\|\frac{S_j^*}{m}-\mathbb{E}\left(\frac{S_j^*}{m}\right)\right\|>\lambda_1\right\}+\mathbb{P}\left\{\|\hat\beta(p_{j\to a})-\beta^*\|>\lambda_2\right\}\\
	&\qquad\quad\;+\mathbb{P}\left\{\left|\frac{\sum_{i\ne j}\|x_i\|^3}{m}-\mathbb{E}\left(\frac{\sum_{i\ne j}\|x_i\|^3}{m}\right)\right|>\lambda_3\right\}+\mathbb{P}\left\{\left|\frac{\sum_{i\ne j}\|x_i\|^4}{m}-\mathbb{E}\left(\frac{\sum_{i\ne j}\|x_i\|^4}{m}\right)\right|>\lambda_4\right\}\Bigg].
	\end{align*}
	Thus we conclude that
	\begin{align*}
	&\sup_{1\le j\le m}	\mathbb{E}\left\{\left\|\hat\beta(p_{j\rightarrow 0})-\hat\beta(p_{j\rightarrow 1})\right\|^4\right\}\\
	\le&\sup_{1\le j\le m}\Bigg[c\left(Km^{-1}\right)^4+P\left(\|x_j\|+\lambda_2\|x_j\|^2+\lambda_2^2\|x_j\|^3>cK\right)\\
	&\qquad\quad\;\;+\mathbb{P}\left\{\left\|\frac{S_j^*}{m}-\mathbb{E}\left(\frac{S_j^*}{m}\right)\right\|>\lambda_1\right\}+\mathbb{P}\left\{\|\hat\beta(p_{j\to a})-\beta^*\|>\lambda_2\right\}\\
	&\qquad\quad\;\;+\mathbb{P}\left\{\left|\frac{\sum_{i\ne j}\|x_i\|^3}{m}-\mathbb{E}\left(\frac{\sum_{i\ne j}\|x_i\|^3}{m}\right)\right|>\lambda_3\right\}+\mathbb{P}\left\{\left|\frac{\sum_{i\ne j}\|x_i\|^4}{m}-\mathbb{E}\left(\frac{\sum_{i\ne j}\|x_i\|^4}{m}\right)\right|>\lambda_4\right\}\Bigg].
	\end{align*}
	As seen from above, there are six terms to be considered. From the result of Proposition \ref{pro_tail_beta}, we know that $\mathbb{P}\{\sup_j\|\hat\beta(p_{j\to a})-\beta^*\|> \lambda_2\}\le\exp[-c\{\lambda_2^4+o(\lambda_2^4)\}m],$ which implies that the convergence rate of $\mathbb{P}\{\|\hat\beta(p_{j\rightarrow 0})-\hat\beta(p_{j\rightarrow 1})\|> K/m\}$ is not determined by the term $\mathbb{P}\{\|\hat\beta(p_{j\to a})-\beta^*\|>\lambda_2\}$. We thus focus on the other five terms. Let $K=m^{\kappa}$ with $0<\kappa<1$.
	Applying the result of Proposition \ref{pro_tail_s}, we have
	\begin{align*}
	&\sup_{1\le j\le m}\Bigg[\mathbb{P}\left\{\left\|\frac{S_j^*}{m}-\mathbb{E}\left(\frac{S_j^*}{m}\right)\right\|>\lambda_1\right\}+\mathbb{P}\left\{\left|\frac{\sum_{i\ne j}\|x_i\|^3}{m}-\mathbb{E}\left(\frac{\sum_{i\ne j}\|x_i\|^3}{m}\right)\right|>\lambda_3\right\}\\
	&\qquad\quad\;\;+\mathbb{P}\left\{\left|\frac{\sum_{i\ne j}\|x_i\|^4}{m}-\mathbb{E}\left(\frac{\sum_{i\ne j}\|x_i\|^4}{m}\right)\right|>\lambda_4\right\}\Bigg]=o(m^{1-q}).
	\end{align*}
	Note that
	\begin{align}
	&\sup_{1\le j\le m}\left\{c\left(Km^{-1}\right)^4+\mathbb{P}\left(\|x_j\|+\lambda_2\|x_j\|^2+\lambda_2^2\|x_j\|^3>cK\right)
	+o(m^{1-q})\right\} \nonumber \\
	\le&\sup_{1\le j\le m}\bigg\{c\left(Km^{-1}\right)^4+c\mathbb{E}\left(\|x_j\|^{4q}\right)K^{-4q}+c\mathbb{E}\left(\|x_j\|^{4q}\right)(\lambda_2K^{-1})^{2q} \nonumber \\
	&\qquad\quad\;\;+c\mathbb{E}\left(\|x_j\|^{4q}\right)\left(\lambda_2^2K^{-1}\right)^{\frac{4q}{3}}+o(m^{1-q})\bigg\} \nonumber\\
	\le&c\left\{\left(Km^{-1}\right)^4+K^{-4q}+\left(\lambda_2K^{-1}\right)^{2q}+\left(\lambda_2^2K^{-1}\right)^{\frac{4q}{3}}\right\}+o(m^{1-q}) \nonumber\\
	=&O\left(m^{4(\kappa-1)}+m^{-4\kappa q}+m^{-2\omega q-2\kappa q}+m^{-\frac{8}{3}\omega q-\frac{4}{3}\kappa q}\right)+o(m^{1-q}) \label{eq-order},
	\end{align}
	where $0 <\kappa <1$ and $0<\omega<1/4$. Let
	\begin{align*}
	\eta(\kappa,\omega,q)=\max\left\{v_1=4(\kappa-1),v_2=-4\kappa q,v_3=-2\omega q-2\kappa q,v_4=-\frac{8}{3}\omega q-\frac{4}{3}\kappa q\right\}.
	\end{align*}
	Finding the order of (\ref{eq-order}) is equivalent to solving the following problem
	\begin{equation}\label{eq-lp-g}
	\min\limits_{\substack{0<\kappa<1,0<\omega< 1/4}}\eta(\kappa,\omega,q),\quad q\ge 2.
	\end{equation}
	We notice that the three lines $v_2, v_3, v_4$ intersect at point $(\omega, -4\omega q)$ for any $\omega$ and $q$, and lines $v_1, v_2$ intersect at point $\left(1/(1+q),-4q/(1+q)\right)$. Observe that $\eta(\kappa,\omega_2,q)\le\eta(\kappa,\omega_1,q)$ if $\omega_2\ge\omega_1$ for any $\kappa$ and $q$. Thus we let $\omega=1/4-\epsilon$, where $\epsilon$ is an arbitrarily small positive number. If $2\le q\le 3$, then $1/(1+q)>\omega$ and hence the solution to (\ref{eq-lp-g}) is obtained when $v_1=v_4$. If $q>3$, then $1/(1+q)<\omega$ as $\epsilon$ can be arbitrarily small, and hence the solution is obtained when $v_1=v_2$. The idea is illustrated in Figure \ref{linear_prog}, where we set $\omega=1/4-0.001$.
	\begin{figure}
		\centering
		\includegraphics[scale=0.3]{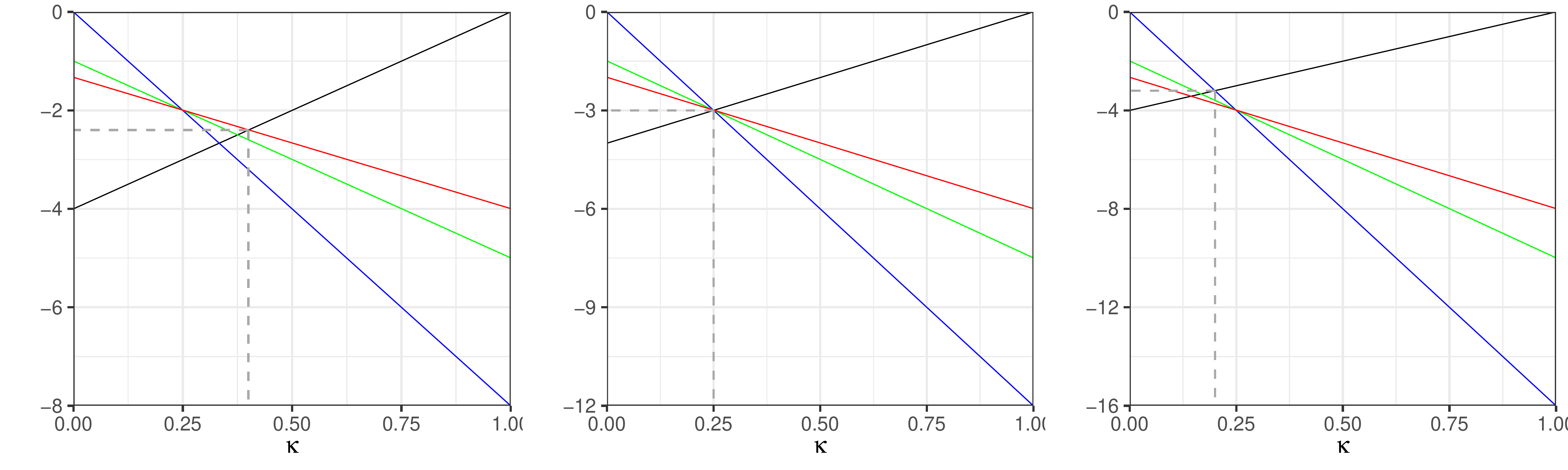}
		\caption{Illustration of the linear programming problem (\ref{eq-lp-g}) with $\omega=1/4-0.001$. The black, blue, green and red lines represent $v_1$, $v_2$, $v_3$ and $v_4$ in (\ref{eq-lp-g}) respectively. From left to right, the three panels correspond to $q=2,3,4$ respectively.}
		\label{linear_prog}
	\end{figure}
	Therefore, for $2\le q\le 3$, the solution to (\ref{eq-lp-g}) is obtained when $4(\kappa-1)=-8\omega q/3-4\kappa q/3$. In this case, $\kappa=(3-2\omega q)/(3+q)$ and $4(\kappa-1)=-4(2\omega+1)q/(3+q)$. Thus we have
	\begin{align*}
	\sup_{1\le j\le m}	\mathbb{E}\left\{\left\|\hat\beta(p_{j\rightarrow 0})-\hat\beta(p_{j\rightarrow 1})\right\|^4\right\}\le\max\left\{O\left(m^{\frac{-4(2\omega+1)q}{3+q}}\right),o\left(m^{1-q}\right)\right\}=o(m^{1-q}).
	\end{align*}
	For $q>3$, the solution is obtained when $4(\kappa-1)=-4\kappa q$. In this case, $\kappa=1/(1+q)$ and $4(\kappa-1)=-4q/(1+q)$.
	Hence we get
	\begin{align*}
	\sup_{1\le j\le m}	\mathbb{E}\left\{\left\|\hat\beta(p_{j\rightarrow 0})-\hat\beta(p_{j\rightarrow 1})\right\|^4\right\}\le\max\left\{O\left(m^{\frac{-4q}{1+q}}\right), o\left(m^{1-q}\right)\right\}=\begin{cases}
	o(m^{1-q}),&\text{ if }3<q\le 2+\sqrt{5},\\
	O\left(m^{\frac{-4q}{1+q}}\right),&\text{ if }q> 2+\sqrt{5}.\\
	\end{cases}
	\end{align*}
	
	Summarizing the above results, we have
	\begin{align*}
	J_{m,1}\le&\sum_{j=1}^m\left\{\mathbb{E}\left(\|x_j\|^4\right)\right\}^{\frac{1}{4}}\left[\mathbb{E}\left\{\left\|\hat\beta(p_{j\rightarrow 0})-\hat\beta(p_{j\rightarrow 1})\right\|^4\right\}\right]^{\frac{1}{4}}\left\{\mathbb{E}\left(\left[\left\{\alpha^{-1}\sum_{i\neq j}\mathbb{I}(p_i>\gamma)\right\}\vee\varepsilon^{1/(1-k)}\right]^{-2}\right)\right\}^{\frac{1}{2}}\\
	\le&\sum_{j=1}^m\left\{\mathbb{E}\left(\|x_j\|^4\right)\right\}^{\frac{1}{4}}	\left[\mathbb{E}\left\{\left\|\hat\beta(p_{j\rightarrow 0})-\hat\beta(p_{j\rightarrow 1})\right\|^4\right\}\right]^{\frac{1}{4}}O(\alpha m^{-1})\\
	\le&\left[\sup_{1\le j\le m}	\mathbb{E}\left\{\left\|\hat\beta(p_{j\rightarrow 0})-\hat\beta(p_{j\rightarrow 1})\right\|^4\right\}\right]^{\frac{1}{4}}O(\alpha),
	\end{align*}
	and
	\begin{align*}
	J_{m,2}		\le&\alpha^{-1}\sum_{j=1}^mm\left\{\frac{1}{m}\sum_{i=1}^m\mathbb{E}\left(\|x_i\|^4\right)\right\}^{1/4}\left[\mathbb{E}\left\{\left\|\hat\beta(p_{j\rightarrow 0})-\hat\beta(p_{j\rightarrow 1})\right\|^4\right\}\right]^{1/4}\\
	&\qquad\quad\;\;\;\left\{\mathbb{E}\left(\left[\left\{c\alpha^{-1}\sum_{i\neq j}\mathbb{I}(p_i>\gamma)\right\}\vee\varepsilon^{1/(1-k)}\right]^{-4}\right)\right\}^{1/2}\\
	&+\alpha^{-1}\sum_{j=1}^m\mathbb{E}\left(\left[\left\{c\alpha^{-1}\sum_{i\neq j}\mathbb{I}(p_i>\gamma)\right\}\vee\varepsilon^{1/(1-k)}\right]^{-2}\right)\\
	\le&\alpha^{-1}\sum_{j=1}^mm\left[\mathbb{E}\left\{\left\|\hat\beta(p_{j\rightarrow 0})-\hat\beta(p_{j\rightarrow 1})\right\|^4\right\}\right]^{\frac{1}{4}}O(\alpha^2 m^{-2})+\alpha^{-1}\sum_{j=1}^mO(\alpha^2m^{-2})\\
	\le&\left[\sup_{1\le j\le m}	\mathbb{E}\left\{\left\|\hat\beta(p_{j\rightarrow 0})-\hat\beta(p_{j\rightarrow 1})\right\|^4\right\}\right]^{\frac{1}{4}}O(\alpha)+O(\alpha m^{-1}),
	\end{align*}
	where we have used Condition (ii) to obtain that $\sup_{1\leq i\leq m}\mathbb{E}(\|x_i\|^{4})<\infty$. Finally, we obtain
	\begin{align*}
	\text{FWER}\le J_m+\alpha=\begin{cases}
	o(\alpha m^{\frac{1-q}{4}})+\alpha, &\text{if }2\le q\le 2+\sqrt{5} ,\\
	O(\alpha m^{\frac{-q}{1+q}})+\alpha, &\text{if }q>2+\sqrt{5} .
	\end{cases}
	\end{align*}
	~
\end{proof}

\section{Additional simulation results}\label{sec-simu}
Figures \ref{fig_no_signal}--\ref{fig_moderate_power} show the FWER control across different target levels. Figures \ref{fig_s1}--\ref{fig_s2_4} present the results of Setups S1--S2.

\begin{figure}
	\centering
	\includegraphics[scale=0.35]{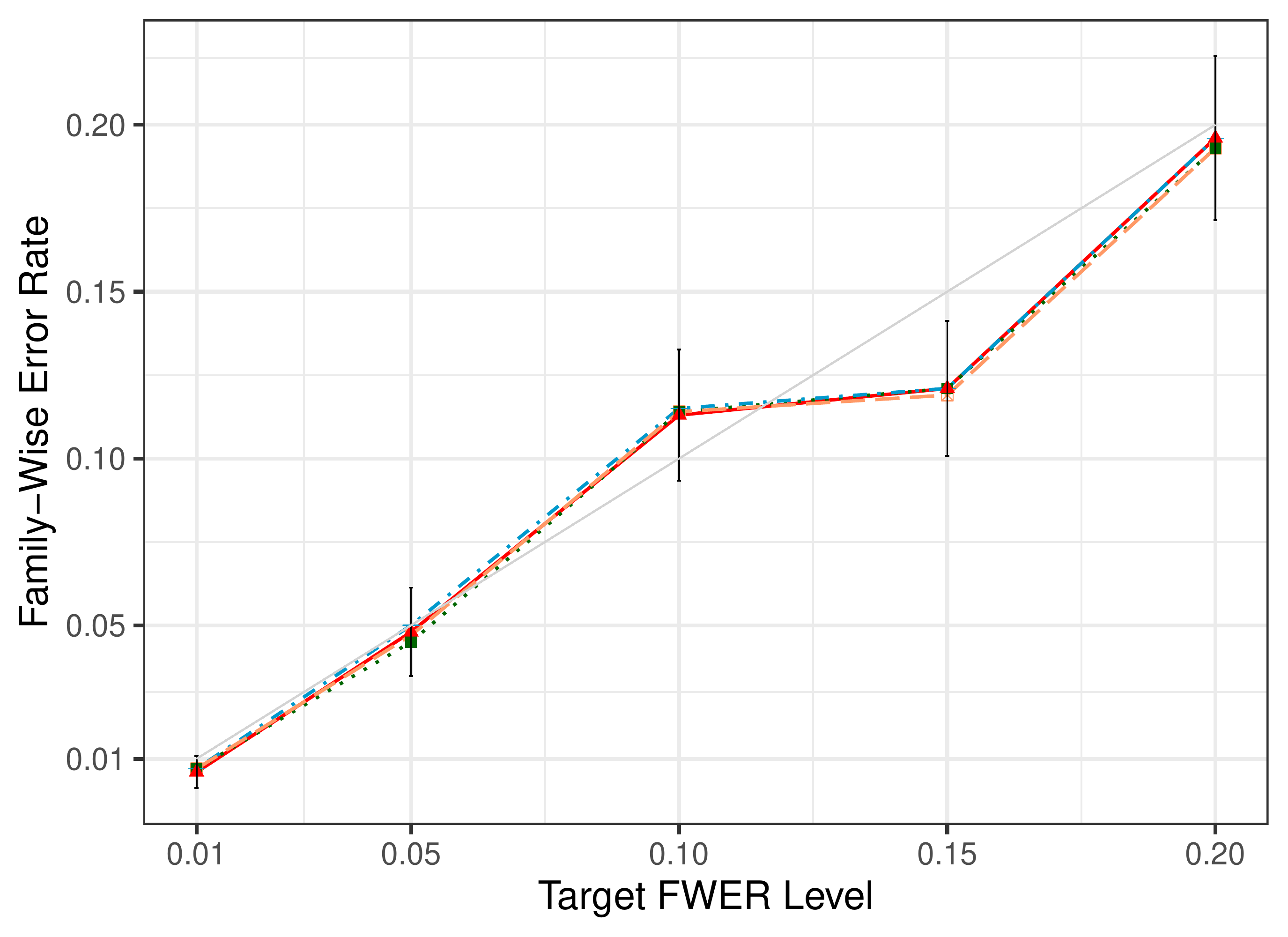}
	\caption{FWER control at various target levels (0.01 - 0.20) under the complete null (no signal was simulated). Family-wise error rates were averaged over 1,000 simulation runs. The solid red, dotted green, dot-dashed blue and long-dashed orange lines represent CAMT.fwer, IHW- Bonferroni, weighted Bonferroni and Holm’s step-down methods respectively. The gray diagonal line represents the target FWER levels and the error bars represent the 95\% CIs of the method CAMT.fwer.}
	\label{fig_no_signal}
\end{figure}
\begin{figure}
	\centering
	\includegraphics[scale=0.35]{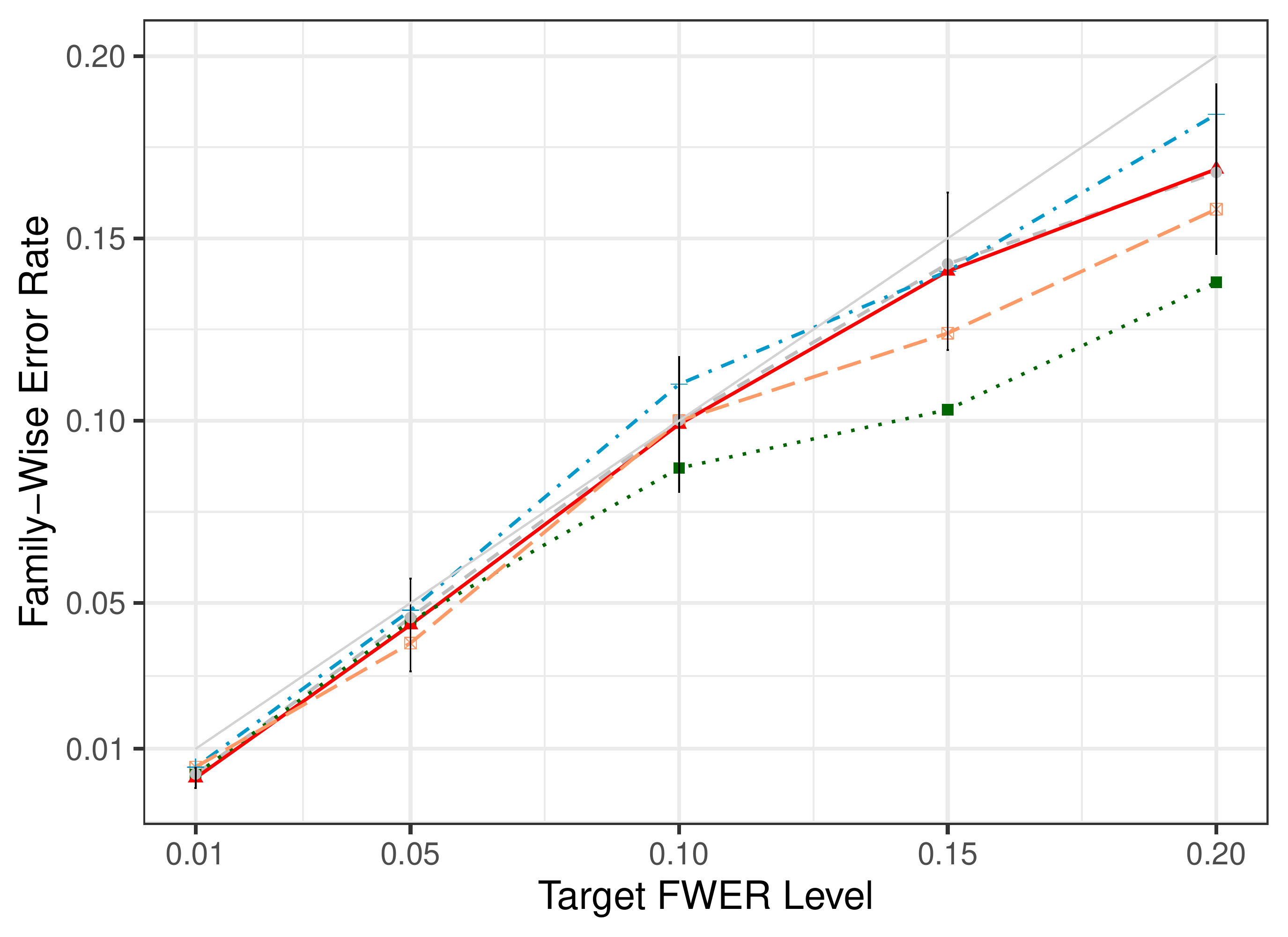}
	\caption{FWER control at various target levels (0.01 - 0.20) under Setup S0 with moderate signal density, signal strength and covariate informativeness. Family-wise error rates were averaged over 1,000 simulation runs. The solid red, dotted green, dot-dashed blue and long-dashed orange lines represent CAMT.fwer, IHW- Bonferroni, weighted Bonferroni and Holm’s step-down methods respectively. The gray diagonal line represents the target FWER levels and the error bars represent the 95\% CIs of the method CAMT.fwer.}
	\label{fig_moderate_fwer}
\end{figure}
\begin{figure}
	\centering
	\includegraphics[scale=0.35]{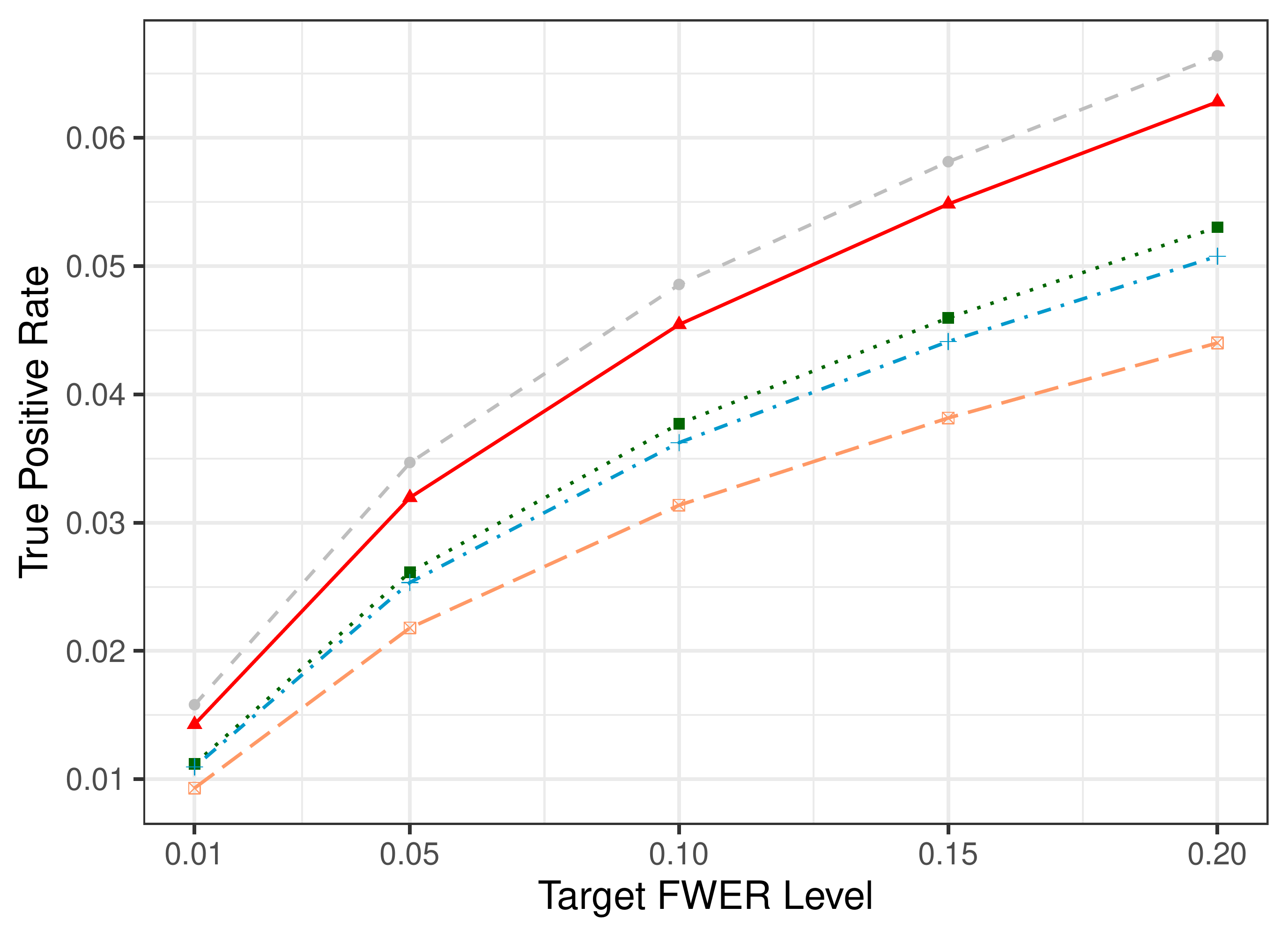}
	\caption{FWER control at various target levels (0.01 - 0.20) under Setup S0 with moderate signal density, signal strength and covariate informativeness. True positive rates were averaged over 1,000 simulation runs. The dashed gray, solid red, dotted green, dot-dashed blue and long-dashed orange lines represent the oracle, CAMT.fwer, IHW- Bonferroni, weighted Bonferroni and Holm’s step-down methods respectively. }
	\label{fig_moderate_power}
\end{figure}
\begin{figure}
	\begin{subfigure}[b]{1\textwidth}
		\centering
		\includegraphics[scale=0.45]{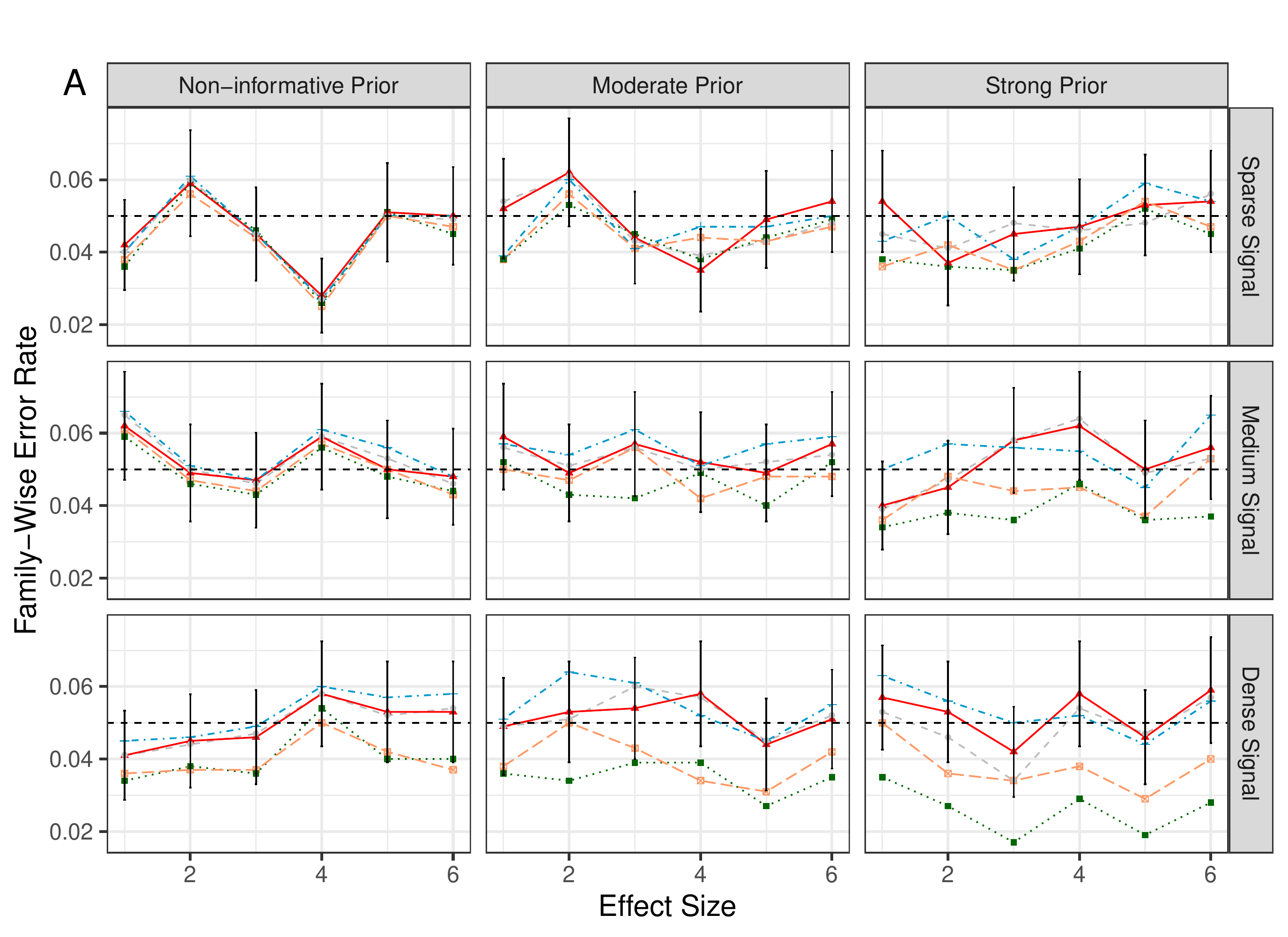}
	\end{subfigure}
	\begin{subfigure}[b]{1\textwidth}
		\centering
		\includegraphics[scale=0.45]{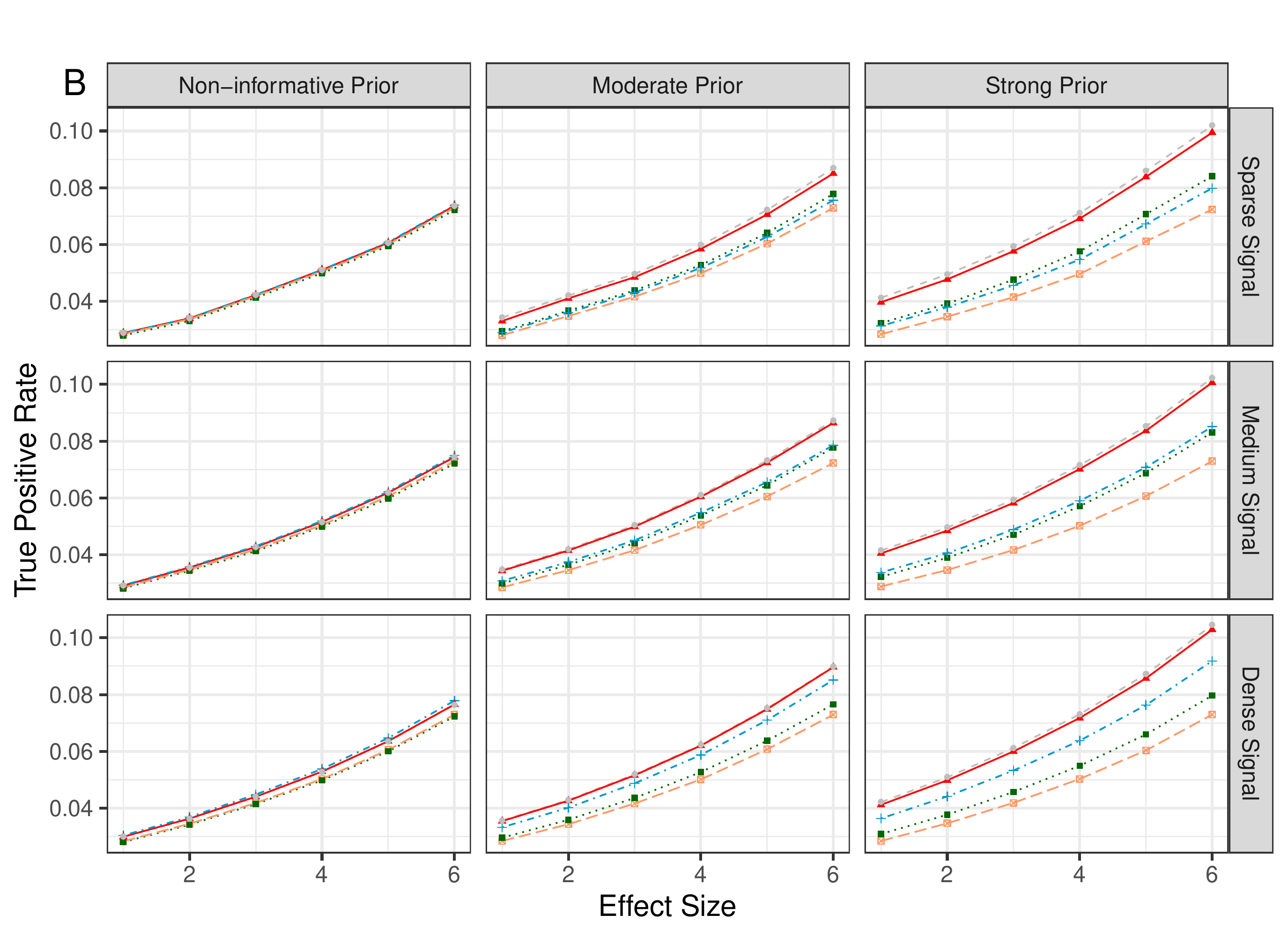}
	\end{subfigure}
	\caption{Performance comparison under S1.  Family-wise error rates (A) and true positive rates (B) were averaged over 1000 simulation runs. The dashed gray, solid red, dotted green, dot-dashed blue and long-dashed orange lines represent the oracle, CAMT.fwer, IHW- Bonferroni, weighted Bonferroni and Holm’s step-down methods respectively. The error bars (A) represent the 95\% CIs of the method CAMT.fwer and the dashed horizontal line indicates the target FWER level of 0.05.}
	\label{fig_s1}
\end{figure}
\begin{figure}
	\begin{subfigure}[b]{1\textwidth}
		\centering
		\includegraphics[scale=0.45]{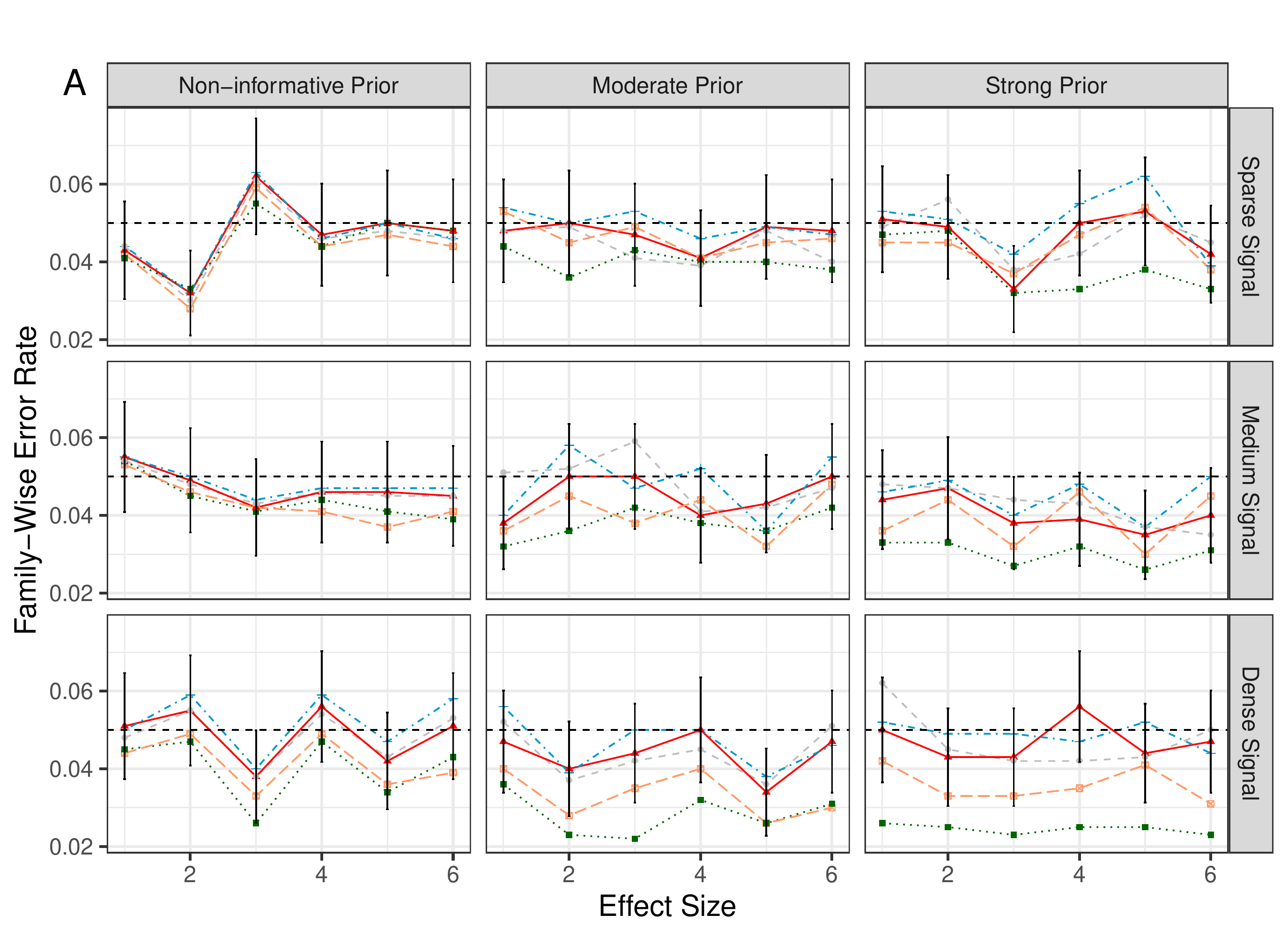}
	\end{subfigure}
	\begin{subfigure}[b]{1\textwidth}
		\centering
		\includegraphics[scale=0.45]{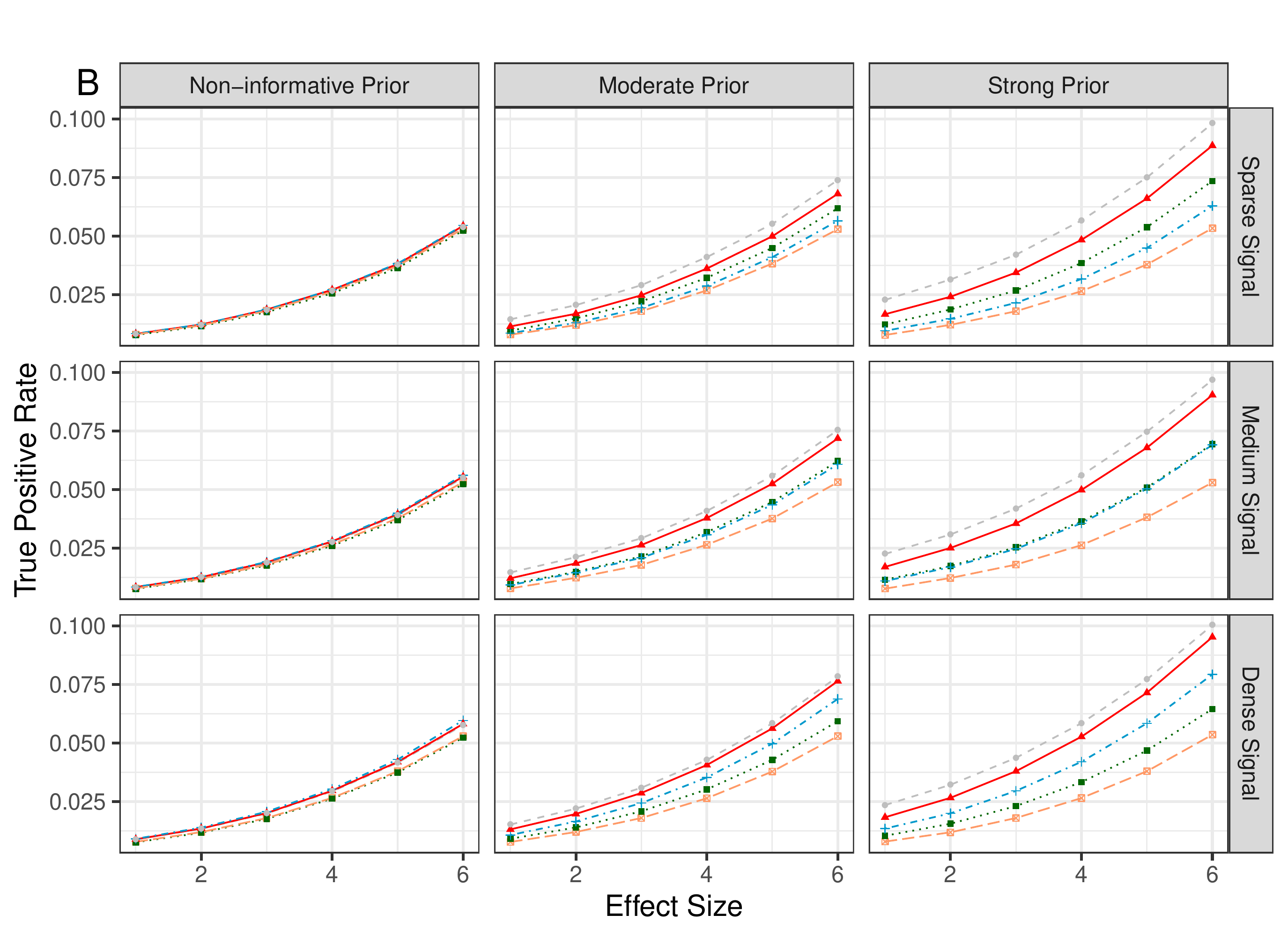}
	\end{subfigure}
	\caption{Performance comparison under S2.1.  Family-wise error rates (A) and true positive rates (B) were averaged over 1000 simulation runs. The dashed gray, solid red, dotted green, dot-dashed blue and long-dashed orange lines represent the oracle, CAMT.fwer, IHW- Bonferroni, weighted Bonferroni and Holm’s step-down methods respectively. The error bars (A) represent the 95\% CIs of the method CAMT.fwer and the dashed horizontal line indicates the target FWER level of 0.05.}
	\label{fig_s2_1}
\end{figure}
\begin{figure}
	\begin{subfigure}[b]{1\textwidth}
		\centering
		\includegraphics[scale=0.45]{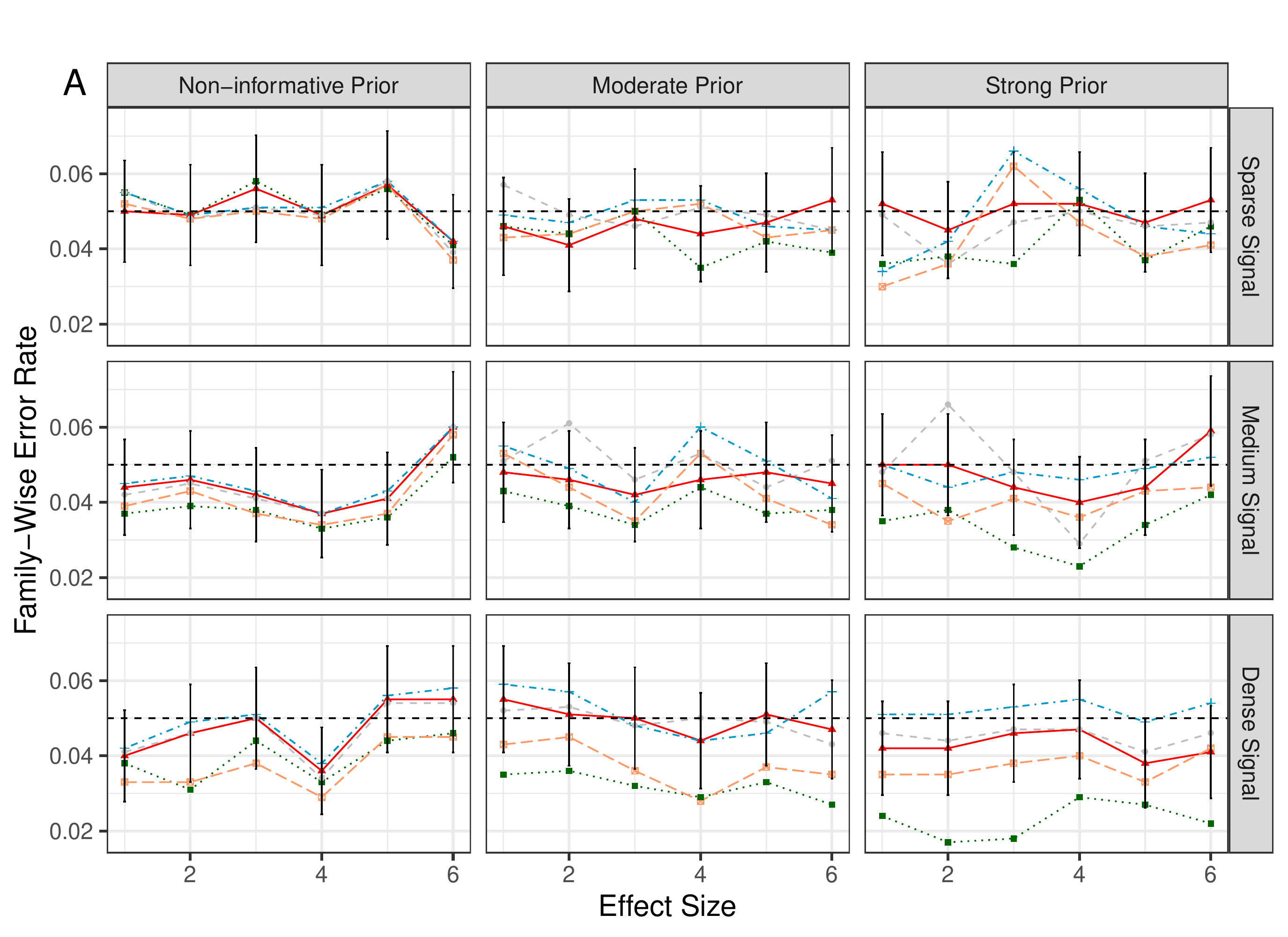}
	\end{subfigure}
	\begin{subfigure}[b]{1\textwidth}
		\centering
		\includegraphics[scale=0.45]{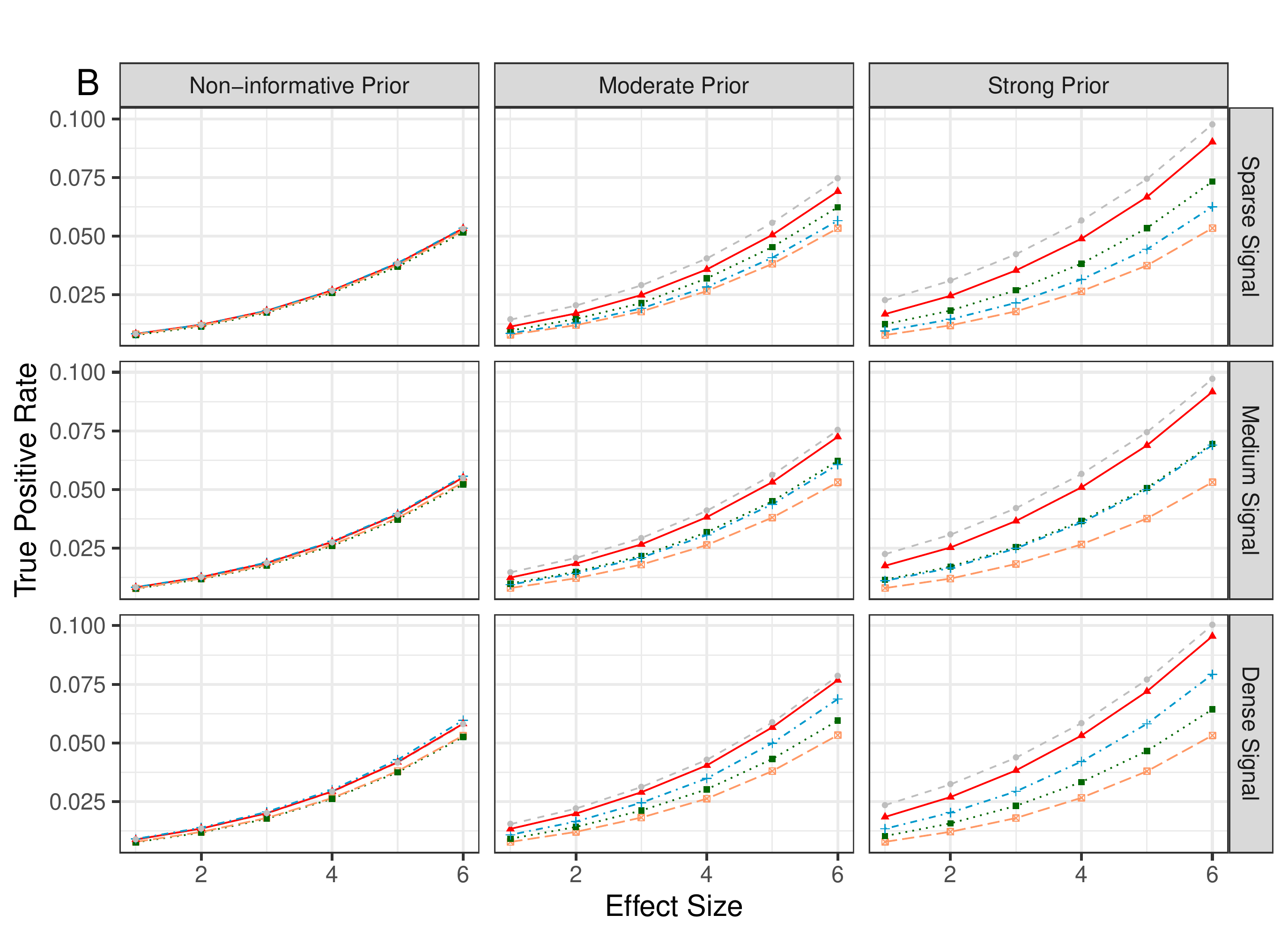}
	\end{subfigure}
	\caption{Performance comparison under S2.2.  Family-wise error rates (A) and true positive rates (B) were averaged over 1000 simulation runs. The dashed gray, solid red, dotted green, dot-dashed blue and long-dashed orange lines represent the oracle, CAMT.fwer, IHW- Bonferroni, weighted Bonferroni and Holm’s step-down methods respectively. The error bars (A) represent the 95\% CIs of the method CAMT.fwer and the dashed horizontal line indicates the target FWER level of 0.05.}
	\label{fig_s2_2}
\end{figure}
\begin{figure}
	\begin{subfigure}[b]{1\textwidth}
		\centering
		\includegraphics[scale=0.45]{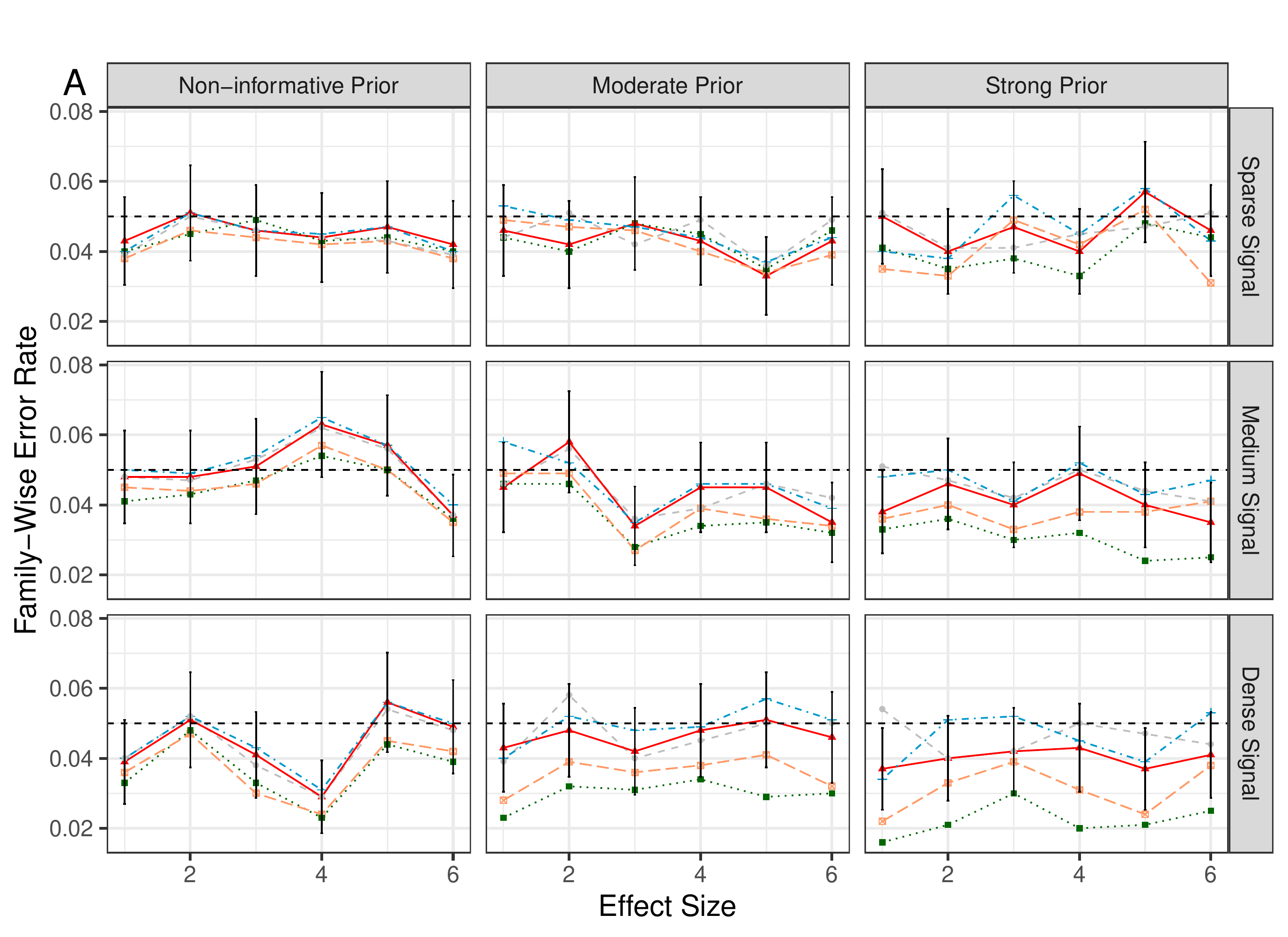}
	\end{subfigure}
	\begin{subfigure}[b]{1\textwidth}
		\centering
		\includegraphics[scale=0.45]{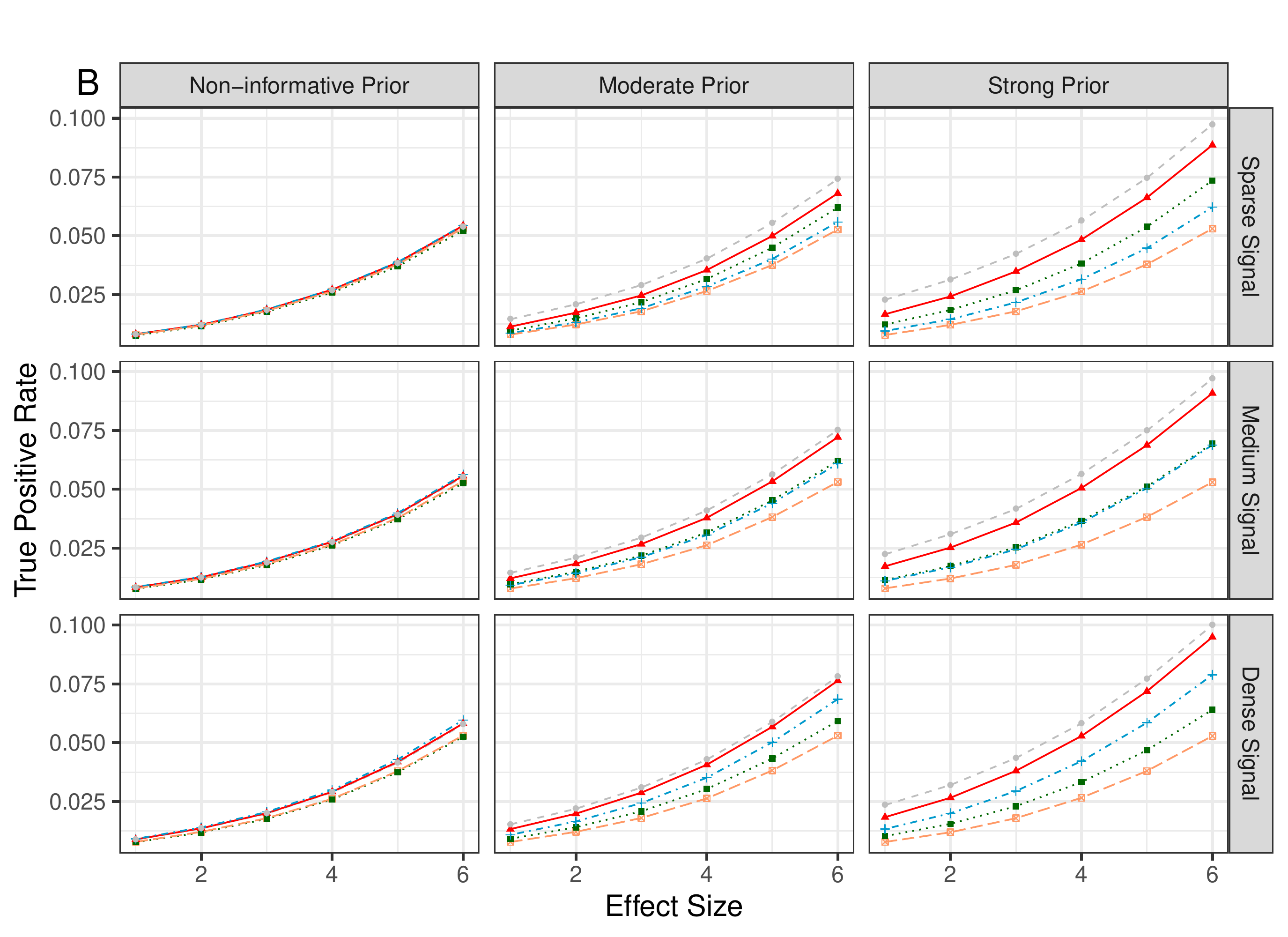}
	\end{subfigure}
	\caption{Performance comparison under S2.3.  Family-wise error rates (A) and true positive rates (B) were averaged over 1000 simulation runs. The dashed gray, solid red, dotted green, dot-dashed blue and long-dashed orange lines represent the oracle, CAMT.fwer, IHW- Bonferroni, weighted Bonferroni and Holm’s step-down methods respectively. The error bars (A) represent the 95\% CIs of the method CAMT.fwer and the dashed horizontal line indicates the target FWER level of 0.05.}
	\label{fig_s2_3}
\end{figure}
\begin{figure}
	\begin{subfigure}[b]{1\textwidth}
		\centering
		\includegraphics[scale=0.45]{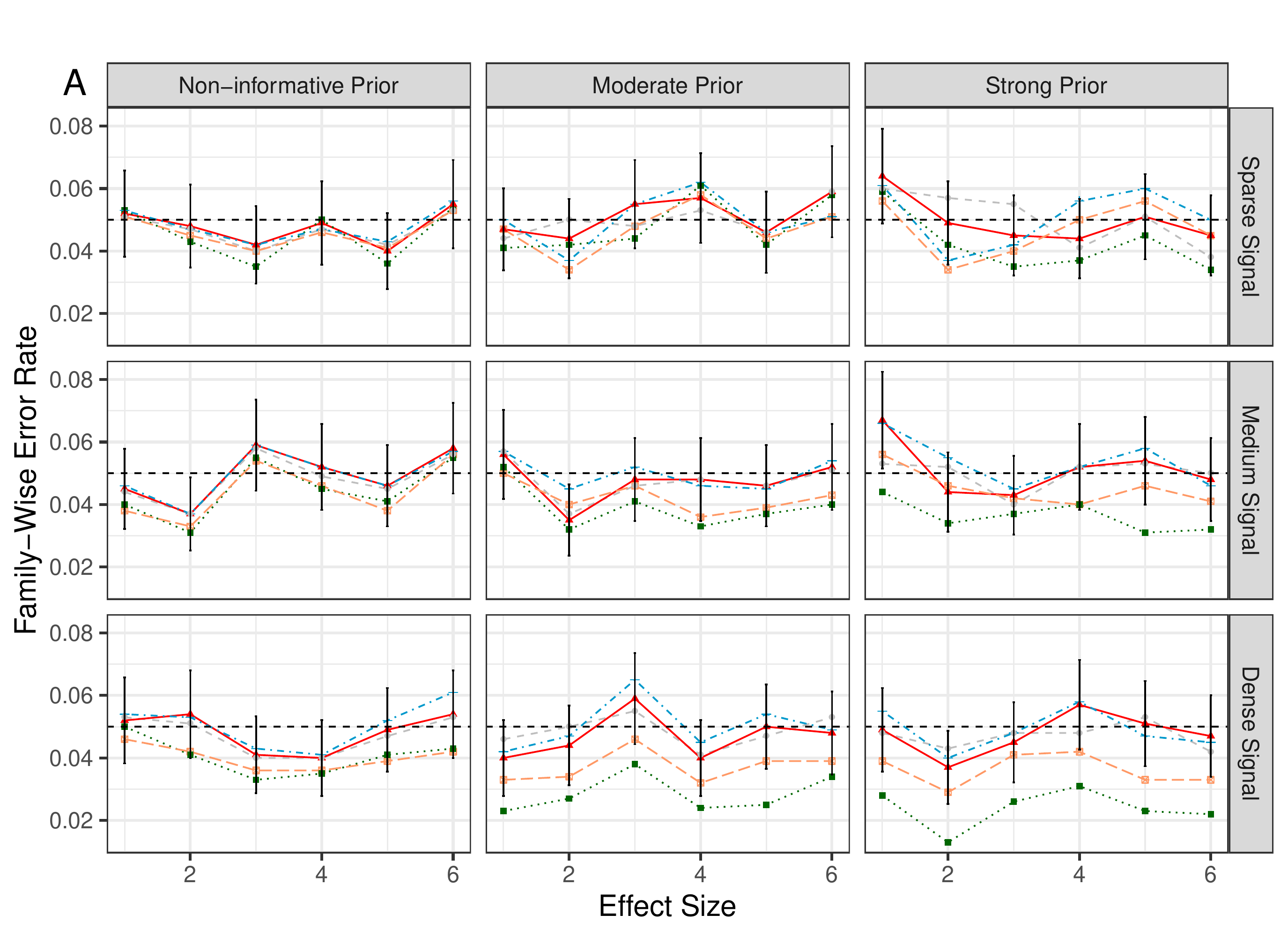}
	\end{subfigure}
	\begin{subfigure}[b]{1\textwidth}
		\centering
		\includegraphics[scale=0.45]{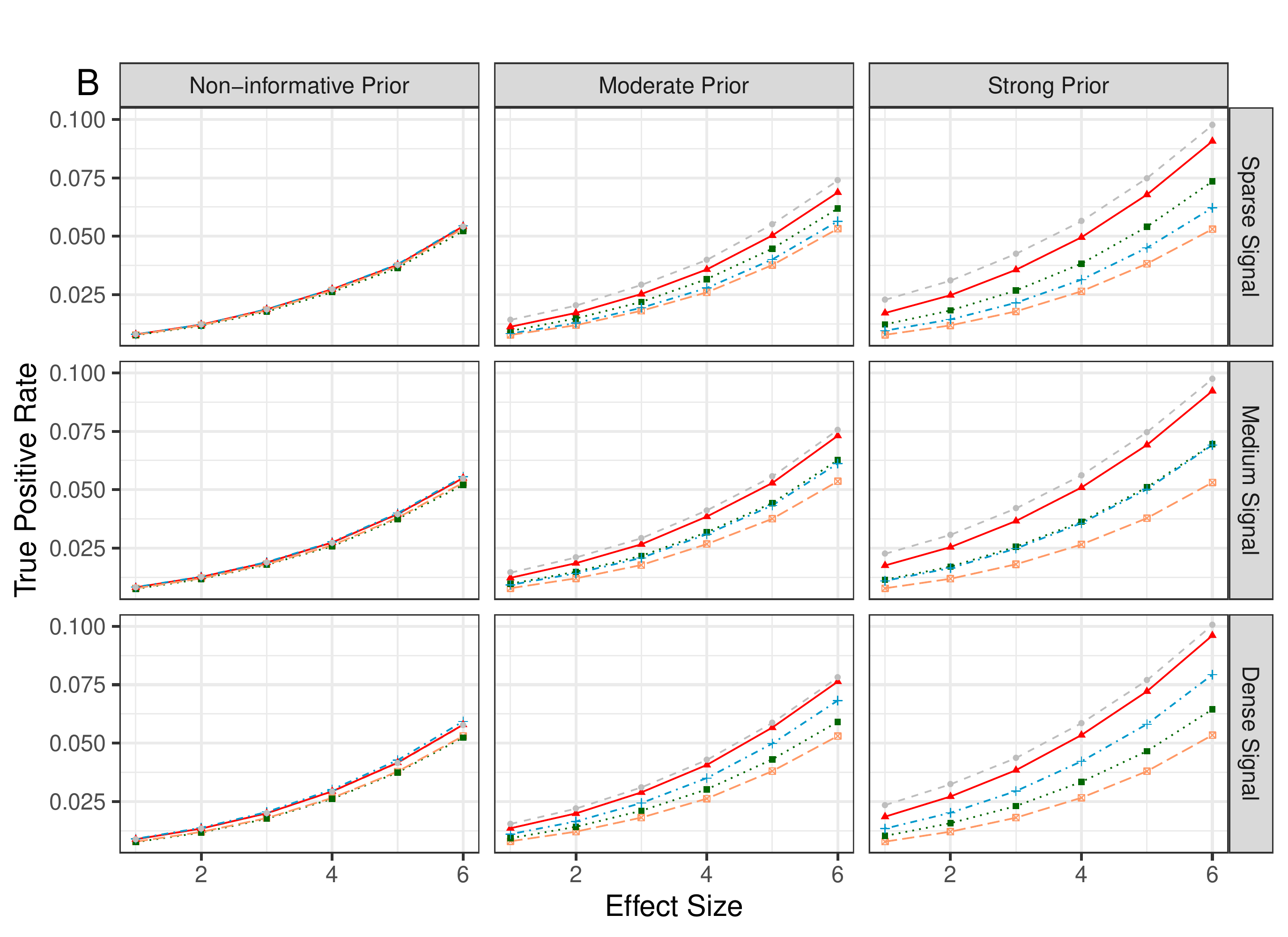}
	\end{subfigure}
	\caption{Performance comparison under S2.4. Family-wise error rates (A) and true positive rates (B) were averaged over 1000 simulation runs. The dashed gray, solid red, dotted green, dot-dashed blue and long-dashed orange lines represent the oracle, CAMT.fwer, IHW- Bonferroni, weighted Bonferroni and Holm’s step-down methods respectively. The error bars (A) represent the 95\% CIs of the method CAMT.fwer and the dashed horizontal line indicates the target FWER level of 0.05.}
	\label{fig_s2_4}
\end{figure}

\section{Addition result for Application to GWAS of UK Biobank data Section}\label{sec-gwas}
We present the numbers of rejections before clumping mentioned in Section 5 of the main paper.
\begin{table} \centering
	\caption{Significant SNPs detected at the FWER level of 0.05. Improve=$(\text{CAMT.fwer}-\text{Holm})/\text{Holm}\times 100\%$}
	\footnotesize
		\begin{tabular}{lccccc}
			\\[-1.8ex]\hline
			\hline \\[-1.8ex]
			Traits & Holm & IHW & weighted Bonferroni & CAMT.fwer& Improve \\
			\hline \\[-1.8ex]
			Balding Type I & $28,752$ & $28,752$ & $29,021$ & $30,441$ & $5.9\%$ \\
			BMI & $54,965$ & $54,965$ & $60,171$ & $64,057$ & $16.5\%$ \\
			Heel T Score & $108,112$ & $108,112$ & $113,962$ & $117,136$ & $8.3\%$ \\
			Height & $256,353$ & $256,051$ & $273,020$ & $278,034$ & $8.5\%$ \\
			Waist-hip Ratio & $35,984$ & $35,983$ & $37,720$ & $39,625$ & $10.1\%$ \\
			Eosinophil Count & $66,384$ & $66,384$ & $70,495$ & $71,623$ & $7.9\%$ \\
			Mean Corpular Hemoglobin & $92,048$ & $92,048$ & $95,790$ & $96,142$ & $4.4\%$ \\
			Red Blood Cell Count & $70,919$ & $70,919$ & $74,565$ & $78,061$ & $10.1\%$ \\
			Red Blood Cell Distribution Width & $69,583$ & $69,583$ & $73,162$ & $74,427$ & $7.0\%$ \\
			White Blood Cell Count & $55,881$ & $55,881$ & $62,483$ & $65,453$ & $17.1\%$ \\
			Auto Immune Traits & $7,571$ & $7,571$ & $7,774$ & $7,336$ & $-3.1\%$ \\
			Cardiovascular Diseases & $12,531$ & $12,531$ & $13,776$ & $14,859$ & $18.6\%$ \\
			Eczema & $13,099$ & $13,099$ & $13,513$ & $14,683$ & $12.1\%$ \\
			Hypothyroidism & $12,681$ & $12,681$ & $13,043$ & $14,651$ & $15.5\%$ \\
			Respiratory and Ear-nose-throat Diseases & $7,588$ & $7,588$ & $7,750$ & $8,709$ & $14.8\%$ \\
			Type 2 Diabetes & $2,459$ & $2,459$ & $2,524$ & $2,684$ & $9.2\%$ \\
			Age at Menarche & $25,549$ & $25,549$ & $26,519$ & $27,391$ & $7.2\%$ \\
			Age at Menopause & $6,109$ & $6,109$ & $6,211$ & $8,675$ & $42.0\%$ \\
			FEV1-FVC Ratio & $50,529$ & $50,529$ & $55,659$ & $58,503$ & $15.8\%$ \\
			Forced Vital Capacity (FVC) & $33,549$ & $33,549$ & $36,674$ & $38,985$ & $16.2\%$ \\
			Hair Color & $57,608$ & $57,608$ & $58,326$ & $60,391$ & $4.8\%$ \\
			Morning Person & $8,154$ & $8,154$ & $8,681$ & $9,559$ & $17.2\%$ \\
			Neuroticism & $5,513$ & $6,186$ & $6,073$ & $6,955$ & $26.2\%$ \\
			Smoking Status & $6,297$ & $6,623$ & $6,857$ & $8,016$ & $27.3\%$ \\
			Sunburn Occasion & $10,076$ & $10,076$ & $10,150$ & $11,000$ & $9.2\%$ \\
			Systolic Blood Pressure & $46,063$ & $46,063$ & $51,157$ & $54,749$ & $18.9\%$ \\
			Years of Education & $13,927$ & $13,927$ & $15,632$ & $16,933$ & $21.6\%$ \\
			\hline \\[-1.8ex]
	\end{tabular}
	\label{table-rej-2}
\end{table}

\end{document}